\begin{document}
	
\title{\textbf{Spreading of an infectious disease between different locations}\thanks{We acknowledge funding from the Italian Ministry of Education Progetti di Rilevante Interesse Nazionale (PRIN) grant 2015592CTH. We are grateful to the Italian Ministry of Health, Istituto Zooprofilattico Sperimentale dell'Abruzzo e del Molise ``G. Caporale'' and, in particular, to Luigi Possenti and Diana Palma for their help with the data. For their helpful comments we would like to thank  Alberto Alesina (and the participants to his reading group at Bocconi University), Leonardo Boncinelli, Simone D'Alessandro, Jakob Grazzini, Kenan Huremovi\'{c} and  Roberto Patuelli.}}

\date{This version: December 2018 \\
	\href{https://www.dropbox.com/s/9uhze4nbbsis1s7/MPR_spreading_epidemics_locations.pdf?dl=0}{[here you find an updated version]}}

\author[a]{Alessio Muscillo}
\author[b]{Paolo Pin}
\author[a,c]{Tiziano Razzolini}

\affil[a]{Department of Economics and Statistics, University of Siena}
\affil[b]{Department of Decision Sciences, IGIER and BIDSA, Bocconi University}
\affil[c]{IZA and LABOR}

\maketitle

{
\begin{abstract}


The endogenous adaptation of agents, that may adjust their local contact network in response to the risk of being infected, can have the perverse effect of increasing the overall systemic infectiveness of a disease.
We study a dynamical model over two geographically distinct but interacting locations, to better understand theoretically the mechanism at play.
Moreover, we provide empirical motivation from the Italian National Bovine Database, for the period 2006-2013.

\end{abstract}
}

\noindent \textbf{JEL\ classification codes}: C32, D83, I12	

\noindent \textbf{Keywords}:  infectious disease, Italian National Bovine Database, endogenous spreading, exogenous shocks, islands model


\section{Introduction}

Connections between individuals are beneficial because they enable the exchange of goods and resources.
However, they  are also the means through which diseases and shocks may spread in a society, making it vulnerable to hazards and menaces. Due to the advances in virtual and physical communications, understanding this tradeoff has become increasingly more necessary as well as complicated.

We study how the spread of an infection evolves in a population adopting self-protecting behavioral responses which, in turn, affect the evolution of the epidemics. We focus on two interplaying mechanisms: First, how does the contact network influence the evolution of the disease by only allowing contagion via existing contacts? Second, how is the network itself endogenously modified by the behavioral response triggered by the risk perception?

In the context of a simple two-location model, we obtain analytical results from a system of ordinary differential equations. This very stylized cross-location interaction may generate complex dynamics in terms of the co-evolution of the coupled mechanism constituted by the contact network and the infection spread.
The main question concerns how this stylized globalization affects the 2-location systemic resistance with respect to shocks in the infection rates. Is the coupled system more resistant to shocks than 2 separated and autarkic single locations? What happen when shocks simultaneously hit both locations?


We assume that each location has a limited recovery ability from the disease (e.g. limited hospitalization capacity for quarantine) and that sudden outbreaks in the infection rate (also called \textit{infection shocks} hereafter) occur exogenously and abruptly.
After being hit by such an initial infection shock, the evolution of the disease (and the effectiveness of the recovery measures) is observed over time.
Small shocks are better controlled when the two locations are connected together than when they are isolated and autarkic: in fact, infected individuals who (out)flow from the most infected location to the least one, are treated in both locations and this contributes to diluting and reducing the epidemic overall.
On the contrary, a large shock, even if initially concentrated in only one location, may end up infecting completely both locations, thus putting at risk the entire systemic resistance to contagion.

In terms of policy implications, given the characteristics of the disease under study (e.g. contagiousness, type of infection shocks) and given the ease of connection between the locations (e.g. autarky or globalization), the resistance of the whole system to infections depends on the resources allocated for recovery measures. 
As the system becomes more and more globalized by facilitating connections between distant locations, it becomes also more resistant to small infection shocks but, conversely, it also becomes more exposed to large shocks. Moreover, the relative advantage (in terms of systemic resistance) of a globalized world with respect to a system of autarkic or isolated locations becomes higher as the amount of resources dedicated to recovery measures increases, because of complementarities established between the two locations.

\subsection*{Related literature}
\label{sec_literature}

Epidemiological models have been studied for decades, starting from the seminal Kermack-McKendrick compartmental models that go back to the 1920s and 1930s \citep{allen2008mathematical}. In recent years, more attention has been devoted to incorporating agents' behavioral response and awareness to the concurrent evolution of the disease in the population \citep{funk2010modelling, fenichel2011adaptive, poletti2012risk}.
Moreover, because of the facility through which interconnections and interdependencies are established in a globalized world, better models need also to account for different individuals' mobility patterns \citep{brauer2001models, wang2004epidemic, manfredi2013modeling}.

In a literature closer to economics and to the social sciences, some theoretical works deal with strategic vaccination or with the adoption of different defensive mechanisms which may depend on the connections of the individuals \citep{galeotti2013strategic, galeotti2015diffusion, goyal2015interaction}. 
However, with respect to this paper, other works share the same motivation dealing with diseases spreading through trade connections \citep{horan2015managing} or approach a somehow similar problem with different methodologies \citep{reluga2009sis}.

\subsection*{Roadmap}

The paper continues as follows. Firstly, we provide novel and fundamental empirical motivations for our analysis (Section \ref{sec_motivation}).
We then move to our model, with the introductory case of a single location (Section \ref{sec_1island}), the description and the results of the main model (Section \ref{sec_model}) and its comparative statics with respect to exogenous shocks (Section \ref{sec_discussion}).
Section \ref{sec_conclusion} concludes the main part. Additionally, the appendix includes an extension of the empirical exercise done in the motivation section (\ref{sec: econometrics_appendix}), the proofs omitted in Sections \ref{sec_1island}, \ref{sec_model} and \ref{sec_discussion} (\ref{sec: proofs}) and the mathematical analysis of the linear case (\ref{app_linear}) which ends with the approximation of the basins of attraction of the equilibria used for the comparative statics analysis (\ref{app_comparative}).

\section{Motivation for  our exercise}
\label{sec_motivation}

Before going on to the description of the model and of its contribution to the existing literature, we provide two strong motivations for our work.
The first one comes from a novel dataset, while the other is a relevant application.

\subsection*{Livestock trading: infections and long-range connections}

We perform an empirical analysis of trade flows of bovines in the Italian territory using the Italian National Bovine Database (\textit{Anagrafe Nazionale Bovina}). This dataset has been created by the Italian Ministry of Health after the outbreak of the  Bovine Spongiform Encephalopathy (BSE) in accordance with the European Economic Community Council Directive 92/102/EEC of 1992. The Directive imposed to all member states to identify each bovine using ear tags and to follow all its movements, from birth until death, through all holdings (farms, assembly centers, slaughterhouses, markets, staging points, pastures, foreign countries of origin) in the national territory.

For each movement of bovines, we have information on the location (e.g. municipality) of approximately 220,000 origin and destination premises,  90\% of which are farms. The dataset records the exact date of all these movements between 2006-2013 and contains information on the stock of animals in each holding.\footnote{Information on the stock is recorded on the same date of the movement, before any inflows or outflows have occurred. With the latter information and the data on movements we compute the stock of animals at the beginning of each quarter.} Information on trade flows have been merged with the SIMAN database (\textit{Sistema Informatico Malattie Animali}) which registers the diseases occurred in each holding \citep{iannetti2014integrated, calistri2013systems}.
We refer also to \cite{mprs2018import} for more detailed information.

In our analysis, we will focus on outflows from farms in each quarter, from 2007Q1 to 2013Q4.\footnote{Since 2006 was the first year of introduction of the tracking system we start using the data from the first quarter of 2007, when the running-in phase was over.}
We are mainly interested in determining whether the occurrence of a disease in farm $i$ at time $t-1$ affects the distances of trade flows originated from farm $i$ at time $t$.

Figure \ref{fig: histd} shows the histogram of distances for all the flows under analysis. The median value is about 17 kilometers whereas the average value and the 75th percentile are respectively 43 and 41 kilometers. About 10\% of the movements occur between farms in the same municipality, thus producing a  point mass at zero in the distribution of distance.

Since some staging points or assembly centers can be misclassified as farms, we retain in the sample only the holdings with a value of the stock smaller than the 99th percentile, which is equal to 954.\footnote{The 99th percentile is twice the size of the 95th percentile, the triple of the 90th percentile and about 14 times the size of the median. The empirical estimates from the unrestricted sample are qualitatively and quantitatively similar.} Our final sample consists of 117,758 farms originating 1,541,370 trade flows towards other farms. 

\begin{figure}[htb!]
	\centering
	\caption{\textbf{Distances of trade flows}}
	\includegraphics[width=0.65\textwidth]{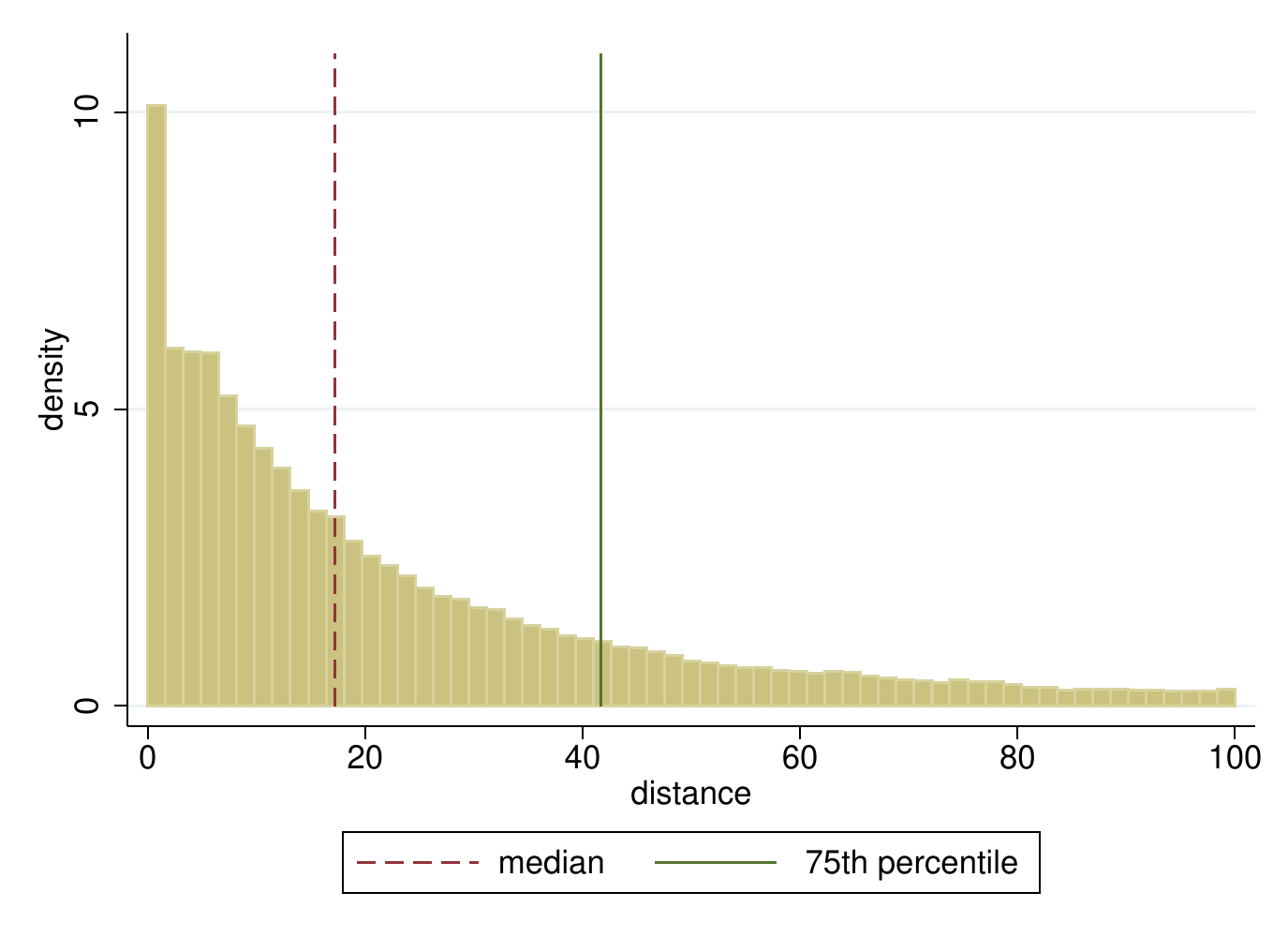}
	\label{fig: histd}
\caption*{\footnotesize{The figure displays the histogram of distances (in kilometers) shorter than 100km of the trade flows occurred in the period 2007-2013. Descriptive statistics of distances are shown in Table \ref{tab: destat}.}}
\end{figure}

We thus estimate different regressions with trade flow distance as the dependent variable using the following specification:
\begin{equation}
\text{Distance}_{it} = \beta_0 + \beta_1\ \text{Positive}_{i,t-1} + \beta_2\ \text{Stock}_{it} +  \text{Region}_g + \theta_t + \alpha_i + \epsilon_{it}
\end{equation}
\begin{equation*}
 t=2007Q1,\dots,2013Q4; \qquad g=1,\dots,20
\end{equation*}

where the dummy $\text{Positive}_{i,t-1}$ is equal to one if in the previous quarter farm $i$ has registered at least one disease; $\text{Stock}_{it}$ measures the number of bovines in farm $i$ at the beginning of quarter $t$. The variables $\text{Region}_g$ are regional dummies and $ \theta_t$ are quarter dummies, while $\alpha_i$ are farm-specific time invariant effects used only in fixed effect regressions.

Table \ref{tab: destat} reports descriptive statistics of the dependent variable and the main regressors. In the 2007-2013 period we observe 265 diseases which represent 0.02\% of the observations (i.e. all movements) used in the empirical analysis. 

\begin{table}[htb!]
\caption{Descriptive statistics}
\label{tab: destat}
\centering
\begin{tabular}{l@{\hspace*{5mm}}ccccc}
\hline\hline \vspace{-4mm}\\
			& Mean & St. Dev. &	Median & Min & Max\\
\hline \vspace{-4mm}\\
Distance 	& 42.80 &  92.18 & 17.08& 0 & 1,291\\
Positive 	& 0.0002& 0.0131&0 &0 &1\\
Stock  		& 119.36 &142.87 &66& 1 & 954 \\
\hline \vspace{-4mm}\\
Observations & 1,541,370&&&&\\
N. farms & 117,758 &&&&\\
\hline\hline
\end{tabular}
\caption*{\footnotesize The sample excludes farms with a value of the stock of bovines greater than 954.}
\end{table}

The empirical results are shown in Table \ref{tab: empres}. Column 1 reports estimates from a Probit model where the dependent variable is set equal to one when the distance is larger than 41 kilometers (i.e the 75th percentile.)
The estimated coefficient of the dummy Positive is statistically significant at 5\% level. The marginal effect is 0.05 which means that presence at $t-1$ of a sick animal increases by 20\%  the probability of sending bovines in quarter $t$ to farms distant more than 41 kilometers.

The estimates from standard OLS regressions are shown in column 2.  The  coefficient of the dummy Positive indicates that a disease in the past increases the distance of trade by about 19 kilometers.
In column 3 we report  the estimated effect from a Tobit to take into account the censored nature of distance which has a point mass at zero. The estimated effect in this case is equal to 19 kilometers. 
Finally, in column 4, we show the estimates from a panel regression to control for farm-specific fixed effects.
The estimated coefficient of the dummy Positive indicates that a registered disease at $t-1$ increases the distance by about 10 kilometers. All estimation techniques provide evidence of a positive and significant effect of past diseases on the current value of distance.

The above-mentioned results are, however, conducted on the sample of farms that exhibit non-zero movements of bovines. We have performed a robustness check to take into account possible selection effects. We have estimated a bivariate Tobit model with sample selection by maximum likelihood estimation. The results show evidence of a weak negative selection: farms that  are more likely to be active tend to send bovines to closer location. Moreover, farms that have experienced a disease in $t-1$ are less likely to be active in quarter $t$. Nevertheless, the estimated effect of the dummy Positive on distance is again equal to 19 kilometers.
Estimation results and the identification strategy adopted are shown in \ref{sec: econometrics_appendix}.

\begin{table}[htb!]
\caption{Empirical Results}
\label{tab: empres}
\centering
\begin{tabular}{l@{\hspace*{5mm}}cccc}
\hline\hline \vspace{-4mm}\\		
& (1) 		& (2) 		& (3) 			& (4) 		\\
& Probit	& OLS 		& Tobit 		& Panel FE 	\\
\hline \vspace{-4mm}\\
$\text{Positive}_{i,t-1}$ 	& 0.1628** 	& 18.7323***& 19.1025***& 10.9276**\\
							&(0.081)	&(5.420)	&(5.587)	&(4.735)\vspace{2mm}\\
$\text{Stock}_{i,t}$ 	&  0.0010***& 0.0821*** & 0.0846***	& 0.0490***\\
						& (0.000) 	& (0.001) 	& (0.001) 	& (0.002)\vspace{2mm}\\
Constant 				& -1.2508***	& 9.3844*** &	7.4312*** 	&	31.9721***\\
						&(0.010) 		&(0.553) 	& (0.570) 		& (0.406)\vspace{2mm}\\
$\sigma$ 				& 	& 	& 90.528*** &\\
       					&  	&  	& ( 0.0528) & \\ 
\hline \vspace{-4mm}\\			
\small{$\partial\,\text{Pr}(Y_i =1) /\partial\,\text{Positive}$}  & 0.0524**&&&\\
  & (0.0271)&&&\\
\hline \vspace{-4mm}\\
Observations	& 1,541,370 &1,541,370 	&	1,541,370 &	1,541,370\\
Log likelihood 	&-838,742.52&			& -8,819,135.6 &\\
Adj. R-squared	& 			& 0.0845 	&		&0.3927\\
\hline\hline
\end{tabular}
\caption*{\footnotesize All regressions control for time dummies. Estimations in columns 1,2 and 3 control for regional fixed effects. Panel fixed effect estimation in column 4 controls for farm-specific effects. Standard errors clustered at the farm level are shown in parenthesis. Asterisks mean: *** significant at 1\%, ** significant at 5\%,* significant at 10\%.}
\end{table}

\subsection*{The 2014 Ebola outbreak}

The theoretical study of infection dynamics when the (endogenous) behavior of patients increases infections has potentially enormous applications.
Such a mechanism has been at play (and possibly at the origin of) some disastrous epidemic events, like the complex case of the \emph{Zaire ebolavirus} epidemic that affected West Africa since approximately December 2013 (see Figure \ref{fig:map}).\footnote{See \cite{chowell2014transmission} for a detailed review; see also \cite{thomas2015economicebola} and the website of the World Bank for some estimates of the damages done to the economies of some African countries: 
\url{http://www.worldbank.org/en/region/afr/publication/ebola-economic-analysis-ebola-long-term-economic-impact-could-be-devastating} and also \url{http://www.worldbank.org/en/region/afr/publication/the-economic-impact-of-the-2014-ebola-epidemic-short-and-medium-term-estimates-for-west-africa}.}
A particularly dangerous situation can occur when contagious individuals are expelled from their villages and are able to reach big towns, or even other countries; or if they voluntary travel to other countries when they are sick, to avoid social stigma or to obtain a better treatment.

The World Health Organization (WHO) reports that in 2014, during the Ebola outbreak in West Africa:\footnote{\url{http://www.who.int/csr/disease/ebola/one-year-report/factors/en/}}
\begin{quotation}
``\dots as the situation in one country began to improve, it attracted patients from neighboring countries seeking unoccupied treatment beds, thus reigniting transmission chains. In other words, as long as one country experienced intense transmission other countries remained at risk, no matter how strong their own response measures had been.''
\end{quotation}
Directly quoted from the web site of the WHO:\footnote{\textit{Ibidem.}}
\begin{quote}
``Countries in equatorial Africa have experienced Ebola outbreaks for nearly four decades. 
[...] In those outbreaks, geography aided containment. 
[...] In West Africa [which had never experienced an Ebola outbreak], entire villages have been abandoned after community-wide spread killed or infected many residents and fear caused others to flee.
[...] West Africa is characterized by a high degree of population movement across exceptionally porous borders. Recent studies estimate that population mobility in these countries is seven times higher than elsewhere in the world. 
[...] Population mobility created two significant impediments to control. 
[...] [C]ross-border contact tracing is difficult. Populations readily cross porous borders but outbreak responders do not. 

The importation of Ebola into Lagos, Nigeria on 20 July and Dallas, Texas on 30 September [2014] marked the first times that the virus entered a new country via air travelers. These events theoretically placed every city with an international airport at risk of an imported case. The imported cases, which provoked intense media coverage and public anxiety, brought home the reality that all countries are at some degree of risk as long as intense virus transmission is occurring anywhere in the world –- especially given the radically increased interdependence and interconnectedness that characterize this century.''
\end{quote}

\begin{figure}[htb!]
    \centering
    \includegraphics[width=0.75\textwidth]{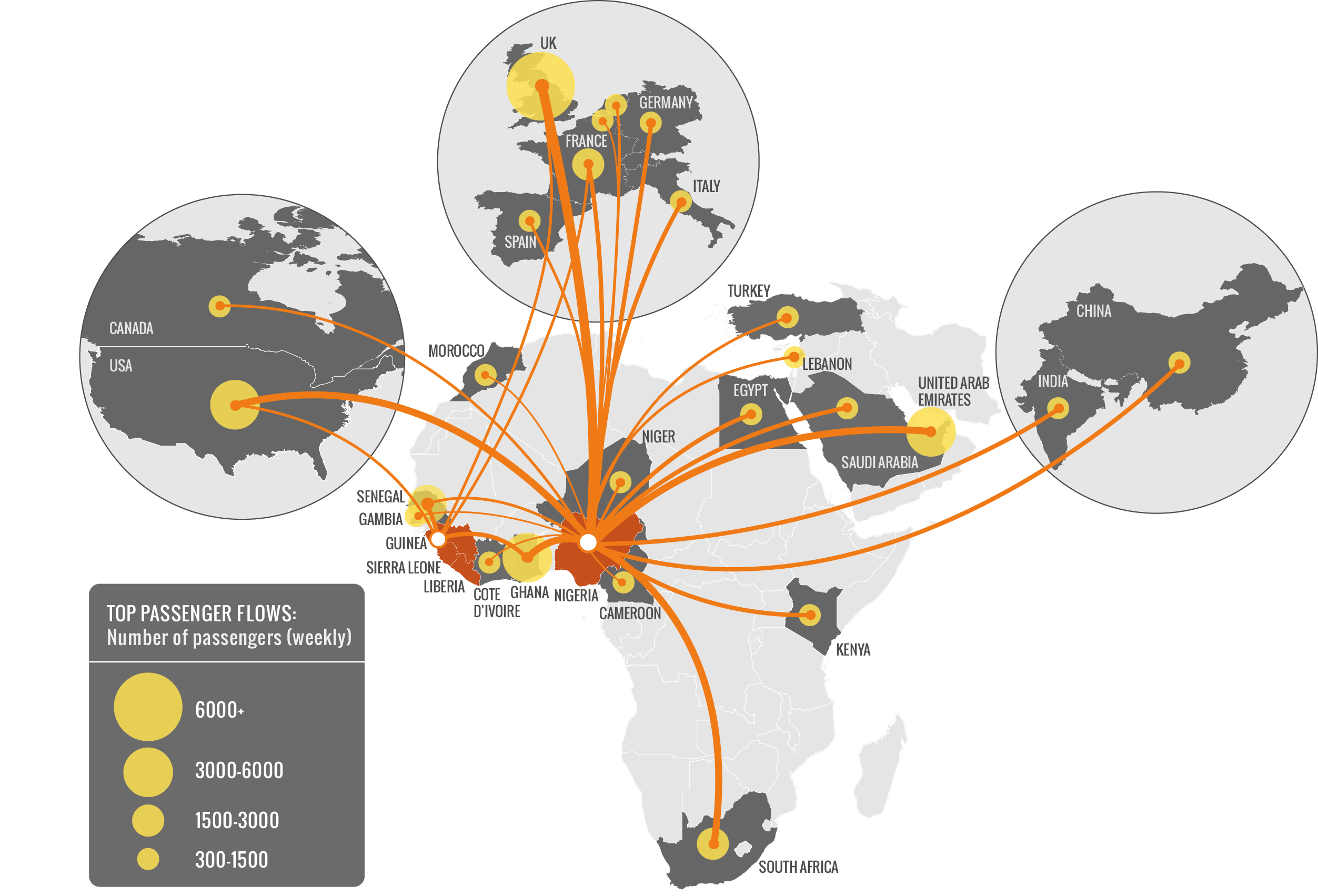}
    \caption{\small Air traffic connections from West African countries to the rest of the world. Source \cite{gomes2014assessing}. See also \cite{halloran2014ebola}.}
    \label{fig:map}
\end{figure}

\section{The single-location model: the building block}\label{single-locationmodel}
\label{sec_1island}

As a warm-up exercise, in this section we develop the building block of the model. We define a system constituted by a single location and describe how the infection evolves in it, as time passes. The dynamics is kept explicitly abstract and simple on purpose, and this has one main reason: the so-defined 1-location system is able to recover from small shocks, but unable to do so in case of large shocks (to be defined shortly). This, in turn, will allow us in the next Section \ref{sec_model} to consider two such systems interacting with each other and, then, evaluate what will be the effects on this whole 2-location system, in terms of resistance to shocks.

Consider a population of agents living in one location and susceptible to the infection from a transmittable disease, which can spread through personal contacts with other agents. Following the motivations from Section \ref{sec_motivation}, the intuitive idea is that agents trade with each other and meet in pair. These meetings, however, are also the mean through which the disease may spread.

Let $x(t)$ denote the fraction of infected individuals at time $t$. The evolution of this fraction is ruled by the following differential equation, used for example in ecological economics as a development from the classic Bass model \citep{bass1969model, dalessandro2007nonlinear}:
\begin{equation}\label{SI_cubic}
\frac{\de }{\de t} x(t) = \nu x(t)(1-x(t))(x(t)-q),
\end{equation}
where $\nu \in (0,1)$ is a parameter representing the \textit{contagiousness} of the disease and $q \in (0,1)$ is a parameter which measures the capacity of the system to control the disease, which we may call \textit{quarantine}. 
More specifically, we think of $q$ as the quantity of resources allocated to hospitalize infected individuals as well as to other disease-control measures. We consider these resources fixed and exogenous, meaning that they can change over a longer time-scale with respect to that of the evolution of the disease.

\begin{remark}
Equation (\ref{SI_cubic}) can be seen as modified susceptible-infected model: we take the probability that an infected individual meets a susceptible one, i.e. $x(1-x)$, and that this meeting results in a new infection with probability $\nu$. We then multiply this by a factor $(x-q)$ which modifies the sign of the flow of infected depending on whether the fraction of infectives exceeds or not the quarantine threshold $q$.
\end{remark}

\begin{proposition}
\label{prop_equilibria_1location}
	The dynamical system \emph{(\ref{SI_cubic})} has 3 critical points:
	\begin{itemize}
		\item	the asymptotically stable, \textit{disease-free equilibrium} $x=0$;
		\item	an unstable equilibrium $x=q$;
		\item	the asymptotically stable, \emph{endemic equilibrium} $x=1$.
	\end{itemize}
Consequently, the interval $[0,q)\subset \R$ is the basin of attraction of $x=0$, while $(q,1]$ is the basin of attraction of $x=1$.
\end{proposition}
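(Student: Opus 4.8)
The plan is to treat (\ref{SI_cubic}) as a one-dimensional autonomous system $\dot{x}=f(x)$ with $f(x)=\nu\,x(1-x)(x-q)$ and to run a standard phase-line analysis. First I would locate the critical points by solving $f(x)=0$; since $\nu\in(0,1)$ is nonzero, the zeros are exactly $x=0$, $x=q$, and $x=1$, and these are three \emph{distinct} points precisely because $0<q<1$.

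To classify each equilibrium I would linearize, invoking the standard fact that for a scalar autonomous ODE an equilibrium $x^*$ is asymptotically stable when $f'(x^*)<0$ and unstable when $f'(x^*)>0$. A short computation gives $f'(0)=-\nu q<0$, $f'(q)=\nu q(1-q)>0$, and $f'(1)=-\nu(1-q)<0$, which yields exactly the claimed stability: $x=0$ and $x=1$ are asymptotically stable while $x=q$ is unstable. All three values are nonzero precisely because $q\in(0,1)$, so linearization is conclusive and no higher-order analysis is required.

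For the basins of attraction I would pass from local to global behaviour through the sign of $f$ on the two subintervals. On $(0,q)$ the three factors have signs $+,+,-$, so $f<0$; on $(q,1)$ they are $+,+,+$, so $f>0$. Hence any solution starting in $(0,q)$ is strictly decreasing and any solution starting in $(q,1)$ is strictly increasing. Because $f$ vanishes at $0$, $q$, $1$, these constant functions are themselves solutions, so uniqueness of solutions forbids any trajectory from crossing an equilibrium; consequently an orbit launched in $[0,q)$ remains confined to $[0,q)$ and one launched in $(q,1]$ remains in $(q,1]$, so in particular every such orbit stays bounded and monotone.

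Finally I would close the argument with the monotone-convergence principle for scalar autonomous ODEs: a bounded monotone solution converges to a limit that must itself be a critical point (by continuity of $f$). A decreasing orbit confined to $[0,q)$ can only limit to $0$, and an increasing orbit confined to $(q,1]$ can only limit to $1$, which identifies $[0,q)$ as the basin of $x=0$ and $(q,1]$ as the basin of $x=1$. The only step demanding any care — and the one I would regard as the main (mild) obstacle — is this last global claim: one must rule out finite-time escape and justify that the monotone limit is a genuine equilibrium rather than a mere accumulation point, both of which follow from invariance of the equilibria under uniqueness together with the continuity of $f$.
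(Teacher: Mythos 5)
Your proposal is correct and follows essentially the same route as the paper, whose proof is simply the observation that the cubic has roots $0$, $q$, $1$ and is negative on $(0,q)$ and positive on $(q,1)$; your sign analysis reproduces exactly this, and your added linearization ($f'(0)=-\nu q$, $f'(q)=\nu q(1-q)$, $f'(1)=-\nu(1-q)$) and monotone-convergence argument merely make explicit the standard phase-line reasoning the paper leaves implicit.
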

\begin{proof}
	See \ref{sec: proofs}.
\end{proof}

%


\begin{remark}
	In ecology, this dynamics sometimes describes the evolution of a species over time \citep{dalessandro2007nonlinear}. In this context, the analogy is that the species we are considering is a bacterium causing the infective disease. The threshold $q$ represents the \textit{critical mass} of infections that the species has to reach and exceed in order to survive: when there are not enough infected individuals, the species cannot proliferate and propagate any more and, eventually, the epidemic dies out. 
	Lastly, it is worth noticing that underlying assumption in the susceptible-infected model is the so-called \emph{homogeneous mixing}: meetings among individuals are random, according to their relative proportion in the whole population. This is an assumption that we maintain here.
\end{remark}

\paragraph{Resistance to shocks \& policy}
We define a \emph{shock} as follows: suppose that at time $t=0$ there is a sudden and exogenous variation in the infection rate such that $x(0) = x_0 \in [0,1]$. This initial fraction $x_0$ of infected is what we will call shock.

If the shock is $x_0 < q$, i.e. below the threshold, then the system will (asymptotically) return to the disease-free equilibrium, whereas if the shock is larger than $q$, then the dynamic will converge toward 0, where the whole population is infected. If the shocks are assumed uniformly distributed over $[0,1]$, then $q$ quantifies the ability of the system to recover from a shock.

The policy implication here is then straightforward: the more resources can be allocated to control the disease (i.e. the larger $q$ is), the more the system will able to recover from larger shocks in the infection rate (i.e. the larger will be the basin of attraction of 0).


\begin{figure}[htbp]
	\centering
	\caption{\textbf{Single-location dynamics}}
	\includegraphics[width=.4\textwidth]{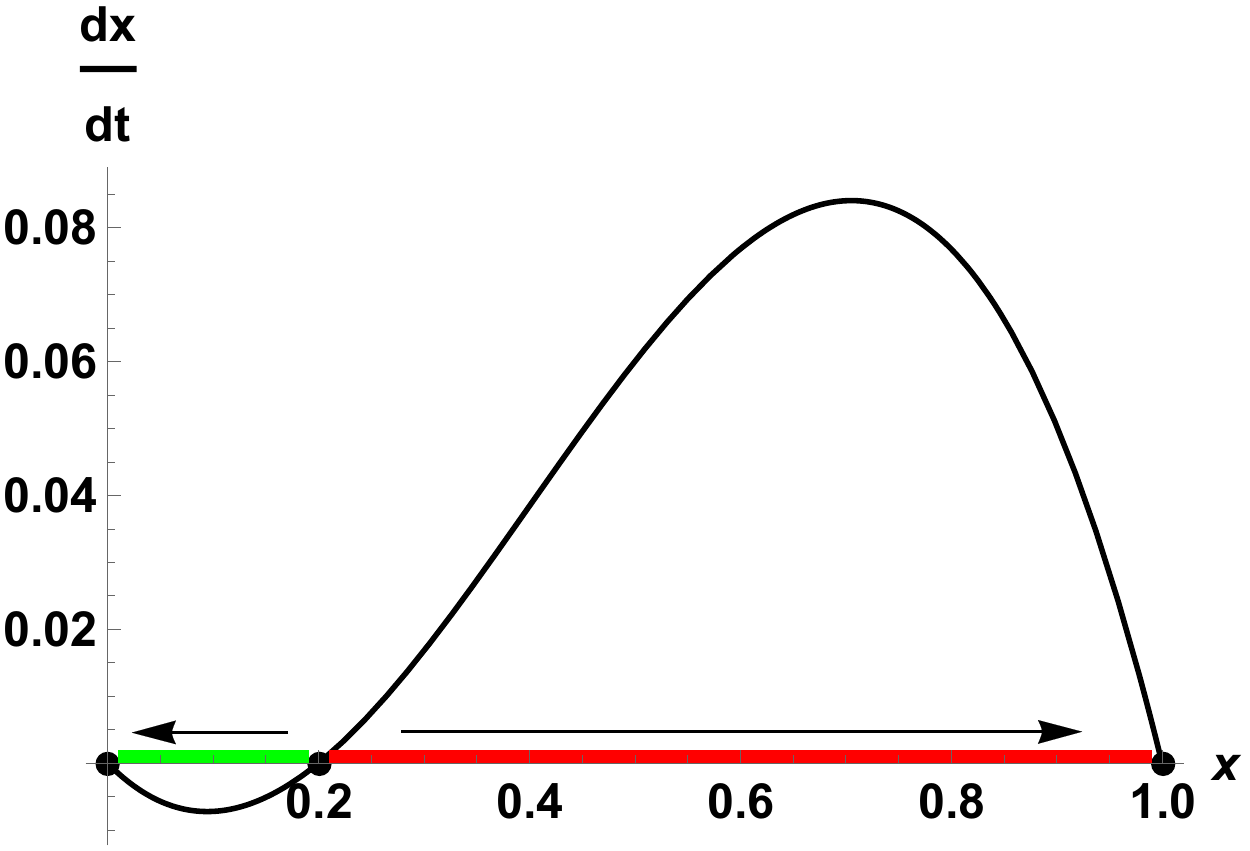}
	\vspace{2mm}
	\caption*{\small Dynamics defined by equation (\ref{SI_cubic}), where the parameters are set at $q=0.2$ and $\nu=0.8$. The curve represents the right-hand side of equation (\ref{SI_cubic}) and the dots are the critical points $0$, $q$ and $1$. The unstable equilibrium $x=q$ acts as a threshold separating the basins of attraction of the two asymptotically stable equilibria $x=0$ (in green) and $x=1$ (in red). Small exogenous shocks, i.e. below $q$, are absorbed, whereas shocks larger than $q$ lead to a fully infected system.}
	\label{SIcubic}
\end{figure}

\section{The 2-location model}
\label{sec_model}

Starting from the conclusion of the previous section, we now extend our analysis: trades and meetings will take place both within and across two geographically distinct locations (also called islands or countries hereafter) and, consequently, the same will happen to the spread of the disease.
To express the incentives of economic agents, who can choose whether to interact with other agents within their own location or in the other location, we stick to the interpretation of trading and, therefore, speak also of prices.
We think that the economic intuition remains the same also in other situations where prices are less explicit (as for the Ebola example of Section \ref{sec_motivation}), since things could still be modeled in terms of higher and heterogeneous costs for interaction with distant locations.

Specifically, we consider 2 locations both populated by interacting agents, e.g. farmers who are trading cattle. Agents benefit from interacting with each other but, since there may be a (latent) disease spreading, this potential benefit decreases as the infection prevalence increases. This accounts for the risk of becoming infected and the reduction in performance that diseased cattle experience (e.g. slower growth, death).
In the attempt to avoid contagion and risks, the agents of one location may be willing to interact with other agents in the other location, even if to do so they have to pay a higher cost related to this long-range interaction (e.g. export costs, trade barriers).

We restrict our attention to two identical and symmetric locations, where agents are homogeneous and identical in all aspects but in the export cost. 
In particular, different agents of the same location are assigned different costs to export to the other location and, intuitively, this may be reflecting different geographical proximity, facility in the contacts with a foreign country, etc.
One key aspect is that using identical locations and identical agents is a normalization that can help guarantee that any variation in the fragility of the coupled system is due to the cross-country connection structure rather than to differences in other characteristics.

\subsection{Specification}

Let $A$ and $B$ denote two populations of agents living for an infinite time horizon and let $a$ and $b$ denote one of their generic agent, respectively.

\paragraph{Benefits from interaction and costs}
Agents benefit from trading/interacting with other agents and, in particular, any agent $a\in A$ receives a gross utility of $p_A$, when trading in her home country $A$, and a possibly higher gross utility $p_B$, when instead exporting to the other country $B$. 
Benefits are assumed to be equal across agents and to be decreasing functions of the current infection prevalence rates $x_A(t), x_B(t) \in [0,1]$.\footnote{These can be interpreted as prices at which trade happens.}
This last assumption reflects the fact that trading becomes riskier as contagion spreads. Formally:
$$
p_A = p_A(x_A(t),x_B(t)), \qquad p_B = p_B(x_A(t),x_B(t)),
$$
for any time $t\in \R$. 

Any generic agent $a \in A$ chooses between two (mutually exclusive) actions, respectively labeled as $A$ and $B$, which are either ``trading in her home country'' or ``exporting to the other country''.\footnote{According to our notation, then, agents $a\in A$ can export to $B$ and, conversely, agents $b\in B$ to $A$.}
However, to export to the other country each agent $a\in A$ has to pay an exporting cost $c_a>0$, which is assumed to be randomly distributed across agents according to a cumulative distribution function $F_A$. 
The cost of trading in the home country, instead, is normalized to 0.
Depending on the chosen action $A$ or $B$, agent $a$'s utility at time $t$ is then given by:
$$
u_a(t) =
\left\{
\begin{array}{ll}
  p_A(t), & \mbox{if trading in } A,  \\
  p_B(t) - c_a, & \mbox{if exporting to } B,
\end{array}
\right.
$$
so that agent $a \in A$ decides to export to $B$ at time $t$ if and only if
$$
p_A < p_B - c_a.
$$
Symmetrically, analogous definitions and notations hold for all agents $b \in B$. The same happens in the rest of this section.


\begin{remark}
Notice that in our formulation agents make decisions only based on the prices $p_A$ and $p_B$ that they are able to observe in the two markets. 
In particular, they are not able to observe neither their status (as susceptible or infected) nor the status of the others.
In the case of the cattle trade mentioned in the motivation section, this assumption is economically justified as follows. Movements are stressful for bovines, which can result in the development of diseases and reduced growth (or even death) of the animals. This also implies that latent diseases can be masked as stress and go undetected. For these reasons, farmers are compelled to report to local health institutions any situation that may be related to a disease -- they would otherwise incur in high fines.
\end{remark}

Since $c_a \sim F_A$, the above expression implies that the fraction of $A$'s agents willing to export to $B$ at time $t$ is given by
\begin{equation}
\begin{aligned}
\label{eq: willing_to_export}
\mathbb P\left\{a\in A: p_A(t) < p_B(t)-c_a\right\} & = \mathbb P\left\{a\in A: c_a < p_B(t)-p_A(t)\right\} \\
& {} = F_A\left\{p_B(t)-p_A(t)\right\},
\end{aligned}
\end{equation}
or, equivalently, that the fraction of $A$'s agents trading in $A$ and \textit{not} exporting is $1-F_A\left\{p_B(t)-p_A(t)\right\}$.

\paragraph{Cross-country meetings and flows of infected individuals}
Let us proceed with the analysis: of the fraction of agents that are exporting from $A$ to $B$, a subfraction of them given by $x_A\cdot F_A\left\{p_B-p_A\right\}$ is of currently infected agents. Consequently, when these exporting and infected agents meet the fraction of those susceptible in $B$ that remain in $B$ for trade, which is $(1-x_B)(1-F_B\{p_A-p_B\})$, this will give rise to an additional source of infected individuals for country $B$:\footnote{For ease of notation, throughout we will write $F_A\{p_B-p_A\} = F_A$ and $F_B\{p_A-p_B\}=F_B$.}
$$
\underbrace{x_A \cdot F_A}_{A\text{'s infected exporting to } B} \cdot \underbrace{(1-x_B)\cdot(1-F_B)}_{B\text{'s susceptible remaining in }B}.
$$
Still another source of infection for $B$ comes from the meetings between $B$'s infected individuals remaining in $B$ with $A$'s susceptible exporting to $B$:
$$
\underbrace{x_B\cdot(1-F_B)}_{B\text{'s infected remaining}} \cdot \underbrace{(1-x_A)\cdot F_A}_{A\text{'s susceptible exporting to }B}.
$$
However, this additional infective activity due to cross-country interactions is somehow compensated with a reduction in the home country. In particular, now $B$'s within-country spreading cannot follow the single-location equation (\ref{SI_cubic}) given in Section \ref{single-locationmodel}: not only because the meetings only happen between $B$'s susceptible and infected agents that are not exporting, but also because we have to subtract the fraction of $B$'s infected agents that are exporting, as an outflow.
$$
\underbrace{\nu_B x_B(1-F_B)(1-x_B)(1-F_B)(x_B-q)}_{\text{meetings among } B\text{'s remaining agents resulting in infections}} - \underbrace{x_B F_B}_{\text{outflow of infected}},
$$
where $\nu_B \in (0,1)$ is the contagiousness parameter for $B$. Analogous reasonings hold symmetrically for $A$.

By putting all these elements together, we can build a system of coupled differential equations ruling the evolution over time of the infection rates in the two countries. The first line of each equation accounts for the possibly reduced within-country epidemic spreading, whereas the second line accounts for the additional inflow of infection due to cross-country interactions just described above:\footnote{For ease of notation, we omit the time $t$. However, it is worth remembering that $F_A$ and $F_B$ depend on $p_A$ and $p_B$ which, in turn, depend on $x_A(t)$ and $x_B(t)$.}
\begin{equation}\label{ODE}
\left\{
\begin{aligned}
\frac{\de}{\de t}x_A = {} & \nu_A \Big[ x_A(1-F_A)(1-x_A)(1-F_A)(x_A-q_A) +{}\\ 
& {} + x_A(1-F_A)(1-x_B)F_B + (1-x_A)(1-F_A)x_B F_B \Big] - x_A F_A \\
\frac{\de}{\de t}x_B = {} & \nu_B \Big[ x_B(1-F_B)(1-x_B)(1-F_B)(x_B-q_B) +{}\\ 
& {} + x_B(1-F_B)(1-x_A)F_A + (1-x_B)(1-F_B)x_A F_A \Big] - x_B F_B,
\end{aligned}
\right.
\end{equation}
where $\nu_A,\nu_B \in (0,1)$ and $q_A,q_B \in (0,1)$ are the contagiousness and quarantine parameters respectively of location $A$ and $B$. The system can be algebraically rearranged as follows:
\begin{equation}
\left\{
\begin{aligned}
\frac{\de}{\de t}x_A = {} & \nu_A (1-F_A) \Big[ x_A(1-x_A)(x_A-q_A)(1-F_A) + (x_A+x_B-2x_Ax_B)F_B \Big] \\
& {} - x_A F_A\\
\frac{\de}{\de t}x_B = {} & \nu_B (1-F_B) \Big[ x_B(1-x_B)(x_B-q_B)(1-F_B) + (x_A+x_B-2x_Ax_B)F_A \Big] \\ 
& {} - x_B F_B.
\end{aligned}
\right.
\end{equation}

\begin{remark}
	If cross-country export is not allowed, i.e. when $F_A = F_B = 0$, then system (\ref{ODE}) is reduced to two uncoupled equations, corresponding to two single-location models of the form of equation (\ref{SI_cubic}), both evolving separately.
\end{remark}

\begin{remark}
In this model, export at any instant only occurs in one direction at time, either from $A$ to $B$ or vice versa. Indeed, suppose that $p_A(t) < p_B(t)$ at a certain time $t\in \R$. Since $F_A$ and $F_B$ are cumulative distributions which are positive only for positive costs, then in such a case $F_B(p_A(t) - p_B(t)) = 0$ while $F_A(p_A(t) - p_B(t)) > 0$. So, there is an outflow of infection in the first equation for $x_A$ and an inflow in the second for $x_B$. However, as the following analysis will show, the infection rates $x_A(t)$ and $x_B(t)$ (as well as $p_A(t),p_B(t)$) are not necessarily monotone functions of time.
\end{remark}

In the following, we will make the assumption that $F_A = 1$ when $x_A = 1$.
This is intuitive: whenever in $A$ the rate of infection is the maximum, i.e. $x_A = 1$, then all $A$'s agents would be facing the minimum home benefit $p_A$ and thus be willing to export, so $F_A = 1$.

\begin{proposition}
\label{prop_system_well_defined}
System \emph{(\ref{SIcubic})} is well defined in the unit square describing any $(x_A,x_B)\in [0,1]^2$.
\end{proposition}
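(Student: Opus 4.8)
The claim to establish is that any trajectory of the two-location system (\ref{ODE}) that starts in $[0,1]^2$ stays there for all future time, so that the coordinates $x_A,x_B$ retain their meaning as infection fractions; this forward-invariance is exactly what makes the model ``well defined.'' The plan is to invoke Nagumo's invariance theorem: a compact convex region is forward-invariant provided the vector field never points strictly outward along its boundary, i.e. its outward-normal component is non-positive at every boundary point. For the unit square this reduces to controlling the sign of $\dot x_A$ on the two vertical edges $x_A\in\{0,1\}$ and the sign of $\dot x_B$ on the two horizontal edges $x_B\in\{0,1\}$.

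First I would treat the edge $x_A=0$. Substituting $x_A=0$ into the first equation of (\ref{ODE}) annihilates every summand carrying a factor $x_A$ (the within-country cubic term, the second cross-country term, and the outflow $-x_A F_A$), leaving only $\dot x_A=\nu_A(1-F_A)x_B F_B$. Since $\nu_A\in(0,1)$, $x_B\in[0,1]$ and $F_A,F_B\in[0,1]$, this is non-negative, so the flow points back into the square. For the edge $x_A=1$ I would use the standing assumption $F_A=1$ when $x_A=1$: then $1-F_A=0$ kills the entire bracketed expression (each of its three summands carries a factor $1-F_A$), and the equation collapses to $\dot x_A=-x_A F_A=-1<0$, again inward-pointing. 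By the symmetry of the construction between $A$ and $B$ (using $F_B=1$ when $x_B=1$ for the last edge), the identical computations give $\dot x_B\ge 0$ on $x_B=0$ and $\dot x_B=-1<0$ on $x_B=1$.

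Having checked the normal component on all four edges, I would conclude forward-invariance. The four corners need no separate argument: at each corner the two edge conditions meeting there already forbid either coordinate from rising above $1$ or dropping below $0$, so Nagumo's tangent-cone condition holds at every boundary point. Before applying the theorem I would also record that, under the standing regularity of $F_A,F_B$, the right-hand side is continuous (indeed locally Lipschitz) on $[0,1]^2$, yielding local existence and uniqueness; combined with invariance inside a compact set, this upgrades to a unique solution defined for all $t\ge 0$.

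The delicate point is precisely the regularity of $F_A$ and $F_B$ as functions of the state. Because $F_A=F_A\{p_B-p_A\}$ and $F_B=F_B\{p_A-p_B\}$ with $p_A,p_B$ depending on $(x_A,x_B)$, and because at most one of the two is non-zero at any instant (the one-directional-export remark), I must verify that the vector field develops no genuine discontinuity along the switching locus $p_A=p_B$ that would obstruct uniqueness. Assuming the cost distributions carry no atom at $0$ (so $F_A\{0\}=F_B\{0\}=0$) and that $p_A,p_B$ are continuous in $(x_A,x_B)$, both $F$'s vanish continuously as $p_A-p_B\to 0$, and the field remains continuous across the switch. This is the step I expect to absorb most of the care; the boundary sign computations themselves are routine once the $x_A=1$ normalization is invoked.
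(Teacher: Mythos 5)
Your proof is correct and follows essentially the same route as the paper's: check that the vector field points inward on each edge of the unit square, using $\dot x_A=\nu_A(1-F_A)x_BF_B\ge 0$ on $x_A=0$ and the normalization $F_A=1$ at $x_A=1$ to get $\dot x_A=-1<0$, then conclude by symmetry. The additional scaffolding you supply (Nagumo's theorem, continuity of the field across the switching locus $p_A=p_B$) is welcome extra rigor but does not change the argument.
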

\begin{proof}
See \ref{sec: proofs}.
	
\end{proof}

\subsection{Identical locations, linear utility \& uniform cost}\label{simplestexample}

To keep the analysis tractable, we restrict our model to a linear specification of system (\ref{ODE}):
\begin{itemize}
\item   the two locations $A$ and $B$ are assumed to be identical, from the point of view of the epidemic parameters, so $\nu_A = \nu_B = \nu \in (0,1)$ and $q_A=q_B=q\in (0,1)$;
\item   the agents' exporting costs $c_a>0$, for $a\in A$, are uniformly distributed over the interval $[0,1]$ (analogously for $b \in B$), so that the cumulative distributions are identical and of the form $F_A = F_B = \mathcal U(0,1)$:
    $$
    F_A(c) = F_B(c) =
    \begin{cases}
    0,              &   \text{ for } c \leq 0  \\
    c,              &   \text{ for } c \in [0,1]  \\
    1,              &   \text{ for } c \geq 1.
    \end{cases}
    $$
    In particular, the maximum and minimum cost are respectively 1 and 0.
\item   The gross utilities from trading, $p_A$ and $p_B$, are assumed to depend linearly on the infection rate of the own location:
    $$
    p_A(x_A(t),x_B(t)) := 1 - x_A(t), \qquad p_A(x_A(t),x_B(t)) := 1 - x_B(t).
    $$
    Then, maximum and minimum gross utility attainable are thus normalized to 1 and 0, respectively.
\end{itemize}
With these assumptions in place, equation (\ref{eq: willing_to_export}) becomes:\footnote{Remember that $x_A,x_B \in [0,1]$.}
$$
\begin{aligned}
F_A\{p_B - p_A\} & = F_A\{x_A-x_B\} = 
\begin{cases}
0, 			&	\text{if } x_A-x_B < 0 		\\
x_A-x_B,	&	\text{if } 0\leq x_A-x_B \leq 1	\\
1,			&	\text{if } 1 < x_A-x_B
\end{cases} \\
& = \max\{0,x_A-x_B\},
\end{aligned}
$$
and, analogously, $F_B = \max\{0, x_B-x_A\}$. We can then rewrite system (\ref{ODE}) as follows:
\begin{equation}\label{ODEmax}
\left\{
\begin{aligned}
\frac{\de}{\de t}x_A = {} & \nu (1-\max\{0,x_A-x_B\}) \Big[ x_A(1-x_A)(x_A-q)(1-\max\{0,x_A-x_B\}) \\ 
& {}+(x_A+x_B-2x_Ax_B)\max\{0,x_B-x_A\} \Big] - x_A \max\{0,x_A-x_B\}\\
\frac{\de}{\de t}x_B = {} & \nu (1-\max\{0,x_B-x_A\}) \Big[ x_B(1-x_B)(x_B-q)(1-\max\{0,x_B-x_A\}) \\
& {} + (x_A+x_B-2x_Ax_B)\max\{0,x_A-x_B\} \Big] - x_B \max\{0,x_B-x_A\}.
\end{aligned}
\right.
\end{equation}

In \ref{app_linear} we derive the properties of this system, which can be summarized as follows.
The system is well defined in the unit square $[0,1]^2 \subset \R^2$, which is invariant under its dynamics, and it is symmetric with respect to the diagonal in $\R^2$ (Propositions \ref{proposition_system_well_defined} and \ref{proposition_unit_square_invariant}).
This system has three equilibria (Proposition \ref{prop_app2}):
\begin{itemize}
	\item	$(x_A,x_B)=(0,0)$ and $(1,1)$, which are asymptotically stable states;
	\item	$(x_A,x_B)=(q,q)$, which is an unstable saddle point.
\end{itemize}
What becomes interesting is to study the basins of attractions of the two stable equilibria, and to characterize the basins' border, which we call \emph{separatrix} $\mathcal C$.
As shown in Figure \ref{basinattraction}, depending on the parameters $\nu$ and $q$, as time $t$ passes, the solution enters the unit square either crossing its border along the segment $[q,1]\times \{0\}$ or along $\{1\} \times [0,q]$ and, eventually, converges toward $\{q,q\}$ as $t\rightarrow \infty$ (Proposition \ref{proposition_separatrix}).

\begin{figure}[tbp]
	\centering
	\caption{\textbf{Basins of attraction of the disease-free and fully-endemic equilibria, $(0,0)$ and $(1,1)$, for changing epidemic parameters $\nu,q$.}}
	\begin{minipage}[t]{.49\textwidth}
		\centering\setlength{\captionmargin}{0pt}
		\includegraphics[width=\textwidth]{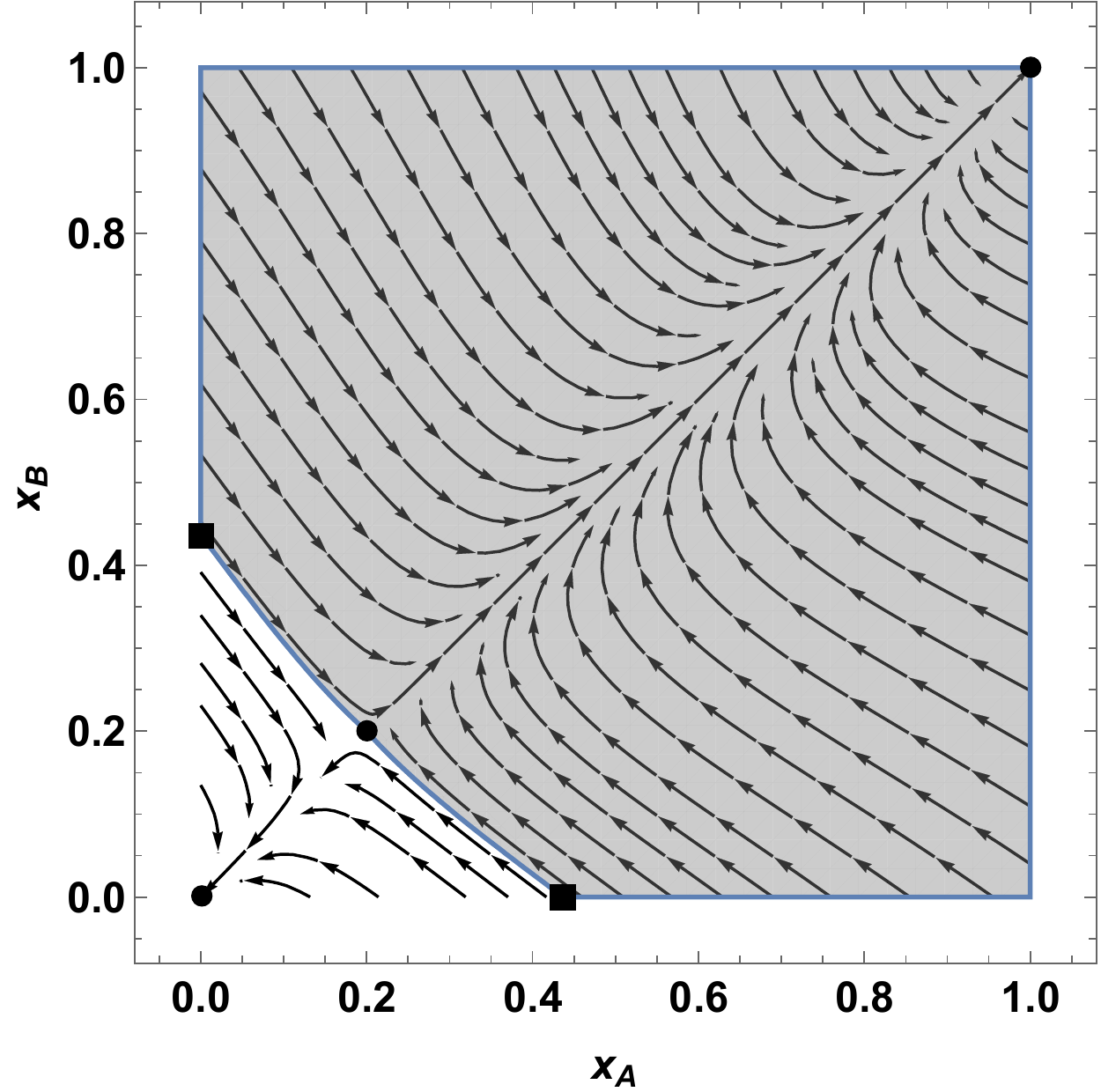}
		\caption*{\small Parameters $\nu=0.7$, $q=0.2$.}
	\end{minipage}
	\hfill
	\begin{minipage}[t]{.49\textwidth}
		\includegraphics[width=\textwidth]{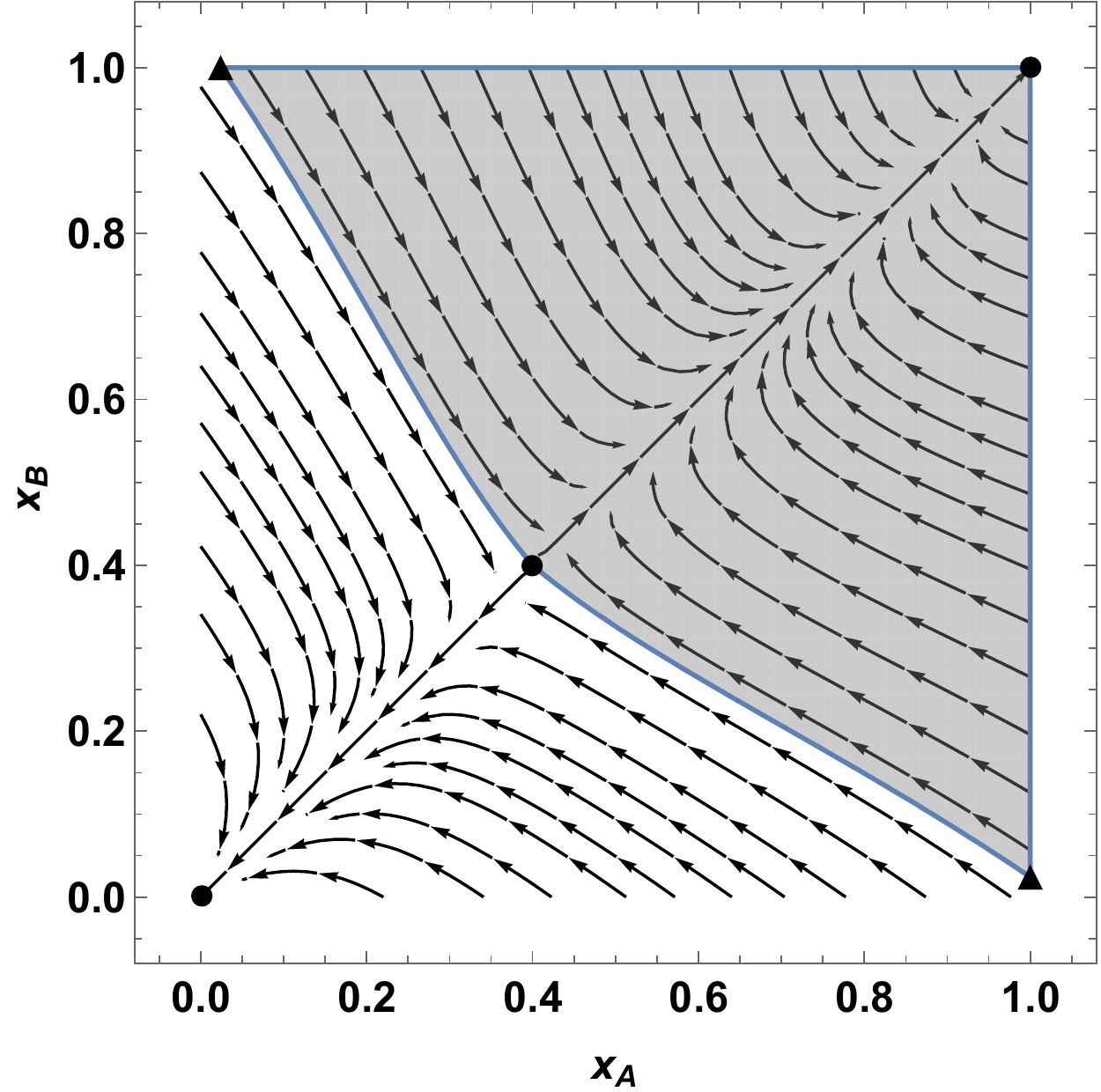}
		\caption*{\small Parameters $\nu=0.7$, $q=0.4$.}
	\end{minipage}
	\hfill
	\begin{minipage}[t]{.49\textwidth}
		\centering\setlength{\captionmargin}{0pt}
		\includegraphics[width=\textwidth]{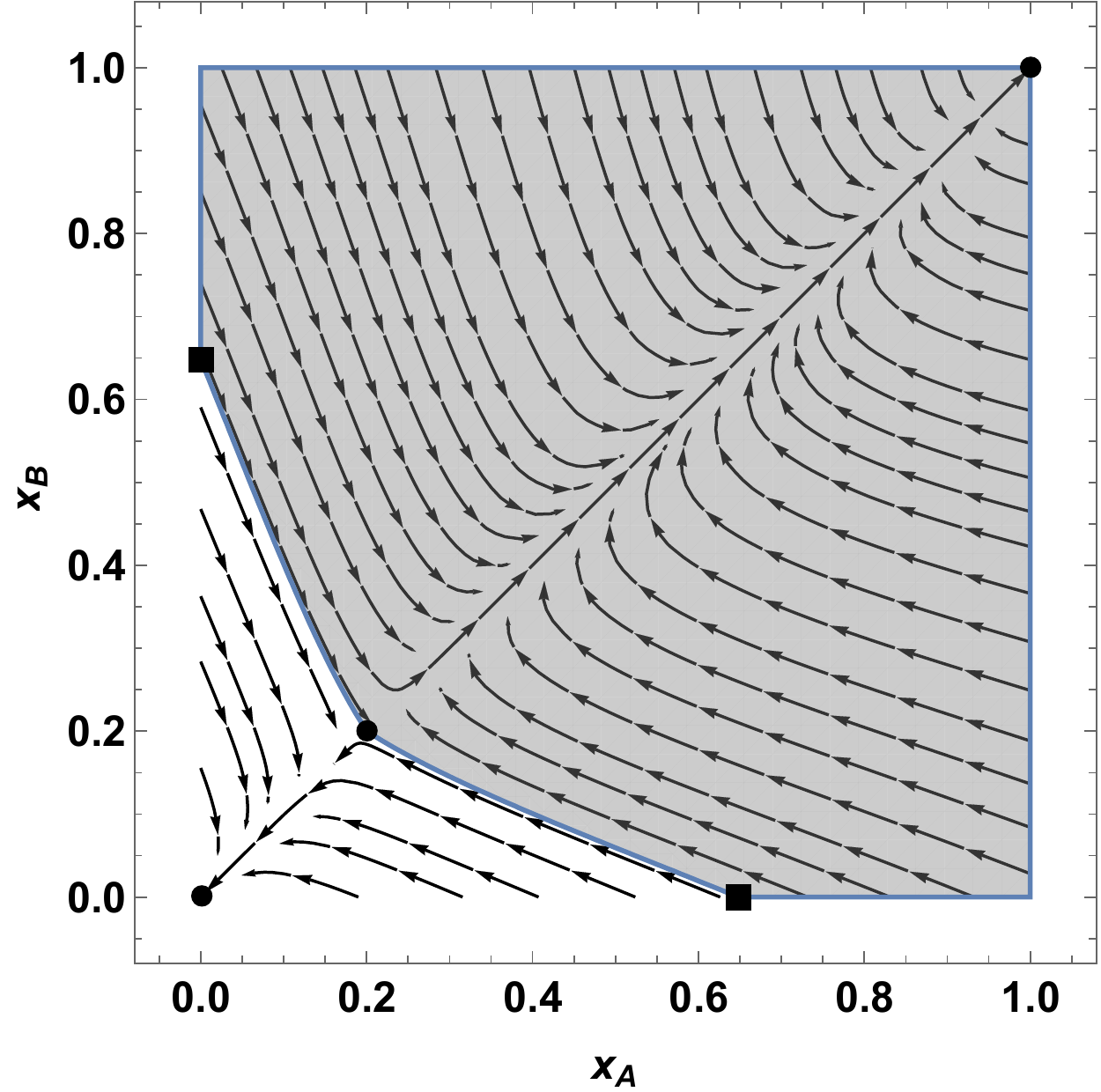}
		\caption*{\small Parameters $\nu=0.4$, $q=0.2$.}
	\end{minipage}
	\hfill
	\begin{minipage}[t]{.49\textwidth}
		\includegraphics[width=\textwidth]{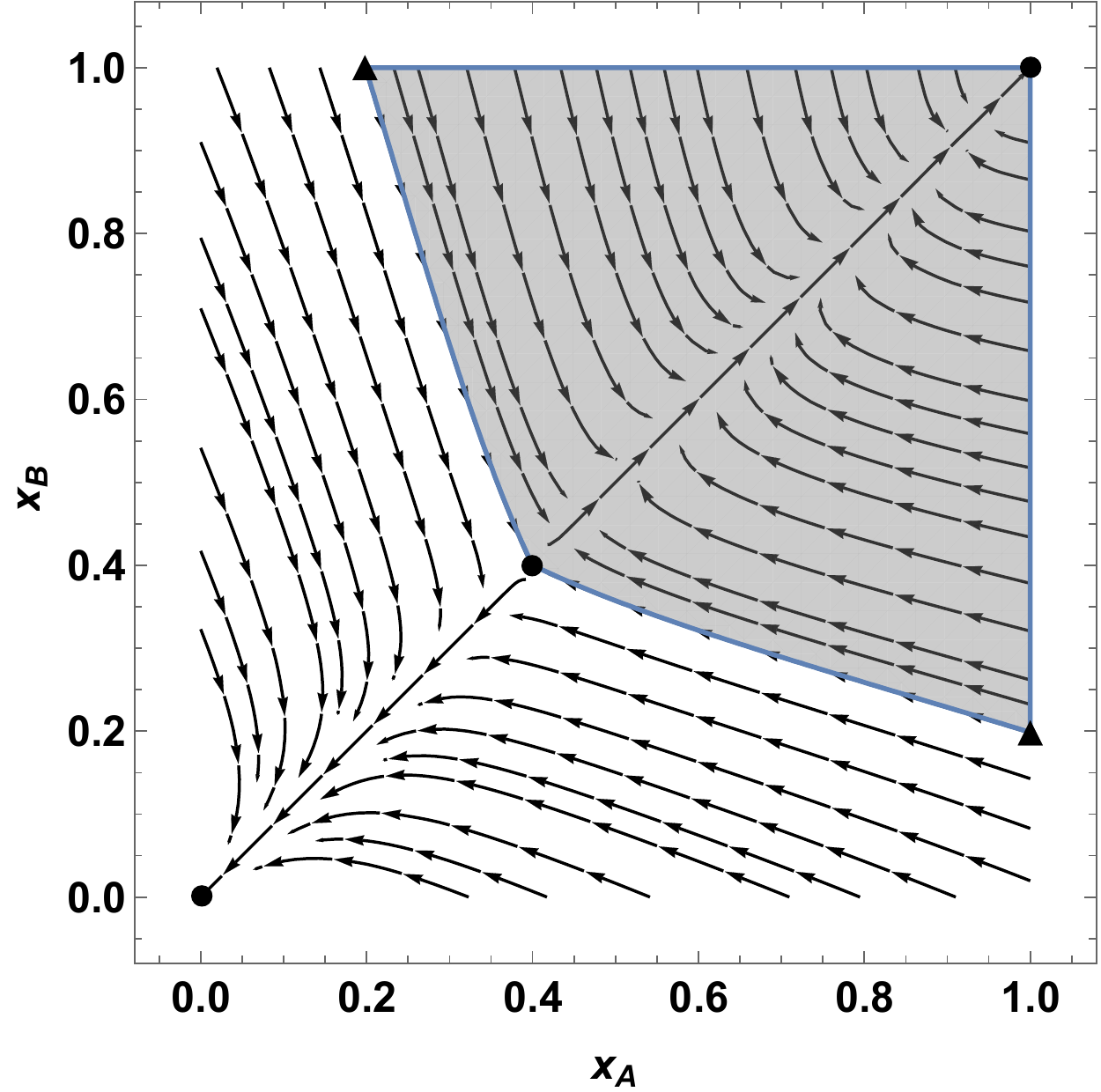}
		\caption*{\small Parameters $\nu=0.4$, $q=0.4$.}
	\end{minipage}
	\caption*{\small Simulations in \textsc{Mathematica}\textsuperscript{\textregistered} to plot the vector field defining system (\ref{ODEmax}) and the basins of attraction of the two asymptotically stable states $(x_A,x_B)=(0,0)$ and $(1,1)$ (respectively colored in white and shaded gray). The arrows depict the vector field defining system (\ref{ODEmax}) in each point $(x_A,x_B) \in [0,1]^2$ and confirm that the unit square is invariant and that the same is true for the diagonal and for the super-diagonal and sub-diagonal ``triangles''. Moreover, from the saddle point $(q,q)$ one can identify the separatrix curves, the unstable separatrix coinciding with the diagonal, while the stable one, i.e. $\mathcal C$, constitutes part of the border of the basins of attraction, thus separating them. Lastly, notice that as quarantine $q$ increases, the system exhibits a larger and larger basin of attraction of the disease-free equilibrium $(0,0)$, which is intuitively due to the fact that it is easier to recover from infection. The squared dots are $(\eta,0)$ and $(0,\eta)$, i.e. the intersection points of the separatrix $\mathcal C$ with the horizontal and vertical axis mentioned in Proposition \ref{proposition_separatrix}. Analogously, the triangular dots are $(1,\zeta)$ and $(\zeta,1)$.}
\label{basinattraction}
\end{figure}  

In \ref{app_comparative} we show that, although the separatrix $\mathcal C$ cannot be described analytically, it can be very well approximated linearly.
In \ref{app_comparative} we also show the good accuracy of this approximation, done by mean of an extensive grid of numerical simulations.
This will allow us to make a comparative statics analysis that will be then used in the following Section \ref{sec_discussion}.
More precisely, in Proposition \ref {lemmalinearseparatrix} and Lemma \ref{lemma_point_intersection} we give an explicit approximation $\widetilde{\mathcal C}$ of $\mathcal C$ and of its point of intersection with the boundaries of the unit square $[0,1]^2$.


Figure \ref{linearseparatrix} shows similarities and differences between $\mathcal C$ and $\widetilde{\mathcal C}$.
Depending on the parameters $\nu$ and $q$, the area under the curve $\widetilde{\mathcal C}$ is either a trapezoid or a triangle and is easily computed analytically. By considering this area as an approximation of the area under the curve $\mathcal C$, which is instead impossible to compute analytically. This area obtained with this linear approximation will also be used for a comparative statics analysis in Section \ref{sec_discussion}. The results are also shown in Figure \ref{approximatedarea}.

\begin{figure}[tbp]
	\centering
	\caption{\textbf{Separatrix $\mathcal C$ and comparison with its linear approximation $\widetilde{\mathcal C}$}}
	\begin{minipage}[t]{.49\textwidth}
		\centering\setlength{\captionmargin}{0pt}
		\includegraphics[width=\textwidth]{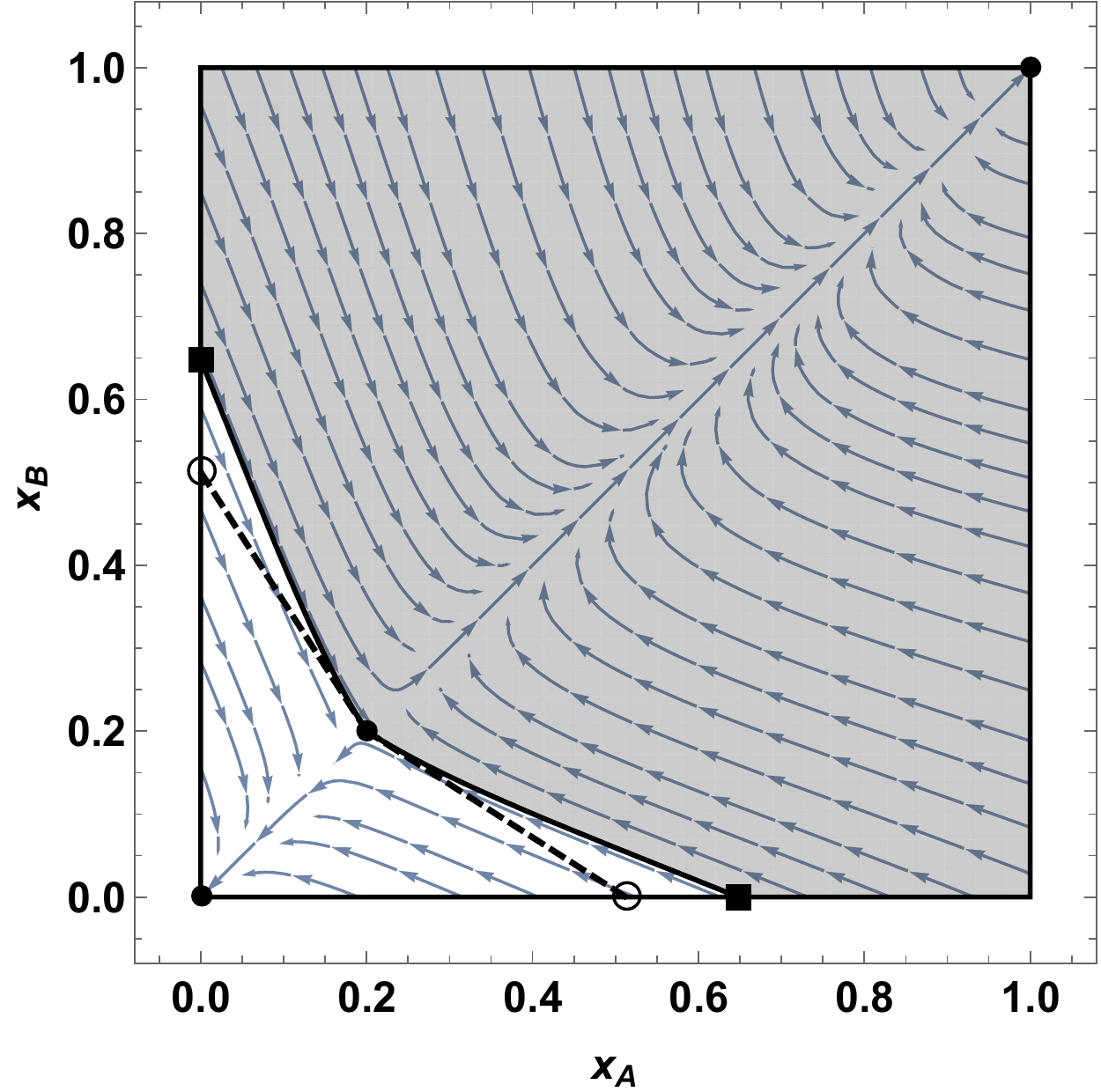}
		\caption*{\small Parameters $\nu=0.4$, $q=0.2$.}
	\end{minipage}
	\hfill
	\begin{minipage}[t]{.49\textwidth}
		\includegraphics[width=\textwidth]{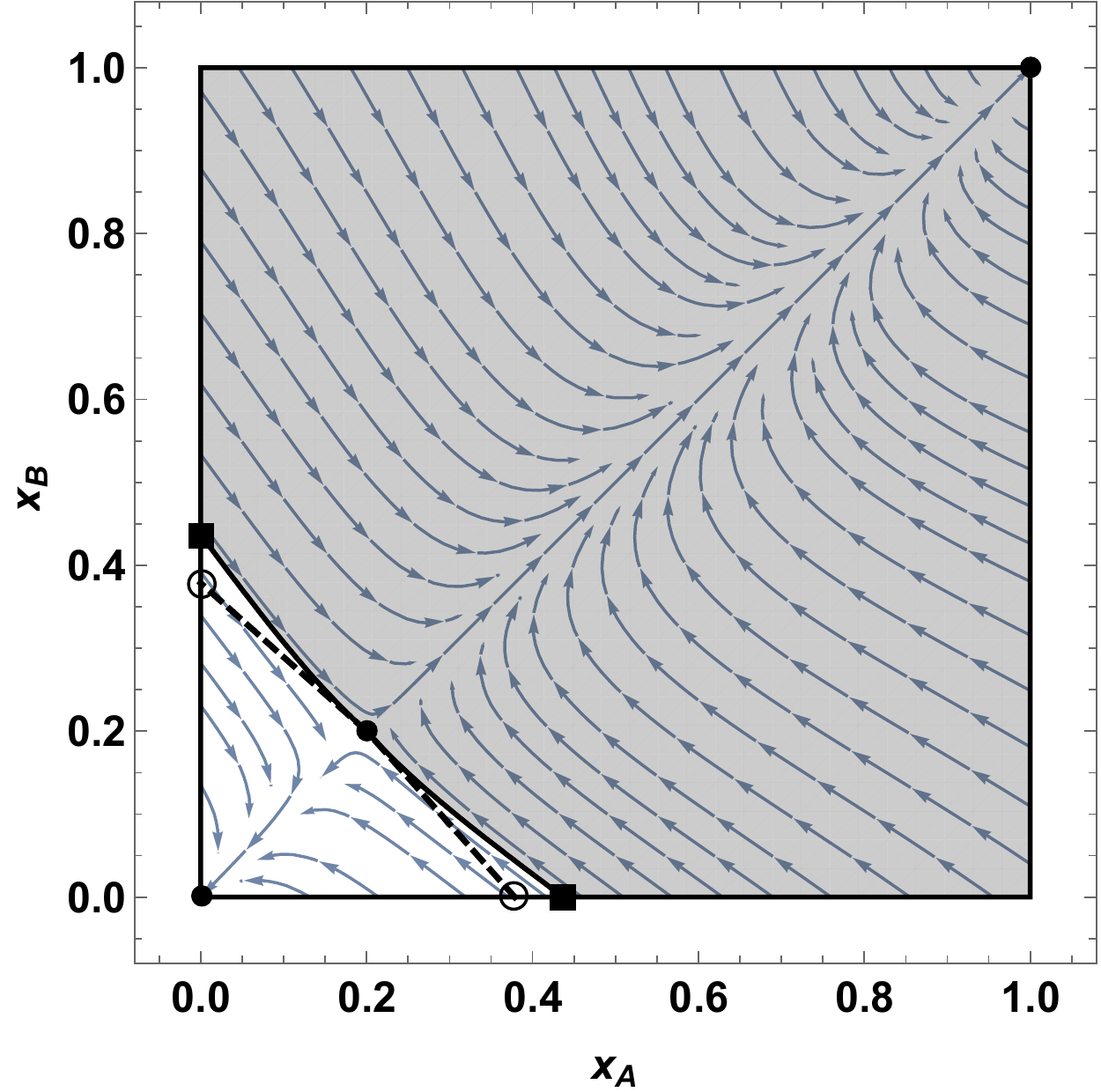}
		\caption*{\small Parameters $\nu=0.7$, $q=0.2$.}
	\end{minipage}
	\caption*{\small Plot of the linearized separatrix $\widetilde{\mathcal C}$ (dashed straight line) and comparison with the actual separatrix $\mathcal C$ (continuous black curve). $\widetilde{\mathcal C}$ is a first-order approximation of $\mathcal C$ in a neighborhood of the saddle $(q,q)$.}
	\label{linearseparatrix}
\end{figure}

\begin{figure}[tbp]
	\centering
	\caption{\textbf{Approximated area of the triangle/trapezoid}}
	\begin{minipage}[t]{.49\textwidth}
		\centering\setlength{\captionmargin}{0pt}
		\includegraphics[width=\textwidth]{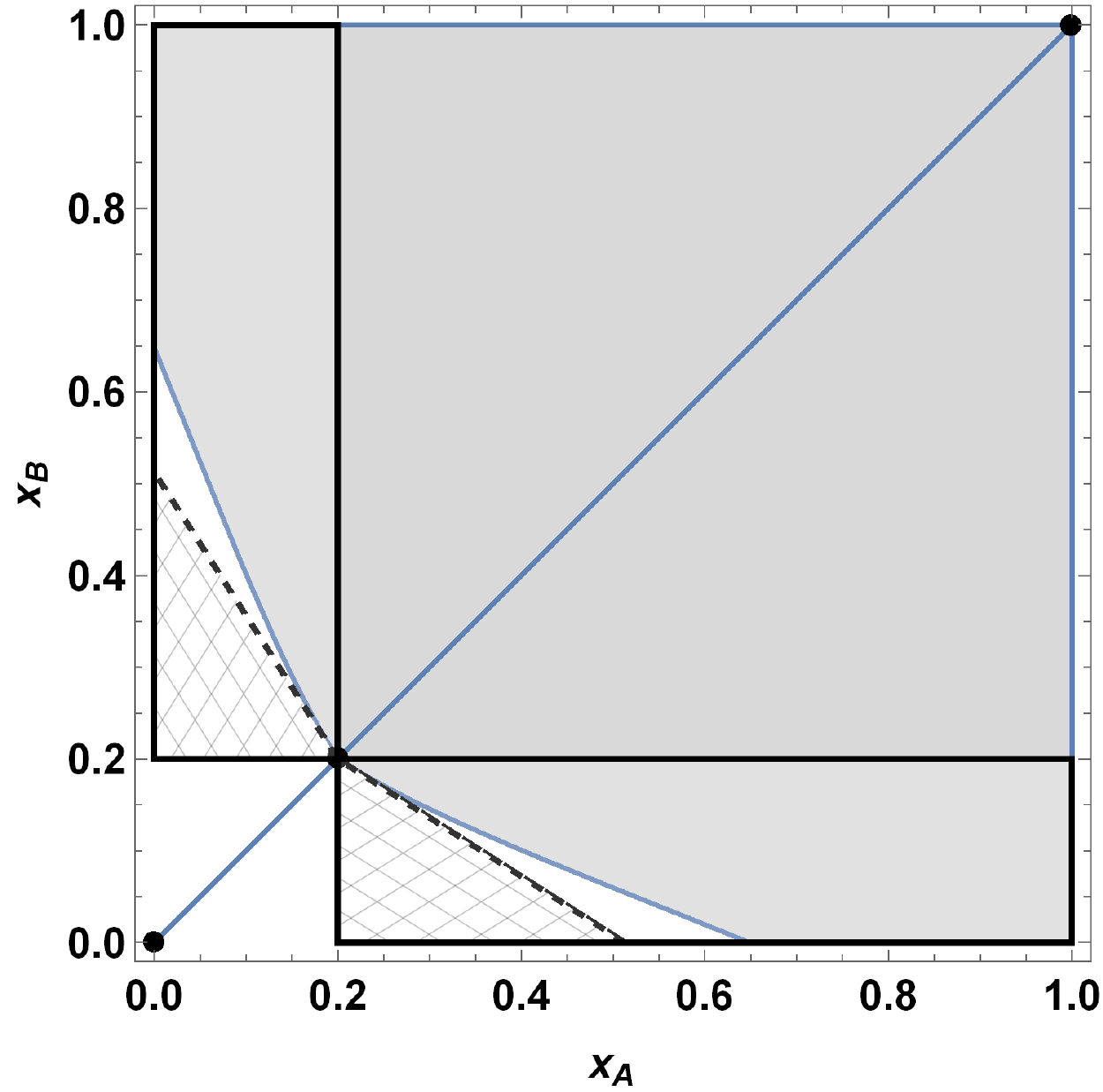}
		\caption*{\small Parameters $\nu=0.4$, $q=0.2$.}
	\end{minipage}
	\hfill
	\begin{minipage}[t]{.49\textwidth}
		\includegraphics[width=\textwidth]{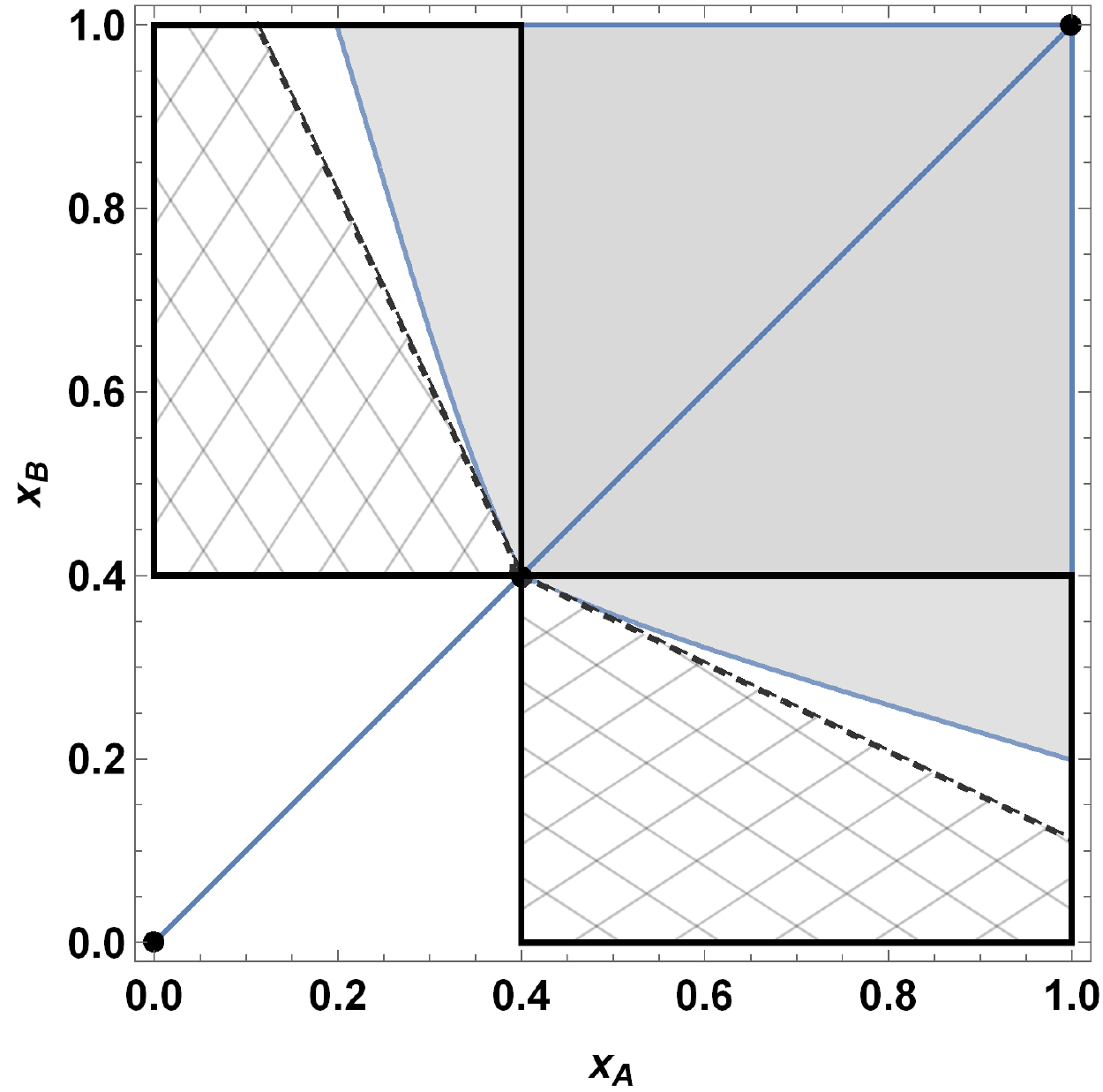}
		\caption*{\small Parameters $\nu=0.4$, $q=0.4$.}
	\end{minipage}
	\caption*{\small Approximated areas of the triangles/trapezoids (grid-shaded areas), defined by the linear approximation $\widetilde{\mathcal C}$ (dashed lines), in the rectangles of interest.}
	\label{approximatedarea}
\end{figure}

\clearpage

\section{Comparative statics with respect to exogenous shocks}
\label{sec_discussion}

We now focus our attention on the conclusions that can be drawn from the analysis of system (\ref{ODEmax}) performed in Section \ref{simplestexample}, \ref{app_linear} and \ref{app_comparative}. We will compare the two following situations:
\begin{itemize}
	\item	first, no cross-country trade between the two locations is allowed, i.e. they are considered separated and autarkic;
	\item	second, cross-country trading is instead allowed, as described in the previous section.
\end{itemize}
By comparing these two situations we are thus able to analyze the effects of a very ``stylized globalization'' (the second case) on the systemic resistance to potential shocks in the infection rates. Depending on the ``intensity'' and ``dimensionality'' of the shock, being ``autarkic'' or ``globalized'' may or may not be advantageous. 

In particular, \textit{small shocks} are better absorbed by an interconnected system, independently of their dimensionality: intuitively, the shock is more easily diluted in a larger system. On the contrary, somehow surprisingly, 
\textit{large shocks} may or may not have worse consequences when the locations are interconnected, depending on the amount of resources dedicated to recovering (formalized by the parameter $q$). 

\subsection{The case of autarky}

Let us consider two autarkic locations, where no trade is possible between them and where each location is subjected to a disease-spread dynamic described by the single-location model of Section \ref{single-locationmodel}. The evolution over time of the two infection rates $x_A(t)$ and $x_B(t)$ of these two locations $A$ and $B$ can be written as a system of two (uncoupled) differential equations:\footnote{For a reasonable comparison with an analogous globalized 2-location model, here we assume the same symmetric epidemic parameters $\nu_A = \nu_B = \nu$ and $q_A = q_B = q$.}
\begin{equation}
\label{ODEautarky}
\left\{
\begin{aligned}
& \frac{\de}{\de t}x_A = \nu x_A (1-x_A)(x_A-q) \\
&  \frac{\de}{\de t}x_B = \nu x_B (1-x_B)(x_B - q).
\end{aligned}
\right.
\end{equation}
The dynamics and results are shown in Figure \ref{comparisonbetweenlocations} (left) and summarized in the following proposition.

\begin{proposition}
\label{proposition_autarky}
Given two autarkic locations $A$ and $B$, system \emph{(\ref{ODEautarky})} has the following properties:
\begin{itemize}
\item   it is symmetric with respect to the diagonal, which is then an invariant set. The super-diagonal and sub-diagonal sets in $\R^2$, $\{(x_A,x_B)\in \R^2:x_A<x_B\}$ and $\{(x_A,x_B)\in \R^2:x_A>x_B\}$, are also invariant;
\item	the unit square $[0,1]^2$ is invariant;
\item	the critical points where $\frac{\de x_A}{\de t} = \frac{\de x_B}{\de t} = 0$ are:
	\begin{itemize}
		\item	$(0,q)$, $(1,q)$, $(q,0)$, $(q,1)$, which all are (unstable) saddle points;
		\item 	$(q,q)$, which is an unstable point (source);
		\item	$(0,0)$, $(0,1)$, $(1,0)$, $(1,1)$, which all are asymptotically stable equilibria.
	\end{itemize}
\end{itemize}
Moreover, the separatrix curves of the saddle points are the lines $x_A = 0$, $x_B = 0$, $x_A = q$, $x_B = q$, $x_A = 1$ and $x_B = 1$, and this also makes possible characterizing the basins of attraction of the stable points in $[0,1]^2$:\footnote{Of course, we only consider their intersection with the unit square, which is the sets in which the fraction of infected $x_A$ and $x_B$ make sense.}
\begin{itemize}
	\item	$[0,q)^2$ is the basin of attraction of $(0,0)$;
	\item	$[0,q)\times (q,1]$ is the basin of attraction of $(0,1)$;
	\item	$(q,1]\times [0,q)$ is the basin of attraction of $(1,0)$;
	\item	$(q,1]^2$ is the basin of attraction of $(1,1)$.
\end{itemize}
\end{proposition}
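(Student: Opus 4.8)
The plan is to exploit the fact that system (\ref{ODEautarky}) is completely \emph{decoupled}: each coordinate obeys an independent copy of the single-location equation (\ref{SI_cubic}). Writing $f(x) = \nu x(1-x)(x-q)$ for the common right-hand side, so that $\frac{\de}{\de t}x_A = f(x_A)$ and $\frac{\de}{\de t}x_B = f(x_B)$, the entire analysis reduces to combining Proposition \ref{prop_equilibria_1location} coordinatewise. Throughout I will use that $f$ is a polynomial, hence locally Lipschitz, so that the standard existence-and-uniqueness theorem applies.

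First I would establish the symmetry and invariance claims. The field $(f(x_A),f(x_B))$ is invariant under the swap $(x_A,x_B)\mapsto(x_B,x_A)$, so whenever $(x_A(t),x_B(t))$ is a solution so is $(x_B(t),x_A(t))$; in particular the diagonal $\{x_A=x_B\}$ is invariant. Moreover, since $f(c)=0$ for each $c\in\{0,q,1\}$, every line $x_A=c$ and $x_B=c$ with $c\in\{0,q,1\}$ is an invariant set on which the corresponding velocity component vanishes. By uniqueness of solutions no trajectory can cross such an invariant line: a trajectory starting in the open sub-diagonal region $\{x_A>x_B\}$ cannot reach the diagonal in finite time, and one starting inside the square bounded by the invariant lines $x_A,x_B\in\{0,1\}$ cannot leave it. This yields the invariance of the diagonal, of the two diagonal triangles, and of $[0,1]^2$.

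Next I would locate and classify the critical points. By decoupling, $(x_A^*,x_B^*)$ is critical iff $f(x_A^*)=f(x_B^*)=0$, i.e. iff $x_A^*,x_B^*\in\{0,q,1\}$, giving the nine listed points. The Jacobian is the diagonal matrix $\mathrm{diag}\!\left(f'(x_A^*),f'(x_B^*)\right)$, so its eigenvalues are exactly the one-dimensional derivatives, and from Proposition \ref{prop_equilibria_1location} we read off $f'(0)<0$, $f'(1)<0$, $f'(q)>0$. Hence the four corners $\{0,1\}^2$ have two negative eigenvalues (asymptotically stable nodes); the center $(q,q)$ has two positive eigenvalues (a source); and the four mixed points $(0,q),(q,0),(1,q),(q,1)$ combine one positive and one negative eigenvalue (saddles), matching the statement.

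Finally, the separatrices and basins follow immediately. The stable and unstable manifolds of each saddle lie along the invariant coordinate lines identified above, namely $x_A\in\{0,q,1\}$ and $x_B\in\{0,q,1\}$. Since, by Proposition \ref{prop_equilibria_1location}, $x_A(t)\to 0$ when $x_A(0)\in[0,q)$ and $x_A(t)\to 1$ when $x_A(0)\in(q,1]$, and likewise for $x_B$, the limit of any trajectory is determined independently in each coordinate; taking products of the one-dimensional basins gives $[0,q)^2$, $[0,q)\times(q,1]$, $(q,1]\times[0,q)$ and $(q,1]^2$ as the basins of $(0,0)$, $(0,1)$, $(1,0)$ and $(1,1)$. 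I do not expect a genuine obstacle here: the only step demanding care is the non-crossing argument, which rests entirely on uniqueness of solutions for the Lipschitz field, while everything else is a coordinatewise application of the already-proven single-location result.
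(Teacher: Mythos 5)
Your proposal is correct and follows essentially the same route as the paper: exploit the decoupling into two copies of the scalar field $f(x)=\nu x(1-x)(x-q)$, read the nine equilibria off the zeros of $f$, classify them via the diagonal Jacobian $\mathrm{diag}(f'(x_A^*),f'(x_B^*))$, and obtain invariance of the coordinate lines $x_A,x_B\in\{0,q,1\}$ from the vanishing of the corresponding velocity component. If anything, you are slightly more complete than the paper, which leaves the product-of-one-dimensional-basins step and the appeal to uniqueness of solutions implicit.
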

\begin{proof}
	See \ref{sec: proofs}.
\end{proof}

As shown in Figure \ref{comparisonbetweenlocations} (left), the points $(1,0)$ and $(0,1)$ play a peculiar role: they represent a situation in which only one of the two locations is fully infected, while the other is disease free. In case of autarky, this may happen when the initial point of infection at time $t=0$ belongs to $[0,q)\times (q,1]$ or $(q,1]\times [0,q)$, which will cause the dynamics to convergence toward $(0,1)$ or $(1,0)$, respectively. Figure \ref{comparisonbetweenlocations} (right) also shows that this cannot be the case when the two locations are connected and globalized. This feature will be a key ingredient in the next section about shock analysis.

\begin{figure}[tbp]
\caption{\textbf{Comparison between autarkic and interconnected locations}}
\begin{minipage}{0.49\textwidth}
\includegraphics[width=\textwidth]{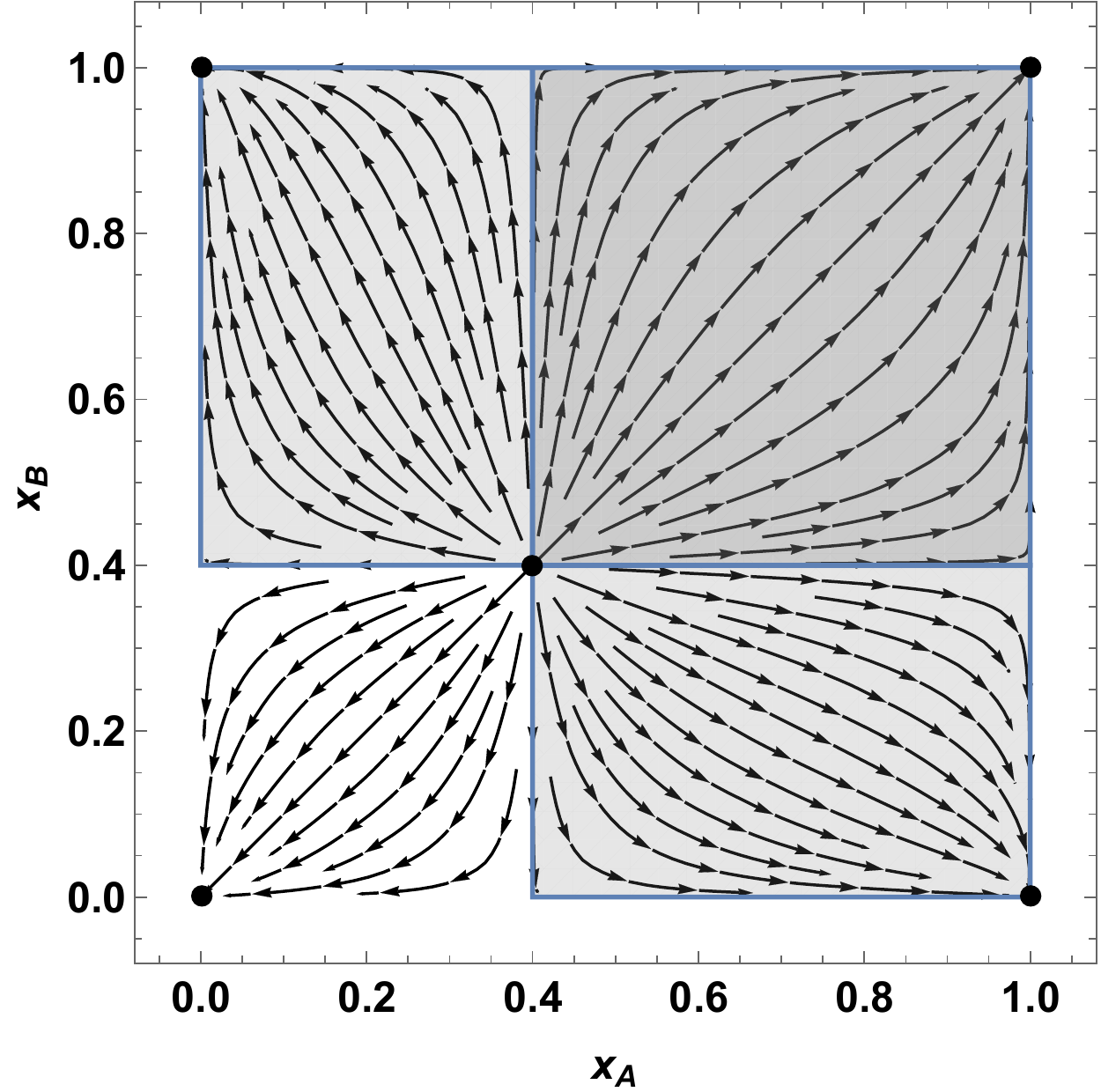}
\caption*{\small Autarky, parameters $\nu=0.7$, $q=0.4$.}
\end{minipage}
\hfill
\begin{minipage}{0.49\textwidth}
\includegraphics[width=\textwidth]{field_basin_connected74.pdf}
\caption*{\small Globalized, parameters $\nu=0.7$, $q=0.4$.}
\end{minipage}
\caption*{\small Dynamics of disease spreading, in case of two autarkic locations (left) and of two globalized locations (right), with same epidemic parameters. In both, the basin of attraction of $(1,1)$ is colored in dark gray, while the basin of $(0,0)$ is left in white. Only the autarkic case (left) exhibits two partially-endemic asymptotically stable states, $(0,1)$ and $(1,0)$, whose basins of attraction are colored in light gray.}
\label{comparisonbetweenlocations}
\end{figure}

\begin{figure}[tb!]
\caption{\textbf{Systemic resistance to small vs. large shocks}}
\begin{minipage}{0.47\textwidth}
\includegraphics[width=\textwidth]{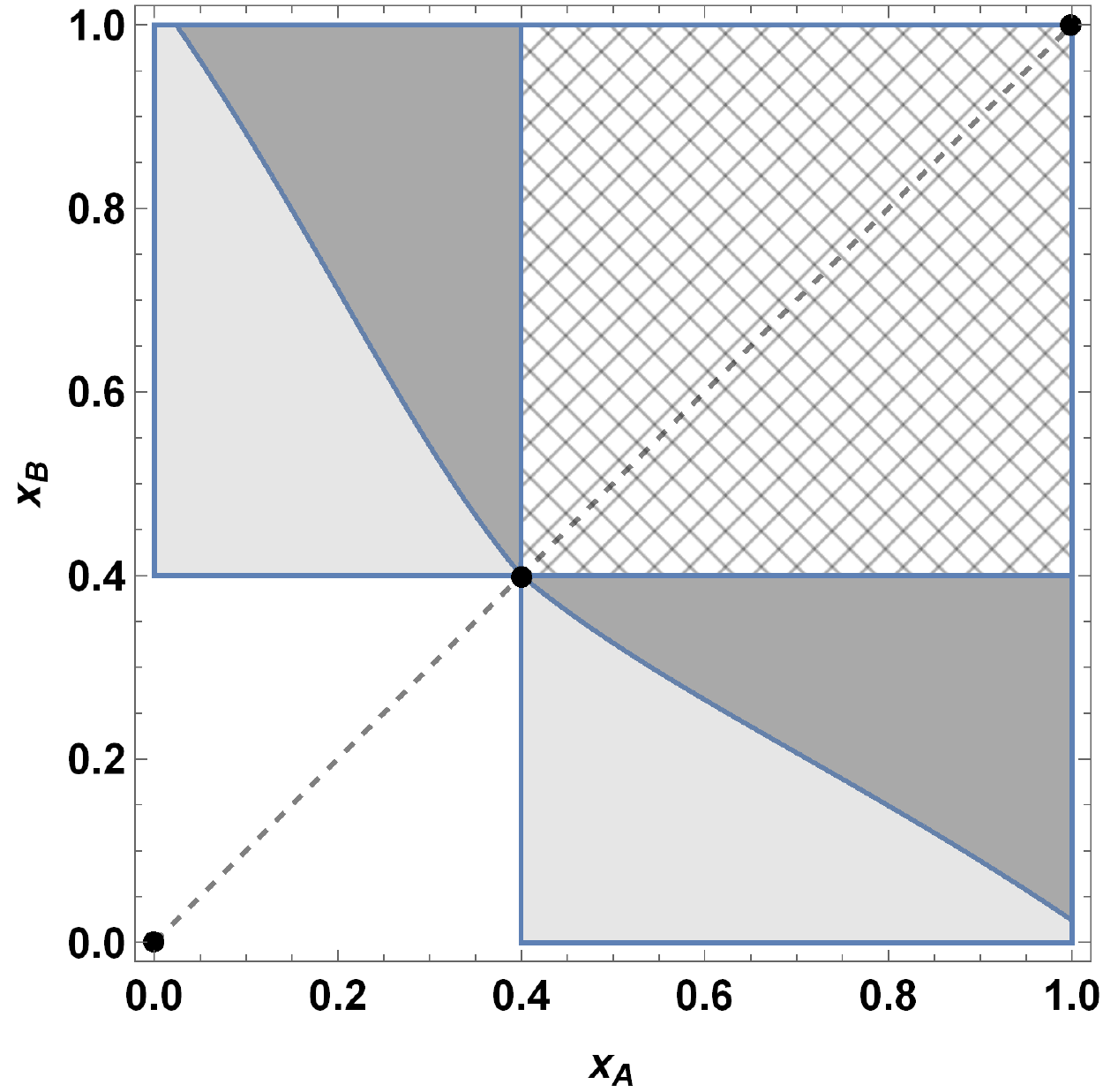}
\end{minipage}
\hfill
\begin{minipage}{0.51\textwidth}
\caption*{\small Comparison by juxtaposition of the areas obtained in Figure \ref{comparisonbetweenlocations}, with same parameters $\nu=0.7$, $q=0.4$. The white area, $[0,q)^2$, and the grid-shaded area, $(q,1]^2$, depict where a hitting shock would produce the same outcome, independently of locations being autarkic or globalized. Light-gray areas measure where shocks result in a partial endemic state, in case of autarkic locations, or where they are instead totally recovered, in case of globalized locations. On the contrary, dark-gray areas are those where shocks result fully infected system, if globalized, whereas only partial infection, if autarkic.}
\label{comparisonbetweenlocations2}
\end{minipage}
\end{figure}

\subsection{Shock analysis}

In line with what done in Section \ref{sec_1island}, \emph{shocks} are assumed to be uniform at random over the unit square $[0,1]^2$ and are represented by a vector of initial conditions:
$$
\mathbf s = (s_A,s_B) := \left(x_A(0),x_B(0)\right).
$$
Figures \ref{comparisonbetweenlocations} and \ref{comparisonbetweenlocations2} show the comparison of the basin of attraction of the point $(1,1)$, when the locations are considered autarkic or connected. Due to the shape of these basins of attraction, we obtain the following results.\footnote{%
We assume a uniform distribution of shocks just to simplify the exposition.
What is important for our analysis, is just that the support of our random shocks is the unit square $[0,1]^2$, so that we can compare regions of this support in the two regimes of \emph{atarky} and \emph{globalization}.
With uniform shocks areas in the support region are proportional to probabilities, but all following results could be adapted also to any other distribution of shocks, remembering that in the more general case areas should be \emph{translated} into probabilities.}

Given a shock $\mathbf s = (s_A,s_B)$, if it is large enough in both components or small enough in both components, then the resulting outcome is the same for an autarkic system and for a globalized system. 
In particular:\footnote{Notice that assuming that the shock distribution is uniform or continuous, thus atom-less, guarantees that the probability that one component hits $0$, $q$ or $1$ is zero.}
\begin{itemize}
	\item	if $s_A < q$ and $s_B < q$, then both the autarkic system and the globalized system will be able to fully recover (white areas in Figure \ref{comparisonbetweenlocations2});
	\item	if $s_A > q$ and $s_B > q$, then both systems will converge to a fully infected endemic state (grid-shaded areas in Figure \ref{comparisonbetweenlocations2}).
\end{itemize}

On the contrary, the outcome resulting from a shock hitting mainly one location is completely different when the two locations are autarkic or connected. Indeed, consider an ``almost'' 1-dimensional shock $\mathbf s$ targeting mainly location $A$, that is\footnote{Symmetrically, the argument is the same for shocks mainly concentrated in $B$.}
$$
\mathbf s = (s_A,\varepsilon), \quad \text{with} \quad \varepsilon < q < s_A.
$$
In the autarkic case, the dynamics will converge to a partial epidemic equilibrium: Proposition \ref{proposition_autarky} and Figure \ref{comparisonbetweenlocations} (left) show that $A$ would converge to fully infection while, independently, $B$ would recover.

Instead, what happens when $A$ and $B$ are connected while facing the same shock $\mathbf s = (s_A,\varepsilon)$ as before? Two different situations may arise:
\begin{itemize}
\item   if $s_A$ is large enough and such that $\mathbf s = (s_A,\varepsilon)$ belongs $(1,1)$'s basin of attraction (dark-gray area in Figure \ref{comparisonbetweenlocations2}) and then the globalized system will end up being fully infected;
\item   if, instead, $s_A$ is still greater than $q$ but not large enough, then $(s_A,\varepsilon)$ belongs to $(0,0)$'s basin (light-gray in Figure \ref{comparisonbetweenlocations2}) and so the globalized system will manage to recover from this shock.
\end{itemize}

This analysis shows that the 2-location autarkic system and the 2-location globalized system react very differently in response to large 1-dimensional shocks.

The dark-gray areas in Figure \ref{comparisonbetweenlocations2} are constituted by all those possible shocks that cause the infection to spread to both locations, when they are connected, or to just one location, when autarkic. Assuming a uniform shock distribution, this area, then, exactly measures the weakness of the system with respect to this kind of mainly 1-dimensional shocks and, in addition, it also captures the advantage of an autarkic system over a globalized one. 

Analogously, but in an opposite way, the light-gray areas in Figure \ref{comparisonbetweenlocations2} capture the advantage of a globalized system over an autarkic one: shocks belonging to these regions are recovered by a connected system, whereas they result in a partial epidemic equilibrium in the autarkic case.\footnote{These results resemble those obtained in the context of financial networks, where agents (e.g. banks) are exposed via financial dependence to others' default and the goal is to understand how shocks spread in a financial network. As argued in \cite{acemoglu2015systemicrisk}: ``as long as the magnitude of negative shocks affecting financial institutions are sufficiently small, a more densely connected financial network [...] enhances financial stability. However, beyond a certain point, dense interconnections serve as a mechanism for the propagation of shocks, leading to a more fragile financial system.'' The same kind of results are achieved in \cite{cabrales2017risk}.}

\begin{figure}[tb]
\caption{\textbf{Comparative statics}}
\begin{minipage}{0.32\textwidth}
\includegraphics[width=\textwidth]{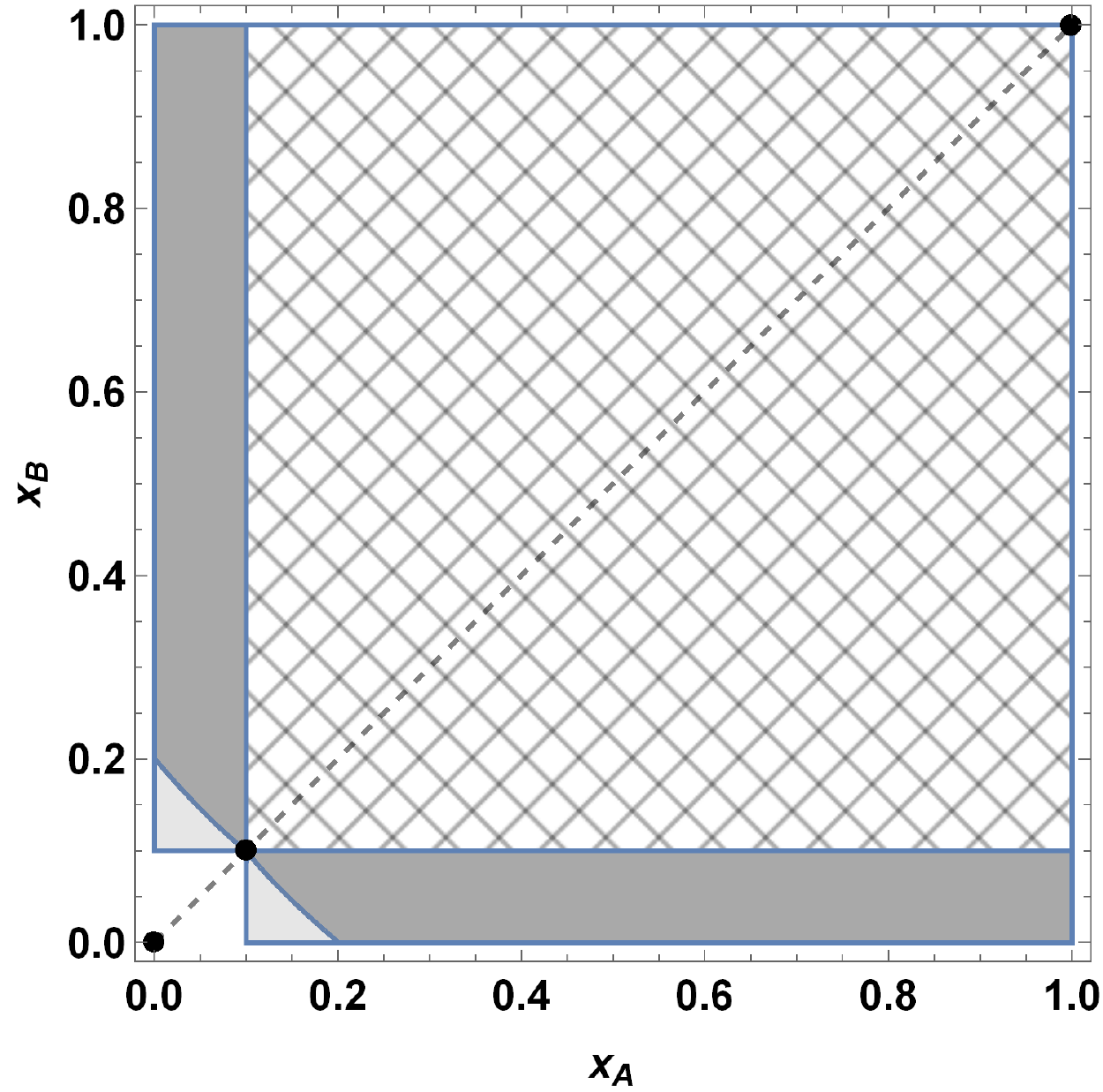}
\caption*{\small $q=0.1$}
\end{minipage}
\hfill
\begin{minipage}{0.32\textwidth}
\includegraphics[width=\textwidth]{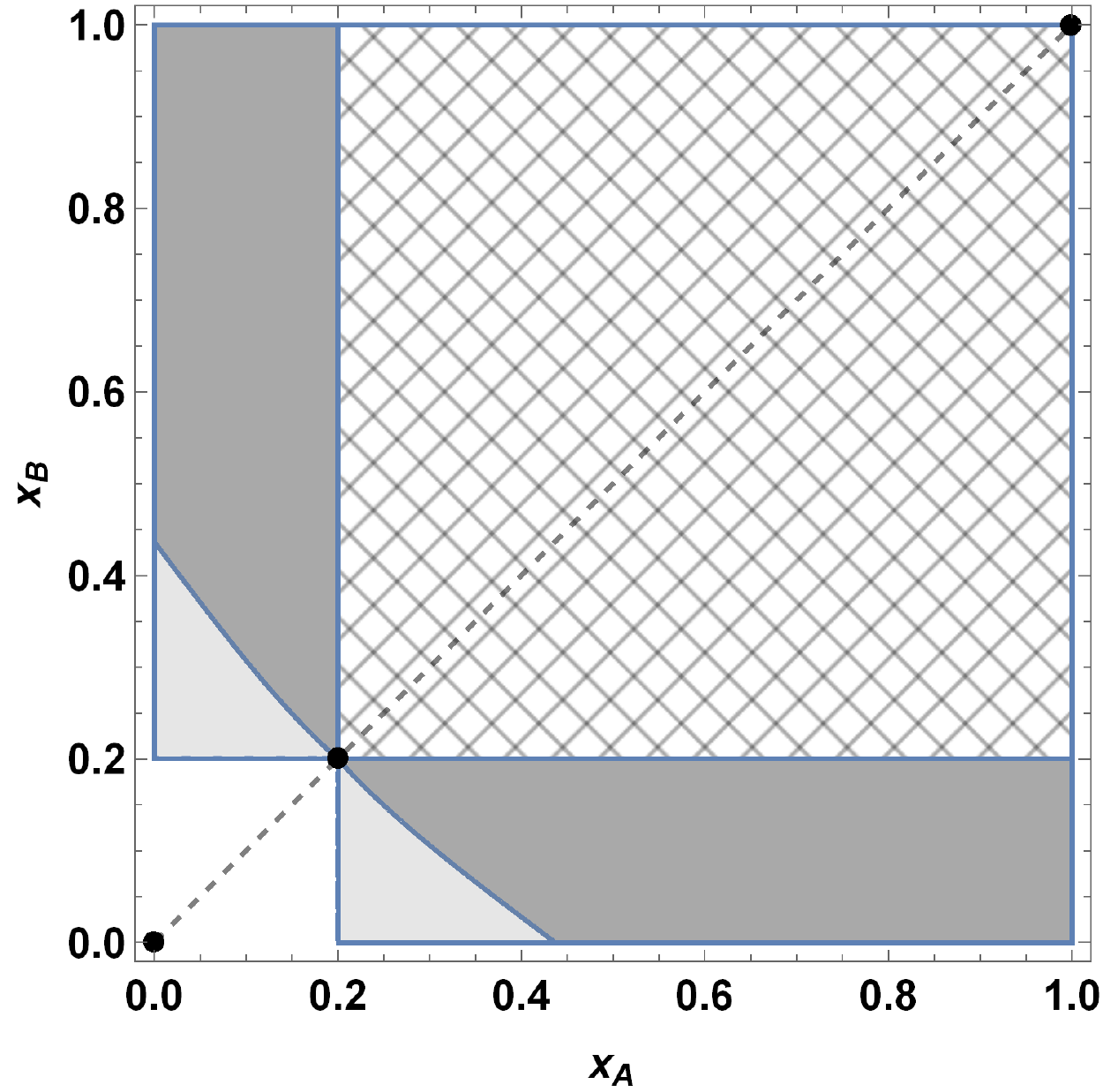}
\caption*{\small $q=0.2$}
\end{minipage}
\hfill
\begin{minipage}{0.32\textwidth}
\includegraphics[width=\textwidth]{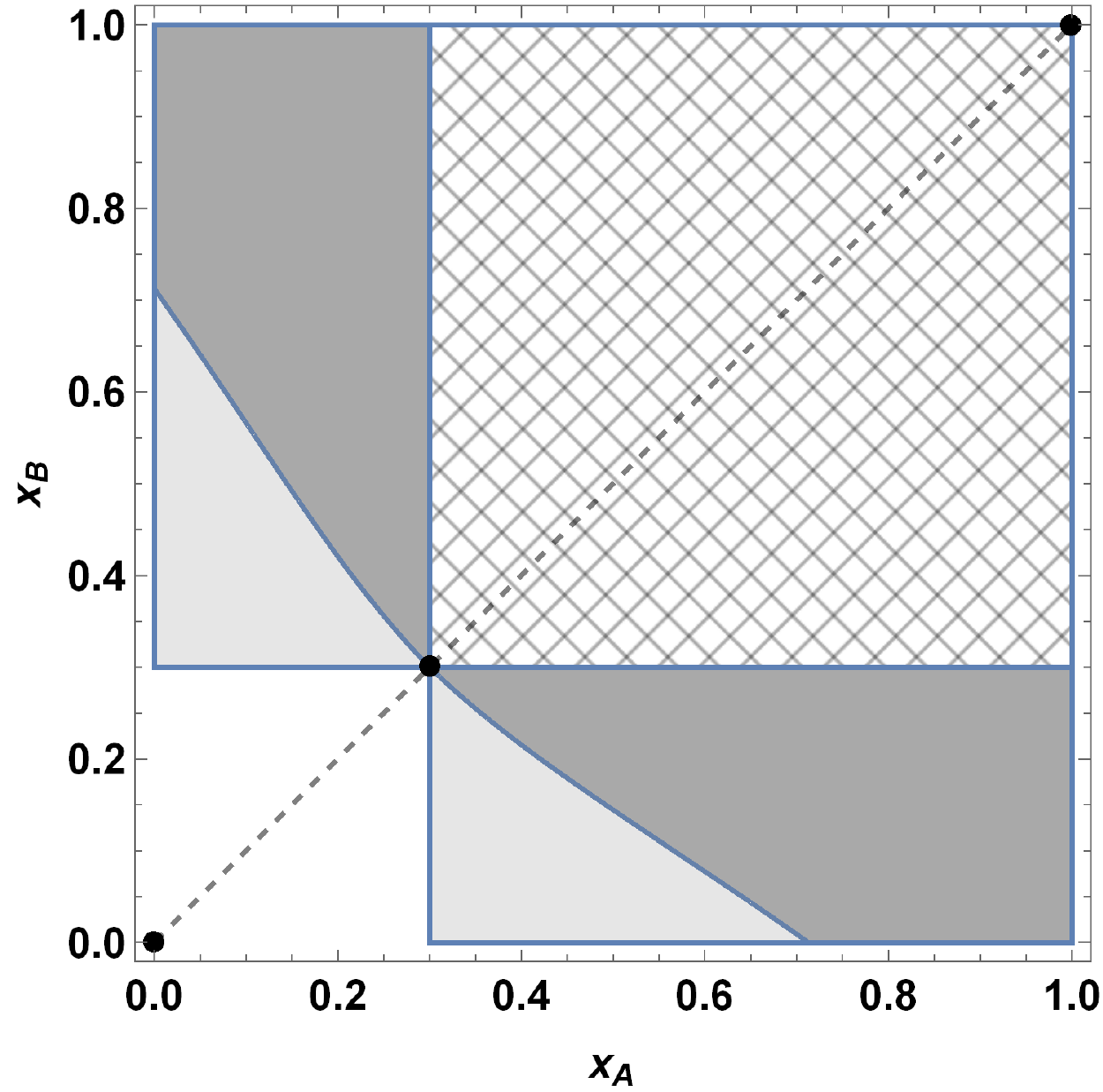}
\caption*{\small $q=0.3$}
\end{minipage}
\begin{minipage}{0.32\textwidth}
\includegraphics[width=\textwidth]{comparison74.pdf}
\caption*{\small $q=0.4$}
\end{minipage}
\hfill
\begin{minipage}{0.32\textwidth}
\includegraphics[width=\textwidth]{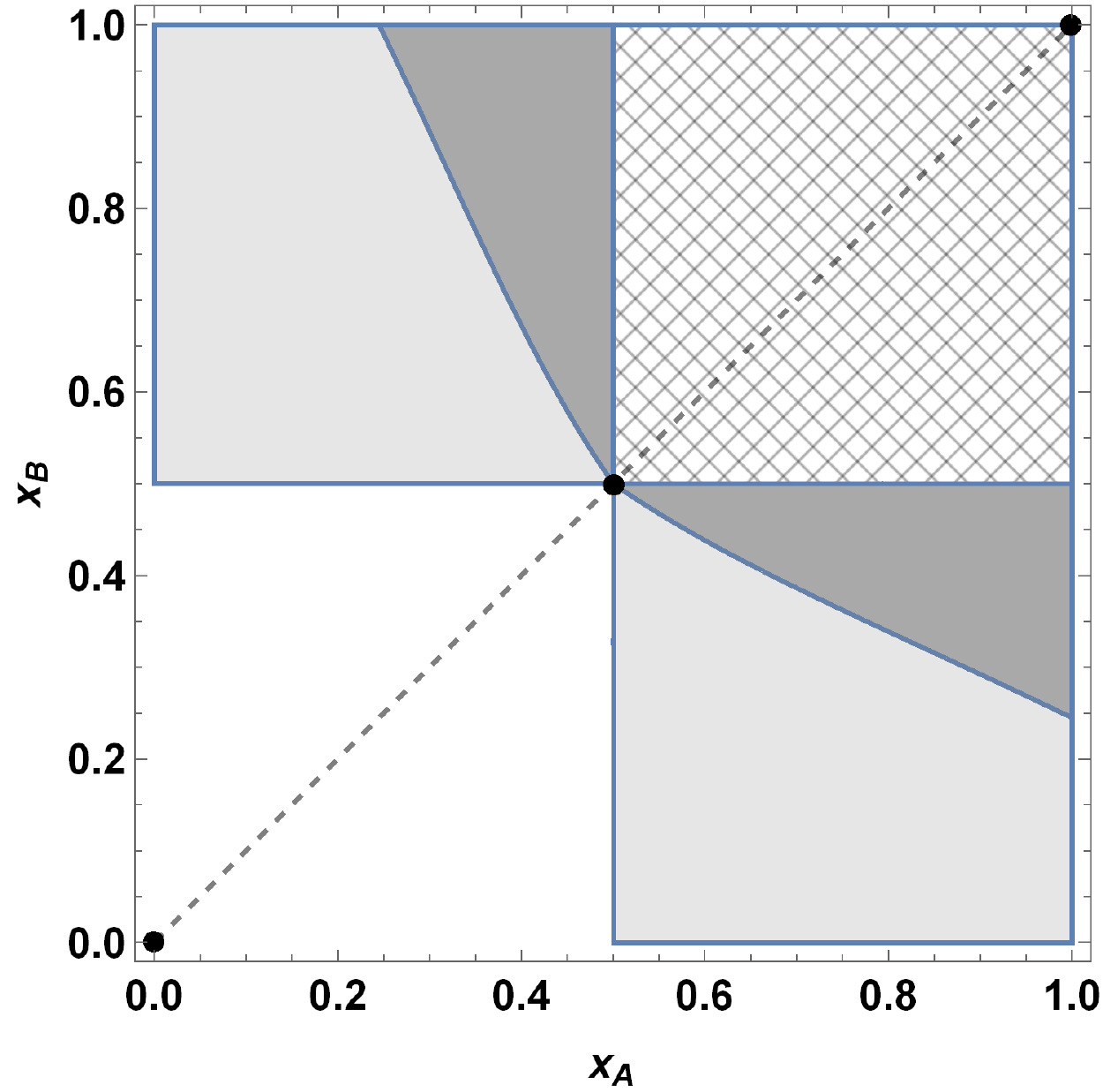}
\caption*{\small $q=0.5$}
\end{minipage}
\hfill
\begin{minipage}{0.32\textwidth}
\includegraphics[width=\textwidth]{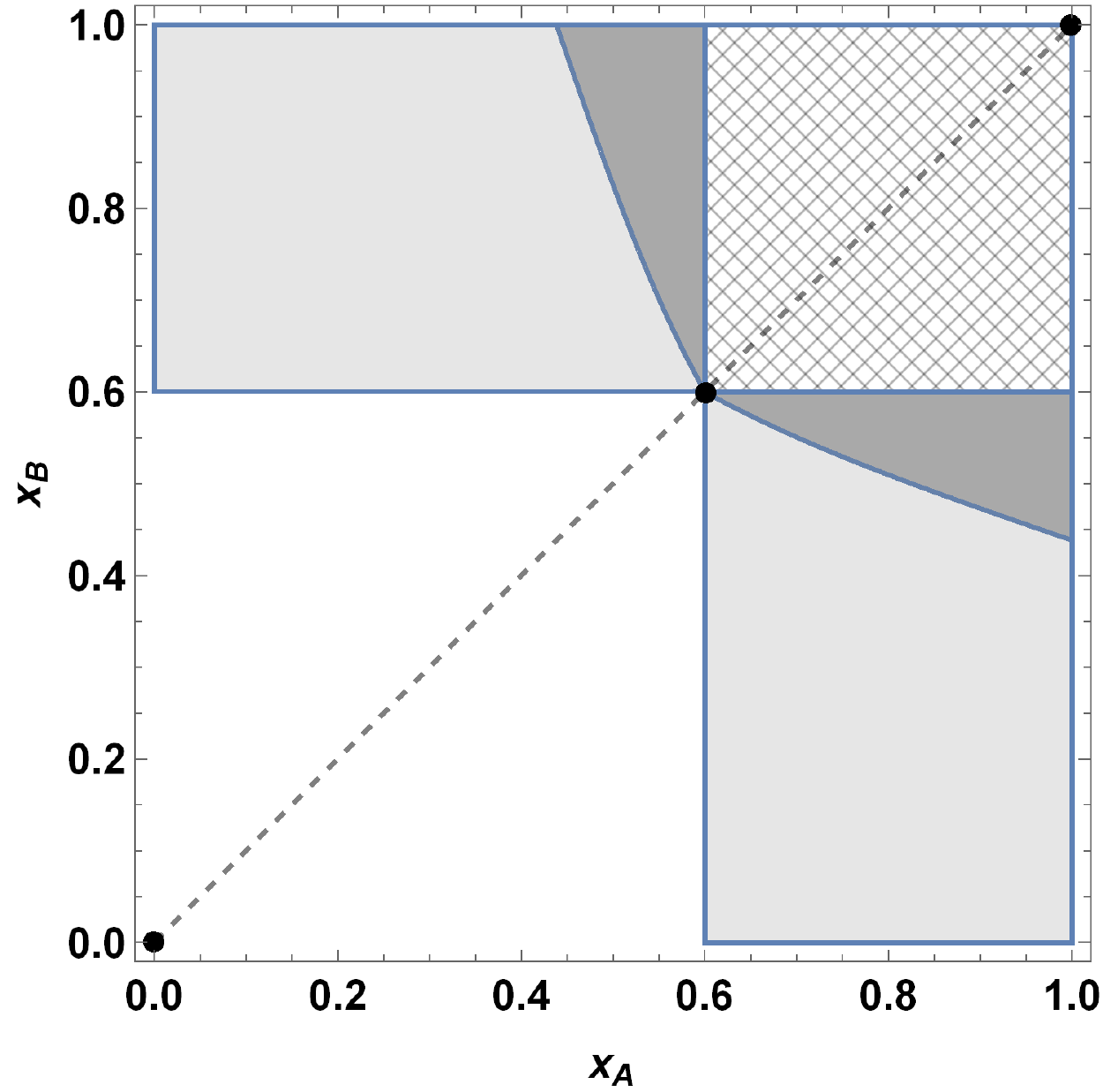}
\caption*{\small $q=0.6$}
\end{minipage}
\caption*{\small Regions of interest when contagiousness $\nu = 0.7$ is kept fixed, while quarantine $q$, the policy parameter, increases in $(0,1)$.}
\label{comparativestatics}
\end{figure}

\subsection{Systemic resistance \& policy}

Understanding the relationship between the dark-gray areas and the light-gray areas in Figure \ref{comparisonbetweenlocations2} becomes necessary, because it gives an indication of the relative (dis)advantage of an autarkic system over a globalized system for systemic resistance. 
Figure \ref{comparativestatics} also shows that this advantage changes as the recovery parameter $q$ varies: this turns out to be crucial for policy making.

One way to address this issue is by analyzing the separatrix curve $\mathcal C$ of the saddle $(q,q)$, because it separates the basins of attraction of the regions of interest. Unfortunately, apart from Proposition \ref{proposition_separatrix}, which relies on the ``local'' information provided by the eigenvector of the linearized system in the neighborhood of the saddle point $(q,q)$ and on the monotonicity of the components of the vector field defining system (\ref{ODEmax}) in specific areas, we have to rely on approximated results, due to the impossibility of explicitly describing $\mathcal C$ analytically.

Specifically, we first numerically approximate the intersection points between the separatrix $\mathcal C$ and the boundaries of the unit square $[0,1]^2$ and, then, numerically measure the gray areas and determine their relative ratio, which, as already observed, is key to understanding whether a globalized system is shock-resistance superior to an autarkic system, given the same parameters $q$ and $\nu$.

\paragraph{(Numerical) comparative statics}
Let us first deal with the (numerical) computation of the intersection point between the separatrix $\mathcal C$ and the border of the unit square below the diagonal, i.e. the segments $[0,1]\times\{0\}$ and $\{1\}\times[0,1]$. \footnote{By symmetry with respect to the diagonal, the same analysis holds also for the border of the unit square above the diagonal.}
Depending on whether $\mathcal C$ intersects the former or latter segment, we follow the notation used in Proposition \ref{proposition_separatrix} and Figure \ref{basinattraction} respectively denote this point with $(x_A,x_B) = (\eta(q,\nu),0)$ or $(1,\zeta(q,\nu))$.

This analysis is shown in Figure \ref{intersectionaxisseparatrix}:
\begin{itemize}
	\item	holding fixed $\nu\in (0,1)$, whenever $\mathcal C$ crosses the segment $[q,1]\times \{0\}$ in the point $(\eta(q,\nu),0)$, then $q\mapsto \eta(q,\nu)$ is increasing in $q$ and spans from $0$ to $1$. Moreover, $\eta(q,\nu) > q$;
	\item	analogously, when $q$ exceeds a certain threshold\footnote{Threshold that corresponds to $q=0.39$ in Figure \ref{intersectionaxisseparatrix}.}, then $\mathcal C$ crosses the segment $\{1\}\times [0,q]$ in the point $(1,\zeta(q,\nu))$; moreover, $q \mapsto \zeta(q,\nu)$ is increasing, going from $0$ to $1$ and always satisfying $\zeta(q,\nu) < q$.
\end{itemize}

Let us now turn to the relative advantage/disadvantage of an autarkic system over a globalized system, especially when subjected to mainly 1-dimensional shocks.\footnote{Shocks that mainly start from a single location, of the form $\mathbf s = (\varepsilon, s_B)$ or $(s_A,\varepsilon)$, with $\varepsilon \approx 0$.}
We have already observed that the areas in light gray and dark gray of Figures \ref{comparisonbetweenlocations2} and \ref{comparativestatics} measure the extent to which an autarkic system or a globalized system is relatively more or less able to recover from shocks of this kind.

Holding fixed the contagiousness $\nu$, as the recovery parameter $q$ increases, the light-gray areas expand while the dark-gray areas shrink.\footnote{We think of contagiousness as a parameter strictly related to the type of disease considered, so not of interest for policy making.} 
According to our previous interpretation, this means that it becomes more likely that a 1-dimensional shock lead the autarkic system to a partial endemic equilibrium, while a corresponding reduction of the dark-gray areas means that a globalized system becomes more able to recover from shocks.\footnote{Since it corresponds to an expansion of the white recovery area, for a globalized system.}
This, in turn, means that the larger it is the available level of quarantine $q$, the more convenient it becomes to be in a globalized system relative to an autarkic one. In this respect, Figure \ref{comparativestatics} shows how the light-gray and dark-gray areas change, as the quarantine $q$ changes.\footnote{While contagiousness $\nu$ is kept fixed, because we think of it as a disease-related parameter, not subject to policy making.} This analysis is also shown in Figure \ref{ratioorangeyellowareas}, where we plot the percentage of the rectangle $[q,1]\times [0,q]$ which is occupied by the dark-gray area. By using the shock analysis done above, as $q$ increases, we observe that having a connected 2-location system becomes more and more advantageous and resistant overall than an autarkic 2-location system. 

This conclusion directly translates in terms of policy: if the available quarantine level $q$ can be taken large enough, then allowing cross-country import-export is beneficial and preferable for systemic resistance to infection shocks.
On the contrary, two autarkic countries constitute a more resistant system against infection shocks when only a small level of quarantine $q$ is available.

%

\begin{figure}[tb]
	\centering
	\caption{\textbf{Intersection between separatrix $\mathcal C$ and boundaries of $[0,1]^2$}}
	\begin{minipage}{.4\textwidth}
		\centering
		\includegraphics[width=\textwidth]{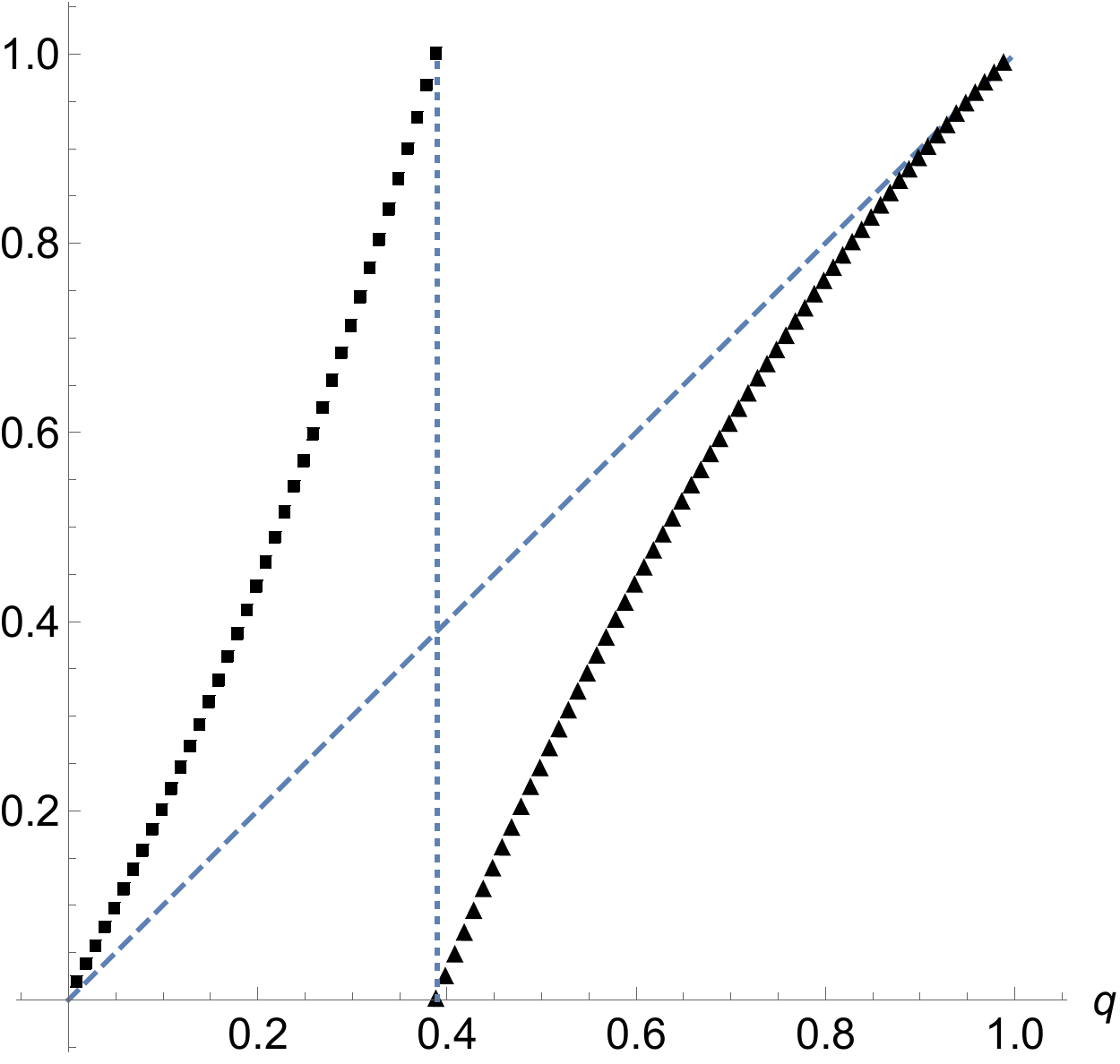}
	\end{minipage}
	\hfill
	\begin{minipage}{.56\textwidth}
		\centering
		\includegraphics[width=\textwidth]{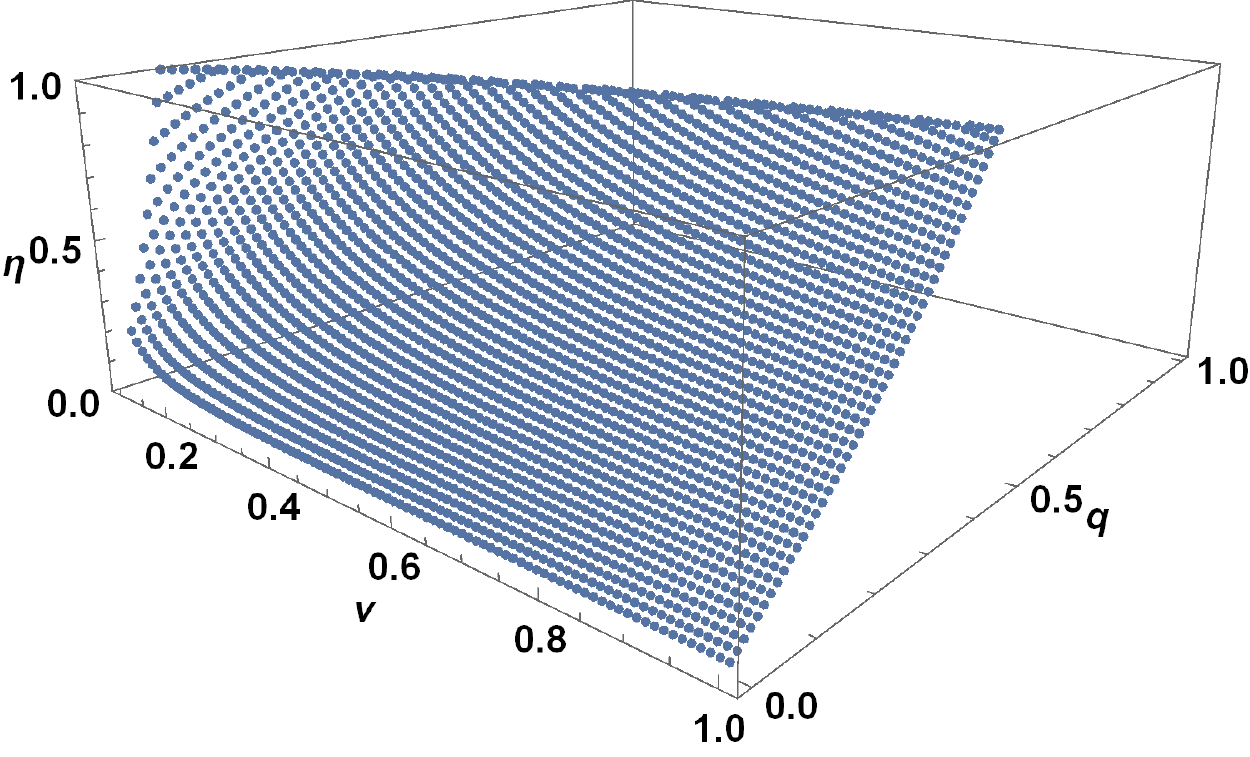}
	\end{minipage}
	\centering
	\caption*{\small On the left, intersection points $q\mapsto \eta(q,0.7)$ (squares) and $q\mapsto \zeta(q,0.7)$ (triangles), with fixed $\nu=0.7$. As $q$ increases, the separatrix $\mathcal C$ first crosses the horizontal segment $[q,1]\times\{0\}$ in $(\eta(q,\nu),0)$, then as $q$ exceeds a certain threshold ($q=0.39$ in this case, signaled by the dotted vertical line), $\mathcal C$ starts crossing the boundary in the vertical segment $\{1\}\times [0,q]$ in the point $(1,\zeta(q,\nu))$. The diagonal (dashed) shows that $\eta > q$ while $\zeta < q$. On the right, intersection $\eta(q,\nu)$ as a function of both parameters $(q,\nu)\in (0,1)^2$. All sections $\eta(\cdot,\nu)$ and $\eta(q,\cdot)$ are increasing.}
	\label{intersectionaxisseparatrix}
\end{figure}

%
%

\begin{figure}[tb]
	\centering
	\caption{\textbf{Ratio between the gray areas}}
	\begin{minipage}{.48\textwidth}
		\centering
		\includegraphics[width=\textwidth]{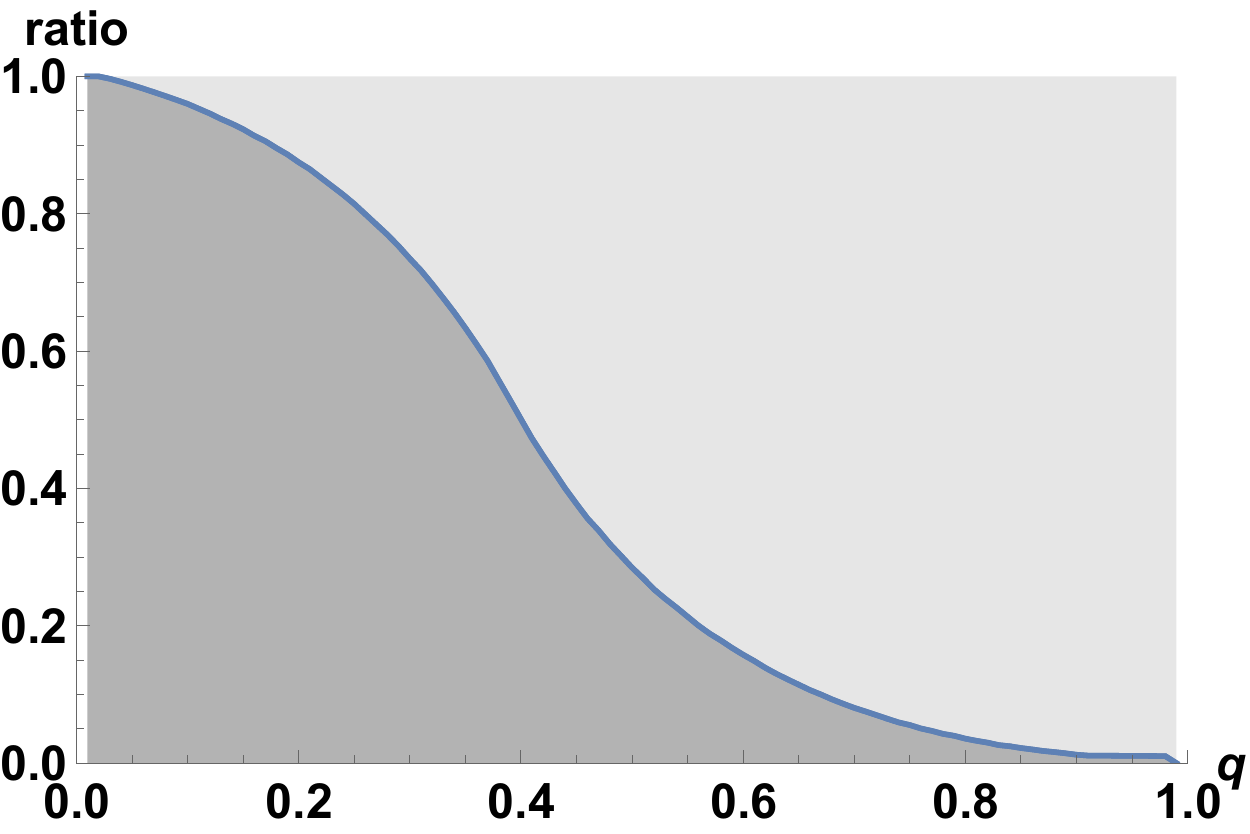}
	\end{minipage}
	\hfill
	\begin{minipage}{.48\textwidth}
		\centering
		\includegraphics[width=\textwidth]{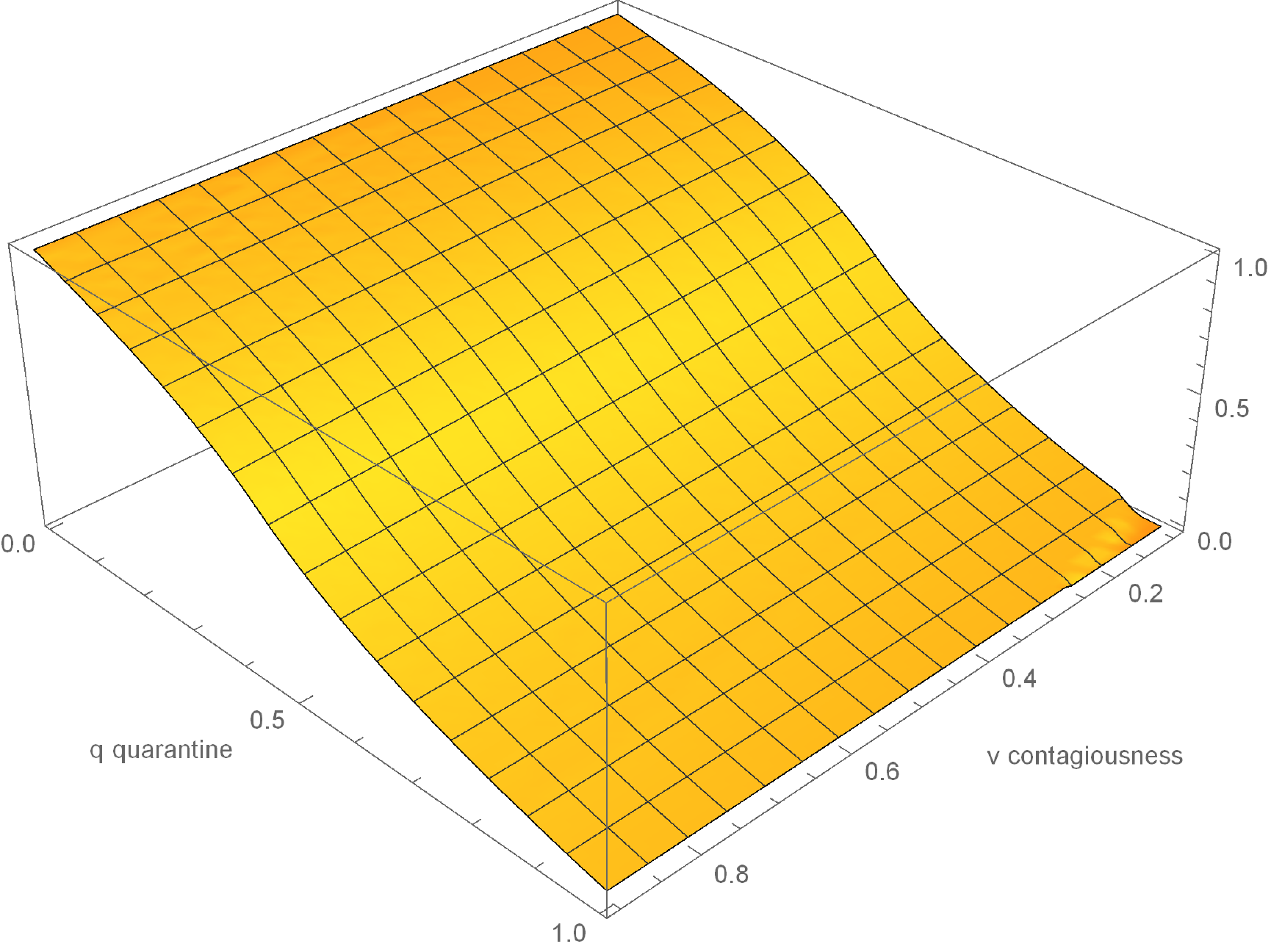}
	\end{minipage}
	\centering
	\caption*{\small On the left, ratio between the dark-gray area and the sum of the dark-gray plus light-gray areas (i.e. $[q,1]\times[0,q] \cup [0,q]\times [q,1]$) numerically obtained as a function of $q\in(0,1)$, holding fix $\nu = 0.7$. On the right, the ratio as a function of both parameters $(q,\nu)\in(0,1)^2$. As the quarantine $q$ increases, a globalized system becomes more and more convenient relative to an autarkic one.}
	\label{ratioorangeyellowareas}
\end{figure}

\section{Conclusions}
\label{sec_conclusion}

Starting from a very simple model of epidemic diffusion among homogeneous agents, we consider the case in which two identical countries are inhabited by such agents. These agents interact and trade with each other in (random) pairs to obtain benefits and, by doing so, they also spread a contagious disease among them, which lowers the attainable gain from trade. As a response to the infection risk, agents can choose to bear (heterogeneous) costs to interact with the agents present in the other country, establishing then a stylized form of cross-country import-export trade. By assuming that both countries have (limited and fixed) resources to intervene against the infection, we are also able to introduce the possibility of recovery, that is, of reducing the infection rate.

Given the epidemic parameters, we compare the resistance to exogenous shocks in infection rates of the ``autarkic'' system, in which the two countries are assumed not to trade with each other, with the resistance of the ``globalized'' system where, instead, cross-country trade is allowed. Overall, globalized systems result more ``extreme'' in their reaction to shocks with respect to autarkic systems. This is a consequence of the two countries being connected: on the one hand, the globalized system has a larger ``recovery capacity'' when facing relatively small shocks but, on the other hand, it has a larger area where both countries end up being completely infected. In particular, the main possibility which is precluded to globalized systems with respect to autarkic ones is a situation in which only one country is infected while the other is not.
On the contrary, ``autarkic'' systems offer a wider spectrum of possible outcomes resulting from infection shocks and, in particular, they exhibit partial endemic equilibria in which only one location is fully infected while the other is disease free.

By comparing how an autarkic system and a globalized system behave in response to shocks, we are able to understand their similarities and differences. The main result of this shock-resistance analysis is that the behavior of the two systems is substantially different especially when they are subjected to ``1-dimensional large shocks'': when infection shocks hit mainly one location (and only slightly the other), a globalized system either fully recovers or becomes fully infected, while an autarkic system could exhibit partial endemic equilibria, if exposed to the same shock. Depending on the amount of resources allocated to recovery, as measured by the quarantine level $q$ in our framework, a globalized system may be preferable when large resources for quarantine are available, whereas an autarkic system is preferable in case of low resources.

\bibliographystyle{chicago}
\bibliography{bibfile}

\setcounter{section}{1}

\appendix

\global\long\def\thesection{Appendix \Alph{section}}
\global\long\def\thesubsection{\Alph{section}.\arabic{subsection}}

\setcounter{theorem}{0} \global\long\def\thetheorem{\Alph{section}.\arabic{theorem}}

\section{More on the econometric analysis}
\label{sec: econometrics_appendix}

In this section we investigate whether the significant effect of the dummy $\text{Positive}$, observed in Table \ref{tab: empres}, could be a result of a selection process. To this aim, we have estimated a bivariate selection model by maximum likelihood estimation where the main equation is a Tobit model with distance as the dependent variable. The participation equation is a Probit, and estimates the probability of being active (i.e. sending at least one bovine) in quarter $t$.

Although the dependent variable in the main equation is -- when not censored -- continuous, the  identification of the model could depend only on distributional assumptions. For this reason, we have added an exclusion restriction in the participation equation, using data on rainfalls provided by the Italian Air Force (\textit{Centro Operativo Dati per la Meteorologia}). For each municipality, we have imputed the level of rainfalls and its deviation from its quarterly mean by averaging the three closest meteorological stations.\footnote{The meteo stations are around 115 with daily data covering the entire Italian territory.} We have thus included as a regressor in the participation equation the lagged value of the deviation of rainfalls from quarterly mean. Since reduced rainfalls at $t-1$ -- through a negative effect on the production of crops used for animal feed (hay, corn, etc.) -- lower the inflow of bovines in that municipality, this, in turn, is expected to decrease outflows at time $t$.

The bivariate model has been estimated using the \textsc{Stata}\textsuperscript{\textregistered} command \textit{cmp} developed by David Roodman.\footnote{See \cite{roodman2011fitting} for details.}

The coefficient of the dummy $\text{Positive}_{i,t-1}$ indicates that farms with a sick bovine at $t-1$ are less likely to be active at time $t-1$.
The deviation of rainfalls from quarterly mean has the expected positive and  statistically significant effect on the probability of sending cattle at time $t$.
The  $\rho$ coefficient, which estimates the correlation between error terms is negative and significant at 10\% , thus suggesting the presence of a weak negative selection effect. The estimated effect of $\text{Positive}_{i,t-1}$ in the Tobit main equation is, however, very close to the result shown in column 3 of Table \ref{tab: empres}.

\begin{table}[h]
	\caption{Bivariate selection model}
	\label{tab: biTobit}
	\centering
	\begin{tabular}{l@{\hspace*{10mm}}c@{\hspace*{10mm}}c}
		\hline\hline \vspace{-4mm}\\
		& Tobit         & Probit         			\\
		& Distance 		& Pr. active at time $t$   	\\
		\hline \vspace{-4mm}\\
		$\text{Positive}_{i,t-1}$   & 19.462***    	& -0.881*** 			\\
		& -5.591        & (0.047)   \vspace{2mm}\\
		$\text{Stock}_{it}$         & 0.0845***     & 0.0004***  \\
		& (0.001)       & (0.000)	\vspace{2mm}\\
		$\text{Rain\ Dev.\ from\ Mean}_{i,t-1}$ &    & 0.002***  \\
		&               & (0.001)   \vspace{2mm}\\
		Constant               		& 13.444***   	& 1.753***  \\
		& -1.104        & (0.012)   \vspace{2mm}\\
		$\sigma$                    &  90.523***    &            \\
		&  (0.0528)     &           \vspace{2mm}\\
		$\rho_{12}$ 				& \multicolumn{2}{c}{-0.0077*}	\\
		& \multicolumn{2}{c}{(0.0047)}\\
		\hline \vspace{-4mm}\\
		Observations           & 2,267,463     & 2,267,463 \\
		Log likelihood & \multicolumn{2}{c}{-10,207,407 }\\
		\hline\hline
	\end{tabular}
	\caption*{\footnotesize The bivariate Tobit/Probit model has been estimated using the \textsc{Stata}\textsuperscript{\textregistered} command \textit{cmp}. The regression includes time and regional effects. Standard errors clustered at the farm level are shown in parenthesis. Asterisks mean: *** significant at 1\%, ** significant at 5\%,* significant at 10\%.}
\end{table}

\newpage

 \section{Proofs for Sections \ref{sec_1island}, \ref{sec_model} and \ref{sec_discussion}}
 \label{sec: proofs}

 
 \begin{proof}[Proof of Proposition \ref{prop_equilibria_1location}]
 	The derivative $\frac{\de x}{\de t}$, which is a cubic function of $x$, has only three roots $x=0$, $x=q$ and $x=1$, where it becomes equal to 0. Moreover, it is strictly negative when $x \in (0,q)$ and strictly positive when $x \in (q,1)$.
 \end{proof}

 \begin{proof}[Proof of Proposition \ref{prop_system_well_defined}]
 	We want to show that the unit square $[0,1]^2$ is an invariant set under the dynamics defined by system (\ref{SIcubic}). In order to do that, we need to the vector field defining the system of equation, i.e. the right-hand side of (\ref{SIcubic}) as 2-dimensional function of $(x_A,x_B)$ is ``pointing toward the interior'' of the square, while restricted on the borders of it. More formally:
 	\begin{itemize}
 		\item	suppose that $x_A=0$. Then $\dot{x}_A = \nu_A (1-F_A)x_B F_B \geq 0$, for any $x_B \in [0,1]$, as wanted.
 		\item	Suppose, instead, that $x_A = 1$. By assumption, we have that $F_A = 1$ when $x_A = 1$, then
 		$$
 		\dot{x}_A = \nu_A (1-F_A)(1-x_B)F_B - x_A F_A = -1 < 0,
 		$$
 		as we wanted.
 	\end{itemize}
 	An analogous and symmetric reasoning shows that $\dot{x}_B \geq 0$, when $x_B = 0$, and that $\dot{x}_B \leq 0$, when $x_B = 1$.
 \end{proof}

 \begin{proof}[Proof of Proposition \ref{proposition_autarky}]
 	The vector field defining system (\ref{ODEautarky}) is of the form $\mathbf F(x_A,x_B) = (F_A(x_A), F_B(x_B))$, where $F_A(x) = F_B(x) = f(x) := \nu x(1-x)(x-q)$. Then, clearly the system is symmetric with respect to the diagonal, that is $\mathbf F(x_B,x_A) = \left(F_B(x_A,x_B), F_A(x_A,x_B)\right)$.
 	
 	Since $f(x) = 0$ if and only if $x=0$ or $x=q$ or $x=1$, then the equilibria of system (\ref{ODEautarky}) are: $(0,0)$, $(0,1)$, $(1,0)$, $(1,1)$, $(q,q)$, $(0,q)$, $(1,q)$, $(q,0)$ and $(q,1)$. Moreover, since $\dot{x}_A = 0$ when $x_A = 0$, this means that the line $x_A = 0$ in $\R^2$ cannot be crossed by the trajectories of the system. Analogously, the lines $x_A = q$, $x_A = 1$, $x_B = 0$, $x_B = q$, $x_B = 1$ cannot be crossed, which implies that they are invariant and that the unit square $[0,1]^2$ is also invariant under the dynamics defined by system (\ref{ODEautarky}).
 	
 	In order to evaluate the stability of such equilibria, it suffices to study the Jacobian of the system. Now, since the Jacobian is of the form
 	$$
 	\begin{pmatrix}
 	f'(x_A)	&	0		\\
 	0		& f'(x_B) 	\\
 	\end{pmatrix},
 	$$
 	where $f'(x) = \nu [(2-3x)x + q(2x-1)]$, and that $f'(0) = -\nu q < 0$, $f'(q) = \nu q (1-q) > 0$ and $f'(1) = -\nu (1-q)$ when $\nu, q \in (0,1)$, then the study of its eigenvalues simply says that: $(0,0)$, $(0,1)$, $(1,0)$ and $(1,1)$ are asymptotically stable, because the eigenvalues are both negative. The points $(0,q)$, $(q,0)$, $(1,q)$, $(q,1)$ are saddle point because they have eigenvalues of different sign. Lastly, $(q,q)$ is an unstable source point because both its eigenvalues are positive.
 \end{proof}

\section{Analysis of the linear case}
\label{app_linear}

We here study system \eqref{ODEmax}, which comes from the assumptions of agents' linear utility and uniform cost distributions.
In principle, the system is well defined in $\R^2$, but we will restrict our analysis to the unit square $(x_A,x_B) \in [0,1]^2$, in which the fractions of infected agents make sense.
It is continuously differentiable everywhere but the diagonal of $\R^2$, i.e. over $\R^2\setminus \{(x_A,x_B)\in\R^2:x_A=x_B\}$. 
However, thanks to the symmetry of the system guaranteed by the assumptions made, we can separate the analysis focusing on three different parts: the diagonal, the super-diagonal set and the sub-diagonal.
This allows us to use an \textit{ad hoc} strategy to obtain some explicit results. There are two asymptotically stable equilibria, $(x_A,x_B)=(1,1)$ and $(0,0)$, the first corresponding to both countries being fully infected, while the second to both being disease free. 
There is a third equilibrium, $(q,q)$ which is an unstable saddle point. 
Its separatrix curves separate the basins of attraction of the asymptotically stable states, as depicted in Figure \ref{basinattraction}. It is worth noting, though, that they are not explicitly characterizable.\footnote{There is no known way to analytically determine these curves, even in simple dynamical systems. Progresses have been made with their numerical approximations \citep{cavoretto2011approximation}.}

For ease of exposition, let us re-write system \eqref{ODEmax} in vector notation as follows:
\begin{equation}
\label{eq: ODEmax_vector}
\frac{\de}{\de t} (x_A,x_B) = \mathbf V(x_A,x_B),
\end{equation}
where $\mathbf V(x_A,x_B) := \left(V_A(x_A,x_B),V_B(x_A,x_B)\right)$ for all $(x_A,x_B) \in \R^2$ and $V_A$, $V_B$ are the 2-variable, real-valued functions defined respectively by the first and second row of \eqref{ODEmax}. Let us also denote the diagonal by $D := \{(x_A,x_B) \in \R^2 :\, x_A = x_B \}$, and the sets above and below the diagonal respectively by $\Delta^+ := \{(x_A,x_B) \in \R^2 :\, x_A < x_B \}$ and $\Delta^- := \{(x_A,x_B) \in \R^2 :\, x_A > x_B \}$.

\begin{lemma}
\label{lemma_V_symmetric}
	The vector field $\mathbf V$ is symmetric with respect to the diagonal $D$, that is, for all $(x_A,x_B) \in \R^2$:
	$$
	\mathbf V(x_B,x_A) \equiv \left(V_B(x_A,x_B),V_A(x_A,x_B)\right).
	$$
\end{lemma}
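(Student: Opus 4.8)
The plan is to verify the identity by direct substitution, exploiting the fact that the second row of system \eqref{ODEmax} is, by construction, the first row with the roles of $x_A$ and $x_B$ interchanged. Concretely, I would first record the explicit expression read off from the first row,
$$
V_A(x_A,x_B) = \nu\bigl(1-\max\{0,x_A-x_B\}\bigr)\Bigl[x_A(1-x_A)(x_A-q)\bigl(1-\max\{0,x_A-x_B\}\bigr) + (x_A+x_B-2x_Ax_B)\max\{0,x_B-x_A\}\Bigr] - x_A\max\{0,x_A-x_B\},
$$
together with the analogous $V_B$ obtained from the second row.

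The core step is to compute $V_A(x_B,x_A)$ by performing the substitution $x_A\mapsto x_B$, $x_B\mapsto x_A$ in every occurrence in the formula above, and to check term by term that the result coincides with $V_B(x_A,x_B)$. Three observations make this immediate: the substitution sends $\max\{0,x_A-x_B\}$ to $\max\{0,x_B-x_A\}$ and vice versa, so the two $\max$ factors move into exactly the positions they occupy in the $V_B$ row; the polynomial factor $x_A(1-x_A)(x_A-q)$ becomes $x_B(1-x_B)(x_B-q)$; and the cross-term coefficient $x_A+x_B-2x_Ax_B$ is \emph{symmetric} under the swap, i.e. $x_B+x_A-2x_Bx_A = x_A+x_B-2x_Ax_B$, so it is left unchanged. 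Collecting these, $V_A(x_B,x_A)$ is seen to equal $V_B(x_A,x_B)$ identically on $\R^2$. Running the same substitution on the second row gives $V_B(x_B,x_A)=V_A(x_A,x_B)$, so that
$$
\mathbf V(x_B,x_A) = \bigl(V_A(x_B,x_A),\,V_B(x_B,x_A)\bigr) = \bigl(V_B(x_A,x_B),\,V_A(x_A,x_B)\bigr),
$$
which is exactly the claimed identity.

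There is no genuine obstacle here beyond careful bookkeeping; the content of the lemma is really that the \emph{ad hoc} symmetric construction of \eqref{ODEmax} produces a vector field invariant under reflection across $D$. The one point that requires an actual (one-line) verification rather than mere relabeling is the invariance of the bilinear inflow coefficient $x_A+x_B-2x_Ax_B$, since this is the term encoding the symmetric cross-country meetings between susceptibles of one location and infectives of the other; its symmetry is precisely what lets those inflow contributions match across the two rows after the swap. Everything else follows syntactically from how the system was written down, and the resulting identity is the foundation for later restricting the analysis to the diagonal and to the sub- and super-diagonal triangles $\Delta^-$ and $\Delta^+$.
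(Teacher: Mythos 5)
Your proposal is correct and takes the same route as the paper, which simply states that the claim follows directly from the definition of $\mathbf V$; you merely spell out the term-by-term substitution, including the key observation that $x_A+x_B-2x_Ax_B$ is invariant under the swap. Nothing further is needed.
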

\begin{proof}
	The proof follows directly from the definition of $\mathbf V$.
\end{proof}

The vector field $\mathbf V$ can be seen as constituted by three basic pieces, all of which are defined over the entire $\R^2$ but such that they coincide with $\mathbf V$ itself when appropriately restricted on the sets $D$, $\Delta^+$ and $\Delta^-$. The following lemma formalizes this idea.

\begin{lemma}
\label{lemma_V_three_pieces}
	\ 
	\begin{enumerate}
	\item	The vector field $\mathbf V$ when restricted on the diagonal $D$ coincides with 
	$$
	\mathbf V^D(x_A, x_B) := 
	\begin{pmatrix}
	\nu x_A (1 - x_A)(x_A - q) \\
	\nu x_B (1 - x_B)(x_B - q)\\
	\end{pmatrix},
	$$
	which, in turn, is well defined over $\R^2$.
	
	\item 	The vector field $\mathbf V$ when restricted on $\Delta^-$ coincides with
	$$
	\mathbf V^-(x_A, x_B) := 
	\begin{pmatrix}
	\nu (1-x_A+x_B) \Big[ x_A(1-x_A)(x_A-q)(1-x_A+x_B) \Big] - x_A (x_A-x_B) \\
	\nu \Big[ x_B(1-x_B)(x_B-q) + (x_A+x_B-2x_Ax_B)(x_A-x_B) \Big]
	\end{pmatrix}.
	$$
	
	\item 	The vector field $\mathbf V$ when restricted on $\Delta^+$ coincides with
	$$
	\mathbf V^+(x_A, x_B) := 
	\begin{pmatrix}
	\nu \Big[ x_A(1-x_A)(x_A-q) + (x_A+x_B-2x_Ax_B)(x_B-x_A) \Big] \\
	\nu (1-x_B+x_A) \Big[ x_B(1-x_B)(x_B-q)(1-x_B+x_A) \Big] - x_B (x_B-x_A)
	\end{pmatrix}.
	$$
	\end{enumerate}
\end{lemma}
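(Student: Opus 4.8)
The plan is to prove all three claims by direct substitution, exploiting the fact that the cumulative distributions reduce to $F_A = \max\{0, x_A - x_B\}$ and $F_B = \max\{0, x_B - x_A\}$, which take explicit, non-piecewise values once we fix on which side of the diagonal $D$ we are. The key observation is that at most one of $F_A$, $F_B$ can be strictly positive at any point, so in each region one of them vanishes identically and the surviving terms of system (\ref{ODEmax}) collapse to a genuine polynomial in $x_A, x_B$.

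First, on the diagonal $D$ we have $x_A = x_B$, hence $F_A = F_B = 0$. Substituting into (\ref{ODEmax}), every term carrying a factor $F_A$ or $F_B$ drops out, leaving exactly $\nu x_A(1-x_A)(x_A-q)$ in the first component and the symmetric expression in the second, which is precisely $\mathbf V^D$. This recovers the single-location dynamics of equation (\ref{SI_cubic}), as expected when no export occurs.

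Next, on $\Delta^-$ we have $x_A > x_B$, so $F_A = x_A - x_B > 0$ while $F_B = 0$. I would substitute $1 - F_A = 1 - x_A + x_B$ into the first equation and observe that the inflow term multiplying $F_B$ disappears, giving the first component of $\mathbf V^-$; for the second equation, setting $F_B = 0$ collapses the prefactor $\nu(1-F_B)$ to $\nu$ and the within-country cubic to $x_B(1-x_B)(x_B-q)$, while the cross-country term $(x_A+x_B-2x_Ax_B)F_A$ survives with $F_A = x_A - x_B$ and the outflow $x_B F_B$ vanishes, yielding the second component of $\mathbf V^-$. The $\Delta^+$ case is then immediate from Lemma \ref{lemma_V_symmetric}: swapping $x_A \leftrightarrow x_B$ in $\mathbf V^-$ and reordering the components produces $\mathbf V^+$, so I would either appeal to that symmetry directly or repeat the substitution with $F_B = x_B - x_A$ and $F_A = 0$.

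No genuine obstacle arises; the statement is essentially a bookkeeping exercise confirming that the piecewise $\max$ functions have been correctly replaced by polynomials in each region. The only point requiring care is checking that each of $\mathbf V^D$, $\mathbf V^-$, $\mathbf V^+$ extends to a smooth (indeed polynomial) vector field on all of $\R^2$ --- which it does, since the substituted expressions contain no remaining $\max$ --- so that the subsequent analysis may treat the dynamics on each closed region as governed by a globally defined field and study the behavior across the diagonal by one-sided limits.
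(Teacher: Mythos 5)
Your proof is correct and follows essentially the same route as the paper: substituting $F_A=\max\{0,x_A-x_B\}$ and $F_B=\max\{0,x_B-x_A\}$ region by region so that one of them vanishes and the $\max$ expressions become polynomials. The added observations (the symmetry shortcut for $\Delta^+$ and the polynomial extension to all of $\R^2$) are consistent with what the paper uses later and do not change the argument.
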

\begin{proof}
	When $(x_A,x_B) \in D$, then $\max\{0,x_A - x_B\} = \max\{0,x_B - x_A\} = 0$. From this, the first point follows from the computation of $\mathbf V$ as defined by \eqref{ODEmax}.
	
	The second point follows because when $(x_A,X_B) \in \Delta^-$, then $\max\{0,x_A - x_B\} = x_A - x_B$ while $\max\{0,x_B - x_A\} = 0$. Analogously for the third point.
\end{proof}

\begin{proposition}
\label{proposition_system_well_defined}
	System \eqref{ODEmax} is symmetric with respect to the diagonal and it is well defined in $\R^2$. The diagonal $D$, the sets $\Delta^+$ and $\Delta^-$ are all invariant with respect to the dynamics defined by system \eqref{ODEmax}.
\end{proposition}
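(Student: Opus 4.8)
The plan is to dispatch the four assertions in turn, relying throughout on Lemmas~\ref{lemma_V_symmetric} and~\ref{lemma_V_three_pieces}. Symmetry with respect to the diagonal is immediate, since it is precisely the statement of Lemma~\ref{lemma_V_symmetric}; nothing further is required beyond citing it.

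For well-definedness I would first note that $\mathbf V$ is continuous on all of $\R^2$: the only terms that could spoil continuity are $\max\{0,x_A-x_B\}$ and $\max\{0,x_B-x_A\}$, but $\max\{0,\cdot\}$ is continuous and in fact $1$-Lipschitz, and $\mathbf V$ is assembled from these functions and polynomials by finitely many sums and products. On every bounded subset of $\R^2$ such operations preserve Lipschitz continuity, so $\mathbf V$ is locally Lipschitz. By the Picard--Lindel\"of theorem, through each initial point there then passes a unique maximal solution, which is the meaning of ``well defined''. The subtle point to flag is that $\mathbf V$ is \emph{not} $C^1$ across $D$; I would emphasize that Picard--Lindel\"of needs only local Lipschitz continuity, not differentiability, so uniqueness is not affected. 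That the three smooth pieces of Lemma~\ref{lemma_V_three_pieces} agree on $D$ also confirms $\mathbf V$ is single-valued there.

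The diagonal invariance is the first claim with real content. By Lemma~\ref{lemma_V_three_pieces}(1), on $D$ the two components of $\mathbf V$ coincide, so along $D$ the field is parallel to the direction $(1,1)$. Concretely, the curve $t \mapsto (y(t),y(t))$, where $y$ solves the scalar building-block equation $\dot y = \nu\,y(1-y)(y-q)$ of \eqref{SI_cubic}, is a genuine solution of \eqref{ODEmax} lying entirely in $D$. By the uniqueness established above, it is the only solution through any of its points, so every trajectory issued from a point of $D$ remains in $D$; that is, $D$ is invariant.

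Finally, I would obtain invariance of $\Delta^+$ and $\Delta^-$ by a no-crossing argument. Since $\R^2 = \Delta^+ \sqcup D \sqcup \Delta^-$ and $D$ is invariant, a trajectory starting in the open set $\Delta^-$ cannot reach $D$: were it to do so, at the first such time it would occupy a point of $D$ through which the diagonal solution already passes, and uniqueness would force the two trajectories to coincide for all times---contradicting that the trajectory started strictly off the diagonal. Hence it stays in $\Delta^-$, and the identical argument gives invariance of $\Delta^+$. The main obstacle throughout is the loss of smoothness on $D$; once this is handled by invoking only Lipschitz continuity for uniqueness, both the well-posedness and all three invariance statements follow cleanly.
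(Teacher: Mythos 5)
Your proof is correct, but it is organized differently from the paper's. The paper argues in the order symmetry $\Rightarrow$ invariance of $D$, $\Delta^+$, $\Delta^-$ $\Rightarrow$ well-definedness, where the last step uses the decomposition of Lemma~\ref{lemma_V_three_pieces} to say that on each invariant piece the dynamics is governed by one of the smooth fields $\mathbf V^D$, $\mathbf V^-$, $\mathbf V^+$. You instead front-load well-posedness: you observe that $\mathbf V$ is globally locally Lipschitz (since $\max\{0,\cdot\}$ is $1$-Lipschitz and the rest is polynomial), invoke Picard--Lindel\"of for existence and uniqueness on all of $\R^2$, and only then derive invariance of $D$ from the explicit diagonal solution of \eqref{SI_cubic} and invariance of $\Delta^{\pm}$ from a no-crossing argument based on (backward) uniqueness. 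This buys you something real: the paper's step ``symmetry implies $D$ is invariant'' is only a tangency statement until uniqueness of solutions is available, so its ordering has an implicit circularity (invariance is used to get well-definedness, but invariance itself needs uniqueness). Your version makes the uniqueness input explicit and is logically self-contained; the paper's version is shorter and leans on the piecewise-smooth decomposition, which it reuses heavily in the subsequent stability analysis. Both establish the same facts, and your appeal to Lemma~\ref{lemma_V_symmetric} for the symmetry claim matches the paper exactly.
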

\begin{proof}
	The symmetry of system \eqref{ODEmax} in $\R^2$ follows from that of $\mathbf V$. This implies that the diagonal $D$ has to be invariant and, consequently, also $\Delta^+$, $\Delta^-$ have to be invariant.
	
	Because of the invariance, it suffices to show that the system is well defined when restricted on each of $D$, $\Delta^+$ and $\Delta^-$. From the previous Lemma \ref{lemma_V_three_pieces}, it follows that the system is well defined because $\mathbf V^{D}$, $\mathbf V^-$ and $\mathbf V^+$ are smooth on $\R^2$ and, in particular, on $D$, $\Delta^-$ and $\Delta^+$ respectively.
\end{proof}

\begin{proposition}
\label{proposition_unit_square_invariant}
	The unit square $[0,1]^2$ is invariant with respect to the dynamics defined by system \eqref{ODEmax}. 
\end{proposition}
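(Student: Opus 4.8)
The plan is to establish that $[0,1]^2$ is invariant by verifying the standard flow-invariance (Nagumo-type) condition: at every point of the boundary the vector field $\mathbf V$ is not outward-pointing, so that no trajectory starting inside the square can ever cross out of it. This mirrors exactly the argument already used in the proof of Proposition \ref{prop_system_well_defined} for the general system. The regularity needed to make the ``no-crossing'' conclusion legitimate is supplied by Lemma \ref{lemma_V_three_pieces}: since $\mathbf V$ is continuous on $\R^2$ and smooth on each of $D$, $\Delta^+$, $\Delta^-$, it is locally Lipschitz, so solutions are unique.

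The boundary of the square consists of four edges, but by the symmetry recorded in Lemma \ref{lemma_V_symmetric} it suffices to analyze the sign of $V_A$ on the two vertical edges $\{0\}\times[0,1]$ and $\{1\}\times[0,1]$; the required sign conditions on $V_B$ along the horizontal edges then follow automatically. On each edge the point has a definite position relative to the diagonal, so the two $\max\{\cdot\}$ terms appearing in \eqref{ODEmax} collapse to explicit expressions, which is what makes the computation short. First I would set $x_A=0$ with $x_B\in[0,1]$: here $\max\{0,x_A-x_B\}=0$ and $\max\{0,x_B-x_A\}=x_B$, while the cubic factor $x_A(1-x_A)(x_A-q)$ vanishes, leaving $V_A=\nu\,x_B^{2}\ge 0$, so the field points into (or along) the square on the left edge. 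Next I would set $x_A=1$ with $x_B\in[0,1]$: here $\max\{0,x_A-x_B\}=1-x_B$ and $\max\{0,x_B-x_A\}=0$, and the factor $x_A(1-x_A)$ again annihilates the bracketed term, leaving $V_A=-(1-x_B)=x_B-1\le 0$, so the field points into the square on the right edge. Applying symmetry to the horizontal edges completes the verification that $\mathbf V$ is inward-pointing along the entire boundary, which yields the claimed invariance.

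I do not expect a genuine obstacle, only one point requiring care: $\mathbf V$ fails to be $C^1$ across the diagonal $D$, so I cannot run a single smooth normal-flow computation over all of $[0,1]^2$ at once. This is handled by the decomposition of Lemma \ref{lemma_V_three_pieces} (continuity plus piecewise smoothness giving local Lipschitz regularity) together with the invariance of $D$, $\Delta^+$ and $\Delta^-$ already established in Proposition \ref{proposition_system_well_defined}. The boundary computations above never evaluate $\mathbf V$ on $D$ except at the corners, and there the sign conditions remain consistent by continuity: for instance at $(1,0)$ one gets $\mathbf V=(-1,\nu)$ by the symmetry of Lemma \ref{lemma_V_symmetric}, which lies in the tangent cone of the square, while the corners $(0,0)$ and $(1,1)$ are equilibria and hence cause no difficulty.
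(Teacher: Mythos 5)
Your proposal is correct and follows essentially the same route as the paper: the paper's proof also reduces the claim to a sign check of the normal component of $\mathbf V$ on the boundary edges (Lemma \ref{lemma_V_signs}, which computes $V_B(x_A,0)=\nu x_A^2>0$ and $V_A(1,x_B)=-(1-x_B)<0$ and invokes the diagonal symmetry for the remaining two edges), exactly as you do with the equivalent pair of vertical edges. Your additional remarks on the loss of $C^1$ regularity across the diagonal and on the corners are sound refinements that the paper leaves implicit.
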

\begin{proof}
	The following Lemma \ref{lemma_V_signs} implies that, on the borders of the unit square, the vector field $\mathbf V$ points towards the interior.
\end{proof}

\begin{lemma}
\label{lemma_V_signs}
	$V_B(\cdot,\cdot) > 0$ on the segment $(0,1)\times \{0\}$ and $V_A < 0$ on the segment $\{1\} \times (0,1)$. Consequently, by symmetry, $V_B < 0$ on $(0,1)\times \{1\}$ and $V_A > 0$ on $\{0\} \times (0,1)$.
\end{lemma}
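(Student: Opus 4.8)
The plan is to verify all four sign statements by direct evaluation of the vector field on each boundary segment, exploiting the fact that the sign of $x_A - x_B$ is constant and known along each segment. This resolves the $\max$ terms and lets me invoke the explicit pieces from Lemma \ref{lemma_V_three_pieces} rather than wrestling with the raw expression \eqref{ODEmax}. Indeed, both segments of interest lie in the closure of $\Delta^-$: on $(0,1)\times\{0\}$ one has $x_A > 0 = x_B$, and on $\{1\}\times(0,1)$ one has $1 = x_A > x_B$. In each case $\max\{0,x_A-x_B\} = x_A - x_B$ while $\max\{0,x_B-x_A\} = 0$, so the relevant component of $\mathbf V$ agrees with that of $\mathbf V^-$.

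For $V_B$ on $(0,1)\times\{0\}$, I would substitute $x_B = 0$ into the second component of $\mathbf V^-$. The term $x_B(1-x_B)(x_B-q)$ vanishes, and the surviving cross term $(x_A+x_B-2x_Ax_B)(x_A-x_B)$ collapses to $x_A^2$, leaving $V_B = \nu x_A^2$. Since $\nu \in (0,1)$ and $x_A \in (0,1)$, this is strictly positive, giving $V_B > 0$. For $V_A$ on $\{1\}\times(0,1)$, I would substitute $x_A = 1$ into the first component of $\mathbf V^-$: the factor $x_A(1-x_A)$ annihilates the entire bracketed polynomial term, so only the outflow contribution $-x_A(x_A-x_B) = -(1-x_B)$ remains. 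Since $x_B \in (0,1)$, this equals $-(1-x_B) < 0$, giving $V_A < 0$.

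Finally, the two ``consequently'' claims follow with no further computation from the symmetry Lemma \ref{lemma_V_symmetric}. Reading off $\mathbf V(x_B,x_A) = (V_B(x_A,x_B), V_A(x_A,x_B))$ yields the identities $V_B(x_A,1) = V_A(1,x_A)$ and $V_A(0,x_B) = V_B(x_B,0)$, so the signs on $(0,1)\times\{1\}$ and $\{0\}\times(0,1)$ are inherited from the two already established. I expect no genuine obstacle in this lemma, as it is purely computational; the only point requiring care is the open-segment restriction, which is precisely what excludes the four corners $(0,0)$, $(1,0)$, $(1,1)$, $(0,1)$. These are equilibria where $\mathbf V$ vanishes, so the strict inequalities could not be expected to hold there.
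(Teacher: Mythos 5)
Your proof is correct and follows essentially the same route as the paper: direct evaluation on the boundary segments giving $V_B(x_A,0)=\nu x_A^2>0$ and $V_A(1,x_B)=-(1-x_B)<0$, with the remaining two signs obtained from the symmetry of $\mathbf V$. Your version is slightly more explicit about resolving the $\max$ terms via $\mathbf V^-$, but the substance is identical.
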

\begin{proof}
	From the definition, it follows that for all $x_A \in (0,1)$, it holds that $V_B(x_A, 0) = \nu x_A^2 > 0$. Analogously, for all $x_B \in (0,1)$, it holds that $V_A(1,x_B) = -(1-x_B) < 0$.
\end{proof}

Now we can show that system \eqref{ODEmax} has 3 critical points, specifically with  $(q,q)$ being a saddle point, whose separatrix curves naturally form the boundaries of the basins of attraction of the asymptotically stable points $(0,0)$ and $(1,1)$.

\begin{proposition} 
	\label{prop_app2}
	
	System \emph{(\ref{ODEmax})} has (only) three equilibria:
	\begin{itemize}
		\item	$(x_A,x_B)=(0,0)$ and $(1,1)$, which are asymptotically stable states;
		\item	$(x_A,x_B)=(q,q)$, which is an unstable saddle point.
	\end{itemize}
	Moreover, the two separatrix curves of the saddle $(q,q)$ are such that the unstable one coincide with the diagonal of the square $\{(x_A,x_B)\in [0,1]^2: x_A = x_B\}$, while the stable separatrix are part of the boundary of the basins of attraction of the stable equilibria.
\end{proposition}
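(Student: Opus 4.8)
The plan is to exploit the decomposition of the vector field into the three smooth pieces $\mathbf V^D,\mathbf V^+,\mathbf V^-$ of Lemma \ref{lemma_V_three_pieces}, together with the symmetry (Lemma \ref{lemma_V_symmetric}) and the invariance of $D$, $\Delta^+$, $\Delta^-$ and of the unit square (Propositions \ref{proposition_system_well_defined} and \ref{proposition_unit_square_invariant}). Restricted to the invariant diagonal $D$ the system collapses to the one-dimensional cubic $\dot x = \nu x(1-x)(x-q)$ already studied in Proposition \ref{prop_equilibria_1location}, whose only zeros are $x=0,q,1$; hence the only equilibria lying on $D$ are $(0,0)$, $(q,q)$ and $(1,1)$. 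It then remains to rule out equilibria off the diagonal and to classify the three diagonal ones.

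The core step, and the one I expect to be most delicate, is showing that $\mathbf V$ has no zero in the open half $\Delta^-$ (the case $\Delta^+$ following by symmetry). I would argue by localisation. If $\mathbf V^-$ vanishes at an interior point, then from $V_B^-=0$ and the strict positivity of $x_A+x_B-2x_Ax_B=x_A(1-x_B)+x_B(1-x_A)$ on $(0,1)^2$ one is forced into $x_B<q$, while from $V_A^-=0$, after factoring out $x_A>0$, the right-hand side $x_A-x_B>0$ forces $x_A>q$. Thus any off-diagonal equilibrium would have to sit in the rectangle $R=(q,1)\times(0,q)$. But on $R$ one has the chain of strict inequalities $\nu\,(1-x_A+x_B)^2(1-x_A)(x_A-q)<x_A-q<x_A-x_B$, the first because each of $\nu$, $(1-x_A+x_B)^2=(1-(x_A-x_B))^2$ and $(1-x_A)$ lies strictly below $1$ there, the second because $x_B<q$; this says precisely that $V_A^-<0$ throughout $R$, a contradiction. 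Equilibria on the edges of the square are excluded directly by Lemma \ref{lemma_V_signs} (and the corners $(1,0)$, $(0,1)$ by a one-line evaluation giving $\dot x_A=-1$), so the three diagonal points are the only equilibria.

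To classify them I would compute the one-sided Jacobians of $\mathbf V^-$: since $\mathbf V$ is only continuous across $D$, I work on each invariant half with its smooth extension and recover the other half by symmetry. At $(q,q)$ the one-sided Jacobian has $(1,1)$ as eigenvector with eigenvalue $\nu q(1-q)>0$ and trace $-q$, hence a transverse eigenvalue $-q\,[1+\nu(1-q)]<0$: the point is a saddle whose unstable direction is the diagonal and whose stable direction is transverse. Since $D$ is invariant and the reduced cubic pushes points on $D$ away from $q$ toward $0$ and $1$, the diagonal is exactly the unstable separatrix. The linearisations at $(0,0)$ and $(1,1)$ (again one-sided, completed by symmetry and by the one-dimensional behaviour on $D$) have two negative eigenvalues, so both are asymptotically stable, and invariance of $D$, $\Delta^+$, $\Delta^-$ guarantees that a full neighbourhood in $[0,1]^2$ is attracted.

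Finally, to identify the stable separatrix with the common boundary of the two basins, I would observe that, since $\Delta^+$ and $\Delta^-$ are open, convex and contain no equilibria, no periodic orbit can exist (a planar cycle must enclose an equilibrium, and a cycle inside $\Delta^{\pm}$ would enclose a point of $\Delta^{\pm}$); as the square is compact and positively invariant, Poincar\'e--Bendixson then forces every trajectory to converge to one of the three equilibria. Consequently the unit square splits into the two open basins of $(0,0)$ and $(1,1)$, separated by the one-dimensional stable manifold $\mathcal C$ of the saddle, whose finer geometry is the content of Proposition \ref{proposition_separatrix}. The main obstacle throughout is the lack of differentiability of $\mathbf V$ on $D$; my strategy for handling it is systematically to replace $\mathbf V$ by the smooth pieces $\mathbf V^{D},\mathbf V^{\pm}$ on the invariant sets where they agree with it, and to recover the behaviour on $D$ from symmetry.
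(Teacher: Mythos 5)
Your proof is correct and follows essentially the same route as the paper's: it relies on the decomposition into the smooth pieces $\mathbf V^{D},\mathbf V^{\pm}$ of Lemma \ref{lemma_V_three_pieces}, locates the equilibria on the invariant diagonal, and classifies them via the one-sided Jacobians of Lemma \ref{lemma_jacobian}. In fact you supply two steps the paper leaves implicit --- the explicit exclusion of off-diagonal equilibria via the sign of $V_A^-$ on $(q,1)\times(0,q)$ (the paper's own sign lemma) and the Poincar\'e--Bendixson argument identifying the stable separatrix with the basin boundary --- so your write-up is, if anything, more complete than the one in \ref{sec: proofs} and \ref{app_linear}.
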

\begin{proof}
	The proof follows directly by using Lemma \ref{lemma_V_three_pieces} and then applying Lemma \ref{lemma_stability_pieces}.
\end{proof}

\begin{lemma}
	\label{lemma_jacobian}
	Consider $\mathbf V^-$, as defined in Lemma \ref{lemma_V_three_pieces}. Then its Jacobian $Jac^-(x_A,x_B) := \left(\frac{\partial \mathbf V^-(x_A,x_B)}{\partial x_A} | \frac{\partial \mathbf V^-(x_A,x_B)}{\partial x_B}\right)$ when evaluated:
	\begin{itemize}
		\item	in $(0,0)$, it has both eigenvalues equal to $-q\nu < 0$, for all $q, \nu \in (0,1)$;
		\item	in $(1,1)$, it has eigenvalues equal to $-(1-q)\nu$ and $-1-(1-q)\nu$, which are both negative for all $q, \nu \in (0,1)$;
		\item	in $(q,q)$, it has eigenvalues equal to $q\nu(1-q) > 0$, $-q(1+\nu(1-q))<0$ and corresponding eigenvectors equal to $(1,1)$ and $\left(\frac{1}{-2(1-q)\nu}, 1 \right)$.
	\end{itemize}
	Consider $\mathbf V^+$. Analogously, its Jacobian has negative eigenvalues when evaluated in the points $(0,0)$ and $(1,1)$. Whereas, when evaluated in $(q,q)$, the eigenvalues are the same as above, $q\nu(1-q) > 0$ and $-q(1+\nu(1-q))<0$, but their corresponding eigenvectors are $(1,1)$ and $\left(-2(1-q)\nu, 1 \right)$.
\end{lemma}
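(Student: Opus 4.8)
The plan is to treat this as a direct Jacobian computation for $\mathbf V^-$, evaluate at the three equilibria, read off the eigen-data, and then obtain the $\mathbf V^+$ statement from symmetry rather than repeating the work. First I would write out the two components from Lemma \ref{lemma_V_three_pieces}, namely $V_A^- = \nu\, g(x_A)(1-x_A+x_B)^2 - x_A(x_A-x_B)$ and $V_B^- = \nu\bigl[g(x_B) + (x_A+x_B-2x_Ax_B)(x_A-x_B)\bigr]$, where $g(x) := x(1-x)(x-q) = -x^3+(1+q)x^2-qx$. The only facts about $g$ I will need are $g(0)=g(q)=g(1)=0$ together with $g'(0)=-q$, $g'(1)=-(1-q)$ and $g'(q)=q(1-q)$. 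Differentiating gives four partials; the observation that keeps everything short is that all three evaluation points lie on the diagonal, so at each of them $x_A-x_B=0$ and $1-x_A+x_B=1$, and since $g$ vanishes at $0,q,1$ most product-rule terms (in particular the cross term $-2\nu\, g\,(1-x_A+x_B)$ coming from squaring $1-x_A+x_B$) drop out.

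Second, I would evaluate at the disease-free and fully-endemic points. At both $(0,0)$ and $(1,1)$ the $(2,1)$ entry $\nu\bigl[(1-2x_B)(x_A-x_B)+(x_A+x_B-2x_Ax_B)\bigr]$ vanishes, because both $x_A-x_B=0$ and $x_A+x_B-2x_Ax_B=0$ there, so $Jac^-$ is upper triangular and the eigenvalues are just its diagonal entries: $-q\nu$ (doubled) at $(0,0)$, and $-1-(1-q)\nu$ together with $-(1-q)\nu$ at $(1,1)$, all negative for $q,\nu\in(0,1)$.

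Third, at $(q,q)$ the matrix is genuinely non-triangular; writing $a := \nu q(1-q)$ it takes the form $\left(\begin{smallmatrix} a-q & q \\ 2a & -a \end{smallmatrix}\right)$, with trace $-q$ and determinant $-a(a+q)$. The one pleasant step is that the discriminant $q^2+4a(a+q)=(q+2a)^2$ is a perfect square, whence the eigenvalues are $\tfrac{-q\pm(q+2a)}{2}$, i.e. $a=q\nu(1-q)>0$ and $-q-a=-q(1+\nu(1-q))<0$. Solving $(Jac^- - \lambda I)v=0$ for each then returns the eigenvectors $(1,1)$ (the row $-q v_1+q v_2=0$) and $\bigl(\tfrac{1}{-2(1-q)\nu},1\bigr)$ (the row $2a v_1 + q v_2 = 0$), matching the claim.

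Finally, for $\mathbf V^+$ I would not recompute but invoke the reflection symmetry of Lemma \ref{lemma_V_symmetric}. Letting $S$ denote the swap matrix, equivariance gives $\mathbf V^+(x_A,x_B)=S\,\mathbf V^-(S(x_A,x_B))$, so at any point fixed by $S$ — in particular all three diagonal equilibria — the chain rule yields the conjugated Jacobian $Jac^+ = S\,Jac^-\,S$. Since conjugation preserves eigenvalues and sends each eigenvector $v$ to $Sv$, the eigenvalues agree with the $\mathbf V^-$ case, while at $(q,q)$ the eigenvectors become $S(1,1)=(1,1)$ and $S\bigl(\tfrac{1}{-2(1-q)\nu},1\bigr)=\bigl(1,\tfrac{1}{-2(1-q)\nu}\bigr)$, the latter rescaling to $(-2(1-q)\nu,1)$. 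There is no deep obstacle: the statement is a computation, so the only things to be careful about are the bookkeeping in differentiating $V_A^-$ (where the $(1-x_A+x_B)^2$ factor is the source of the cross term that conveniently vanishes at $0,q,1$) and the recognition of the perfect-square discriminant at the saddle, which is precisely what makes the saddle eigenvalues factor cleanly.
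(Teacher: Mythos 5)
Your proposal is correct and takes essentially the same route as the paper: a direct computation of $Jac^-$ at the three diagonal equilibria (your matrix $\left(\begin{smallmatrix} a-q & q \\ 2a & -a \end{smallmatrix}\right)$ with $a=\nu q(1-q)$ coincides with the paper's $Jac^-(q,q)$, and the triangular forms at $(0,0)$ and $(1,1)$ match as well), followed by reading off the eigen-data. The only difference is that you supply more of the bookkeeping the paper leaves implicit (the perfect-square discriminant, the vanishing cross terms on the diagonal) and you derive the $\mathbf V^+$ case via conjugation by the swap matrix, which is a cleaner formalization of the paper's ``same computations done symmetrically.''
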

\begin{proof}
	By definition of $\mathbf V^-$ in Lemma \ref{lemma_V_three_pieces}, the computation of the derivatives in the point $(q,q)$ yields:
	$$
	Jac^-(q,q) = 
	\begin{pmatrix}
	-q(1-\nu(1-q))	&	q			\\
	2(1-q)q\nu		&	-(1-q)q\nu	\\
	\end{pmatrix}.
	$$
	The eigenvalues and eigenvectors of this matrix are easily computed. 
	
	Analogously, the computation in the point $(0,0)$ and $(1,1)$ gives:
	$$
	Jac^-(0,0) =
	\begin{pmatrix}
	-q\nu	&	0		\\
	0		&	-q\nu	\\
	\end{pmatrix},
	\quad
	Jac^-(1,1) =
	\begin{pmatrix}
	-1-(1-q)\nu	&	1			\\
	0			&	-(1-q)\nu	\\
	\end{pmatrix},
	$$
	from which one obtains the eigenvalues and eigenvectors as claimed above.
	
	The last part follows from the same computations done symmetrically for $\mathbf V^+$.
\end{proof}

\begin{lemma}
\label{lemma_stability_pieces}
	Consider the two vector fields $\mathbf V^-$ and $\mathbf V^+$ defined in Lemma \ref{lemma_V_three_pieces}. Then:
	\begin{enumerate}
		\item	The points $(0,0)$, $(1,1)$ and $(q,q)$ are the equilibria of both $\mathbf V^-$ and $\mathbf V^+$ respectively in the region $D\cup\Delta^-$ and $D\cup\Delta^+$.
		\item 	$(0,0)$ and $(1,1)$ are asymptotically stable for both $\mathbf V^-$ and $\mathbf V^+$.
		\item 	$(q,q)$ is a saddle for both $\mathbf V^-$ and $\mathbf V^+$.
	\end{enumerate}
\end{lemma}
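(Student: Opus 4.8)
The plan is to dispatch the three claims in order, using the explicit formulas of Lemma \ref{lemma_V_three_pieces} for the location and uniqueness of equilibria and the eigenvalue data of Lemma \ref{lemma_jacobian} for their classification, and then to transfer everything from $\mathbf V^-$ to $\mathbf V^+$ by the diagonal symmetry of Lemma \ref{lemma_V_symmetric}. The genuinely substantive part is the word ``the'' in claim 1: existence of the three diagonal points is immediate, but I must rule out any spurious zero of the coupled field inside the open triangle $\Delta^-$.

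For the equilibria I would first restrict $\mathbf V^-$ to the diagonal $D$, where it collapses to $\mathbf V^D$ whose two entries are the decoupled cubics $\nu x(1-x)(x-q)$; these vanish exactly at $x\in\{0,q,1\}$, so intersecting with $x_A=x_B$ yields precisely $(0,0)$, $(q,q)$, $(1,1)$. To handle the open set $\Delta^-$ I would factor the first component as
$$V_A^-(x_A,x_B)=x_A\big[\nu(1-x_A+x_B)^2(1-x_A)(x_A-q)-(x_A-x_B)\big].$$
On $\Delta^-$ one has $x_A>x_B\ge 0$, hence $x_A>0$, so a zero of $V_A^-$ forces the bracket to vanish; since its right-hand piece $x_A-x_B$ is strictly positive there, the factor $x_A-q$ must be positive too, i.e. $x_A>q$. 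A parallel sign inspection of $V_B^-=\nu\big[x_B(1-x_B)(x_B-q)+(x_A+x_B-2x_Ax_B)(x_A-x_B)\big]$, whose cross term is strictly positive on $\Delta^-\cap[0,1]^2$, forces the cubic piece to be negative and hence $0<x_B<q$ at any zero. Thus any stray equilibrium would be confined to the box $\{0<x_B<q<x_A<1\}$.

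The crux is to close this box off, and it does so cleanly by an elementary bound. Inside it, $\nu<1$, $0<1-x_A+x_B<1$ and $0<1-x_A<1$, while $x_B<q$ gives $x_A-q<x_A-x_B$; multiplying,
$$\nu(1-x_A+x_B)^2(1-x_A)(x_A-q)<x_A-q<x_A-x_B,$$
so the bracket in $V_A^-$ is strictly negative and $V_A^-<0$ throughout the box. Hence no equilibrium survives there, and $(0,0),(q,q),(1,1)$ exhaust the equilibria in $D\cup\Delta^-$. For claims 2 and 3 I would simply read off Lemma \ref{lemma_jacobian}: at $(0,0)$ and $(1,1)$ both eigenvalues of $Jac^-$ are negative, making them hyperbolic sinks and so asymptotically stable, whereas at $(q,q)$ the eigenvalues $q\nu(1-q)>0$ and $-q(1+\nu(1-q))<0$ have opposite signs, giving a saddle; hyperbolicity of all three points lets linearization determine the local portrait. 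Every assertion for $\mathbf V^+$ on $D\cup\Delta^+$ then follows by reflecting across $D$ via Lemma \ref{lemma_V_symmetric} (equivalently, from the $\mathbf V^+$ half of Lemma \ref{lemma_jacobian}).

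The main obstacle is exactly this uniqueness step, since the cited lemmas hand over existence on $D$ and the stability labels for free. The factorization of $V_A^-$ together with the one-line product bound above is what makes it tractable; the only care required is to dispose of the degenerate boundary cases $x_A=1$, $x_B=0$, and $1-x_A+x_B=0$ separately, but each is ruled out directly by the same formulas (e.g. $V_A^-(1,x_B)=-(1-x_B)<0$ and $V_B^-(x_A,0)=\nu x_A^2>0$), so no equilibrium lurks on the edges of the triangle apart from its diagonal corners.
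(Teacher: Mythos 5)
Your proposal is correct and follows the paper's own route: points 2 and 3 are read off from the eigenvalue data of Lemma \ref{lemma_jacobian}, and point 1 reduces to checking that no further zeros of $\mathbf V^-$ exist off the diagonal, with everything transferred to $\mathbf V^+$ by symmetry. The only difference is that you actually carry out the uniqueness verification (confining any stray zero to the box $(q,1)\times(0,q)$ and killing it with the bound $\nu(1-x_A+x_B)^2(1-x_A)(x_A-q)<x_A-x_B$), a step the paper merely asserts and whose content essentially coincides with the sign lemma it proves later in the same appendix.
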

\begin{proof}\ 
	For the first point, consider $\mathbf V^-$. It suffices to verify that, in the region $D\cup\Delta^-$, $\mathbf V^-(x_A, x_B) = \mathbf 0$ if and only if $(x_A,x_B)$ is one of the points considered in the claim. Analogously, for  $\mathbf V^+$.
	
	The second and third points follow directly from Lemma \ref{lemma_jacobian}.
\end{proof}

Lastly, we focus on the crucial role played by the stable separatrix curve of the saddle $(q,q)$, hereafter denoted by $\mathcal C$. From the theory of dynamical systems, it follows that $\mathcal C$ is partitioned as the image of three distinct trajectories/solutions of system (\ref{ODEmax}):
$$
\mathcal C = \mathcal C^- \cup \{q,q\} \cup \mathcal C^+.
$$
In our case, $\mathcal C^-$ is the piece obtained as the separatrix of the saddle $(q,q)$ with respect to the vector field $\mathbf V^-$, while  $\mathcal C^+$ is the piece obtained as the separatrix of $(q,q)$ with respect to $\mathbf V^+$.

The following result formalizes what is shown in Figure \ref{basinattraction}: depending on the parameters $\nu$ and $q$, as time $t$ increases, the solution $\mathcal C^-$ enters the unit square either crossing its border along the segment $[q,1]\times \{0\}$ or along $\{1\} \times [0,q]$ and, eventually, converges toward $\{q,q\}$ as $t\rightarrow \infty$. Symmetrically, the same occurs for $\mathcal C^+$.

\begin{proposition}\label{proposition_separatrix}
	Let $\mathcal C$ denote the (unique) stable separatrix of the saddle point $(q,q)$ of system \emph{(\ref{ODEmax})}. The curve $\mathcal C$ can be naturally partitioned according to the following three distinct trajectories that compose it:
	$$
	\mathcal C = \mathcal C^- \cup \{q,q\} \cup \mathcal C^+.
	$$
	Then $\mathcal C^- \cap [0,1]^2$ is included in $[q,1]\times[0,q]$ and, depending on the parameters $q$, $\nu$, it either crosses the segment $[q,1]\times \{0\}$ in a point $(\eta,0)$ or the segment $\{1\}\times [0,q]$ in a point $(1,\zeta)$. A symmetric result holds for $\mathcal C^+$.
\end{proposition}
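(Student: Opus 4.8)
The plan is to establish the statement for $\mathcal C^-$ only and then obtain the claim for $\mathcal C^+$ for free from the reflection symmetry of Lemma \ref{lemma_V_symmetric}. Since $\overline{\Delta^-}$ is invariant (Proposition \ref{proposition_system_well_defined}), the branch $\mathcal C^-$ is a genuine trajectory of the smooth field $\mathbf V^-$ of Lemma \ref{lemma_V_three_pieces}, approaching the saddle $(q,q)$ as $t\to+\infty$ along its stable eigendirection. By Lemma \ref{lemma_jacobian} the stable eigenvalue of $Jac^-(q,q)$ is $-q(1+\nu(1-q))<0$, with eigenvector parallel to $(1,-2(1-q)\nu)$; its $\Delta^-$-branch therefore leaves $(q,q)$ with $x_A$ increasing and $x_B$ decreasing, i.e.\ into the rectangle $R:=[q,1]\times[0,q]$. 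The whole argument then reduces to following this trajectory backward in time and showing it cannot escape $R$ except through the bottom or right edge.

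The heart of the proof is the monotonicity claim $V_B^->0$ on $R\setminus\{(q,q)\}$. Writing $V_B^-/\nu = x_B(1-x_B)(x_B-q)+(x_A+x_B-2x_Ax_B)(x_A-x_B)$, I would note that, for fixed $x_B$, the cross term is a quadratic in $x_A$ vanishing at $x_A=x_B$, so its minimum over $x_A\in[q,1]$ is attained at an endpoint; since the first summand does not depend on $x_A$, the same is true of the whole expression. Evaluating at the two endpoints gives the factorizations $(q-x_B)(x_B^2-2qx_B+q)$ at $x_A=q$ and $(1-x_B)\big(x_B^2-(q+1)x_B+1\big)$ at $x_A=1$, and a short check (both bracketed quadratics in $x_B$ have no roots in $[0,q]$ for $q,\nu\in(0,1)$) shows each is strictly positive. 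Hence $V_B^->0$ throughout $R\setminus\{(q,q)\}$. This is the decisive consequence: along $\mathcal C^-$ the coordinate $x_B$ is strictly increasing in forward time (so strictly decreasing backward), and in particular $R$ can contain no periodic orbit.

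It remains to control the remaining edges and conclude. On the left edge $\{q\}\times[0,q)$ the factor $(x_A-q)$ vanishes, so $V_A^-(q,x_B)=-q(q-x_B)<0$; on the top edge $(q,1]\times\{q\}$ one has $V_B^->0$ by the previous paragraph; and on the bottom and right edges Lemma \ref{lemma_V_signs} gives $V_B>0$ and $V_A<0$ respectively. Thus under the time-reversed flow the field points into $R$ along the left and top edges and out of $[0,1]^2$ along the bottom and right edges, so the backward trajectory from $(q,q)$ stays in $R$. Because $x_B$ strictly decreases, there is no interior equilibrium (the only critical point is $(q,q)$, by Proposition \ref{prop_app2}) and no cycle, so the backward orbit can have no $\alpha$-limit set inside $R$; hence it must reach $\{x_B=0\}$ or $\{x_A=1\}$ in finite time, exiting at a point $(\eta,0)\in[q,1]\times\{0\}$ or $(1,\zeta)\in\{1\}\times[0,q]$, the alternative being selected by $(q,\nu)$. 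This yields both $\mathcal C^-\cap[0,1]^2\subseteq R$ and the stated dichotomy, and Lemma \ref{lemma_V_symmetric} transports everything to $\mathcal C^+$.

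The step I expect to be the main obstacle is the global sign $V_B^->0$ on $R$: unlike the boundary signs, which are immediate, this requires the endpoint reduction and the two factorizations above, and it is precisely this positivity (rather than an abstract appeal to Poincar\'e--Bendixson) that does the real work, since the strict monotonicity of $x_B$ simultaneously confines the orbit, excludes cycles, and forces it to cross either the axis $x_B=0$ or the line $x_A=1$.
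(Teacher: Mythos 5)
Your proof is correct and follows essentially the same route as the paper's: the local stable eigenvector of $Jac^-(q,q)$ places $\mathcal C^-$ in $[q,1]\times[0,q]$ near the saddle, and the signs of the field components on that rectangle trap the backward orbit and force it out through the bottom or right edge. The only difference is one of bookkeeping: the paper verifies $V_A<0$ on the rectangle in detail and declares the $V_B>0$ case ``analogous,'' whereas you supply the (genuinely non-analogous) endpoint-and-factorization computation for $V_B^->0$ --- a welcome filling-in of a detail the paper glosses over, modulo the fact that your stated reason for the endpoint reduction (the quadratic vanishing at $x_A=x_B$) should really be that the quadratic in $x_A$ is either concave or monotone on $[q,1]$ depending on the sign of $1-2x_B$.
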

\begin{proof}
The result is based on the following lemmas.
\end{proof}

\begin{lemma}
	Let $\mathcal C^-$ denote the part of the (stable) separatrix curve of $(q,q)$ with respect to $\mathbf V^-$ that belongs to $\Delta^-$. Symmetrically, let $\mathcal C^+$ denote the (stable) separatrix of $(q,q)$ for $\mathbf V^+$ belonging to $\Delta^+$. Then: $\mathcal C^- \subset [q,1]\times [0,q]$ and $\mathcal C^+ \subset [0,q]\times [q,1]$ for times $t$ large enough.
\end{lemma}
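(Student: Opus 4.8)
The plan is to combine the local eigenvector data at the saddle with a boundary barrier argument on the rectangle $R := [q,1]\times[0,q]$. Throughout I write $V_A^-$ and $V_B^-$ for the two components of $\mathbf V^-$ given in Lemma \ref{lemma_V_three_pieces}.

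First I would pin down the picture near the saddle. By Lemma \ref{lemma_jacobian}, the stable eigenvalue of $\mathbf V^-$ at $(q,q)$ is $-q(1+\nu(1-q))<0$, with eigenvector $v_s=\left(\frac{-1}{2(1-q)\nu},\,1\right)$, and the stable separatrix is tangent to $v_s$ there. Its two branches leave $(q,q)$ in the directions $\pm v_s$. The branch in the direction $-v_s=\left(\frac{1}{2(1-q)\nu},\,-1\right)$ has $x_A>q$ and $x_B<q$, so it is precisely the branch lying in $\Delta^-=\{x_A>x_B\}$, namely $\mathcal C^-$, and locally it sits in the interior of $R$. This already gives the containment for $t$ large enough (i.e.\ near the saddle).

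To extend this to the whole curve I would read off the sign of the normal component of $\mathbf V^-$ on each side of $R$. On the left side $x_A=q$ the factor $(x_A-q)$ kills the bracketed term, leaving $V_A^-(q,x_B)=-q(q-x_B)\le 0$ for $x_B\in[0,q]$; on the top side $x_B=q$ one gets $V_B^-(x_A,q)=\nu\,(x_A+q-2x_Aq)(x_A-q)\ge 0$ for $x_A\in[q,1]$ (the factor $x_A+q-2x_Aq$ is linear in $x_A$ and positive at both $x_A=q$ and $x_A=1$, hence positive on the whole interval). On the remaining two sides Lemma \ref{lemma_V_signs} already supplies $V_B^-(x_A,0)=\nu x_A^2\ge 0$ on the bottom and $V_A^-(1,x_B)=-(1-x_B)\le 0$ on the right. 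Passing to reversed time $\tau=-t$, under which $\mathcal C^-$ flows \emph{outward} from the saddle, the field $-\mathbf V^-$ then points strictly into $R$ along the left side (for $x_B<q$) and along the top side (for $x_A>q$). Hence the reversed trajectory carrying $\mathcal C^-$ can never cross these two sides, so the only boundary through which it can exit $R$ is the bottom segment $[q,1]\times\{0\}$ or the right segment $\{1\}\times[0,q]$. Since $x_A\ge q\ge x_B$ on $R$, with equality only at $(q,q)$, the curve stays in $\overline{\Delta^-}$, consistent with using $\mathbf V^-$; moreover, by Lemma \ref{lemma_stability_pieces} the only equilibrium of $\mathbf V^-$ in $R$ is $(q,q)$ itself, ruling out the reversed trajectory being trapped at an interior rest point. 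This yields $\mathcal C^-\cap[0,1]^2\subset R=[q,1]\times[0,q]$, and the statement for $\mathcal C^+$ follows by reflecting across the diagonal via the symmetry of Lemma \ref{lemma_V_symmetric}.

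The main obstacle is ruling out that $\mathcal C^-$, traced away from $(q,q)$, escapes across the \emph{inner} sides $x_A=q$ or $x_B=q$ into neighbouring regions before reaching the outer boundary of the square; this is exactly what the two sign computations $V_A^-(q,\cdot)\le 0$ and $V_B^-(\cdot,q)\ge 0$ prevent. The one delicate point is the corner $(q,q)$ itself, where $\mathbf V^-$ vanishes and the boundary sign argument degenerates; there the tangency to $-v_s$ from the first step takes over and places the curve inside $R$.
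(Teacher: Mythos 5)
Your proof is correct, and its first paragraph is essentially the paper's entire argument: the paper proves this lemma purely locally, noting that the stable separatrix converges to $(q,q)$ as $t\to\infty$ and is tangent there to the eigenvector $\left(\tfrac{1}{-2(1-q)\nu},1\right)$ of the negative eigenvalue, whose components have opposite signs; since the statement only claims containment ``for times $t$ large enough,'' this local cone argument suffices. Everything you add after that --- the sign computations $V_A^-(q,x_B)=-q(q-x_B)\le 0$ on the left edge and $V_B^-(x_A,q)=\nu(x_A+q-2x_Aq)(x_A-q)\ge 0$ on the top edge, combined with Lemma \ref{lemma_V_signs} on the outer edges and a reversed-time barrier argument --- is sound (I checked the algebra) but goes beyond the stated claim: it upgrades the local containment to the global one $\mathcal C^-\cap[0,1]^2\subset[q,1]\times[0,q]$. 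In the paper that global step is not part of this lemma's proof; it is carried by the immediately following lemma, which establishes $V_A\le 0$ and $V_B\ge 0$ on $(q,1)\times(0,q)$ by a chain of elementary bounds, and the two lemmas together feed into Proposition \ref{proposition_separatrix}. So your version buys a self-contained proof of the stronger invariance statement at the cost of duplicating the next lemma's content; your handling of the degenerate corner $(q,q)$ via the eigenvector tangency is the right way to close that argument.
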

\begin{proof}
	Consider the case of $\mathcal C^-$ (the other case of $\mathcal C^+$ is symmetrical). The point $(q,q)$ is a saddle so its stable separatrix converges to $(q,q)$ as $t\rightarrow\infty$ and, additionally, it is locally linearly approximated by the vector $\left(\frac{1}{-2(1-q)\nu}, 1 \right)$, which is the eigenvector corresponding of the negative eigenvalue of $Jac^-(q,q)$, from Lemma \ref{lemma_jacobian}.
\end{proof}

\begin{lemma}
	The signs of the components of the vector field defining system \emph{(\ref{ODEmax})}, when computed in $(x_A,x_B) \in (q,1)\times(0,q)$, are such that $V_A(x_A,x_B) \leq 0$ and $V_B(x_A,x_B) \geq 0$.
	
	Symmetrically, $V_A \geq 0$ and $V_B \leq 0$ in $(0,q)\times(q,1)$.
\end{lemma}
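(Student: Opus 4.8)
The plan is to notice first that the rectangle $(q,1)\times(0,q)$ sits entirely in the sub-diagonal region $\Delta^-$, since $x_A>q>x_B$ forces $x_A>x_B$. Hence, by Lemma~\ref{lemma_V_three_pieces}, on this rectangle the vector field coincides with $\mathbf V^-$; substituting $\max\{0,x_A-x_B\}=x_A-x_B$ and $\max\{0,x_B-x_A\}=0$ into \eqref{ODEmax} gives the two explicit expressions
\[
V_A = \nu\, x_A(1-x_A)(x_A-q)(1-x_A+x_B)^2 - x_A(x_A-x_B),\qquad V_B = \nu\Big[x_B(1-x_B)(x_B-q) + (x_A+x_B-2x_Ax_B)(x_A-x_B)\Big].
\]
Before estimating, I would record the signs valid throughout the open rectangle: $x_A,x_B,\,1-x_A,\,1-x_B,\,x_A-q,\,x_A-x_B,\,1-x_A+x_B$ are all positive, while $x_B-q<0$.

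For the sign of $V_A$, the idea is that the outflow term $-x_A(x_A-x_B)$ dominates the within-country term. Factoring out $x_A>0$, it suffices to show $\nu(1-x_A)(x_A-q)(1-x_A+x_B)^2 \le x_A-x_B$. Each of the factors $\nu$, $1-x_A$ and $(1-x_A+x_B)^2$ lies in $(0,1)$, so the left-hand side is strictly smaller than $x_A-q$; and since $x_B<q$ we have $x_A-q<x_A-x_B$. Chaining these two bounds yields $V_A<0$, which in particular gives the claimed $V_A\le 0$.

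The sign of $V_B$ is the delicate point and the main obstacle, because here the within-country term $\nu\,x_B(1-x_B)(x_B-q)$ is \emph{negative}, so I must show that the cross-country inflow $(x_A+x_B-2x_Ax_B)(x_A-x_B)$ more than compensates it, i.e.\ that $(x_A+x_B-2x_Ax_B)(x_A-x_B)\ge x_B(1-x_B)(q-x_B)$. The plan is to dominate the right-hand side factor by factor: first, $x_A-x_B\ge q-x_B>0$ because $x_A>q$; second, $x_A+x_B-2x_Ax_B\ge x_B(1-x_B)$. This second inequality is the crux, and I would prove it by rewriting it as $x_A(1-2x_B)+x_B^2\ge 0$, which is \emph{affine} in $x_A$ and therefore minimized over $[0,1]$ at an endpoint; its values $x_B^2$ at $x_A=0$ and $(1-x_B)^2$ at $x_A=1$ are both nonnegative. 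Since all four quantities involved are nonnegative, multiplying the two factor-wise bounds gives the desired domination, whence $V_B\ge 0$.

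Finally, the symmetric statement on $(0,q)\times(q,1)$ follows at once from the reflection symmetry of Lemma~\ref{lemma_V_symmetric}: the map $(x_A,x_B)\mapsto(x_B,x_A)$ carries $(0,q)\times(q,1)$ onto $(q,1)\times(0,q)$ and interchanges the roles of $V_A$ and $V_B$, so the bounds $V_A\le 0,\ V_B\ge 0$ just established turn into $V_A\ge 0,\ V_B\le 0$.
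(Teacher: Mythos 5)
Your proof is correct, and it is worth comparing it carefully with the paper's. For the component $V_A$ on $(q,1)\times(0,q)$ you follow essentially the same route as the paper: reduce to $\nu(1-x_A)(x_A-q)(1-x_A+x_B)^2 \le x_A-x_B$, bound the right-hand side below by $x_A-q$ using $x_B<q$, and then observe that the remaining prefactor is less than $1$. Your version is in fact slightly cleaner, since you note directly that each of $\nu$, $1-x_A$ and $(1-x_A+x_B)^2$ lies in $(0,1)$, whereas the paper maximizes $\nu(1-x_A+x_B)^2(1-x_A)$ over the rectangle to get the bound $\nu(1-q)<1$. The real added value of your proposal is the treatment of $V_B$: the paper proves only the $V_A$ inequality and dismisses ``the other cases'' as analogous, but the $V_B$ case is structurally different --- there the within-country term $\nu x_B(1-x_B)(x_B-q)$ is negative and one must show that the cross-country inflow $(x_A+x_B-2x_Ax_B)(x_A-x_B)$ dominates it, which is not a sign-for-sign analogue of the $V_A$ argument. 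Your factor-wise domination, resting on $x_A-x_B\ge q-x_B$ and on the inequality $x_A+x_B-2x_Ax_B\ge x_B(1-x_B)$ (verified by checking the affine function $x_A(1-2x_B)+x_B^2$ at the endpoints $x_A=0,1$), is correct and supplies exactly the estimate the paper leaves implicit. The reduction to the sub-diagonal field $\mathbf V^-$ via Lemma~\ref{lemma_V_three_pieces} and the use of the reflection symmetry of Lemma~\ref{lemma_V_symmetric} for the region $(0,q)\times(q,1)$ both match the paper's framework.
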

\begin{proof}
	Consider $V^-_A(x_A,x_B)$, the first component of $\mathbf V^-$ (the other cases are analogous), and let us show that $V^-_A(x_A,x_B) < 0$ when $0 < x_B < q < x_A < 1$. By definition in Lemma \ref{lemma_V_three_pieces}:
	$$
	V^-_A(x_A,x_B) = \nu (1-x_A + x_B)^2 [x_A(1-x_A)(x_A-q)] - x_A(x_A-x_B).
	$$
	It has to be shown that for $0 < x_B < q < x_A < 1$ and $q,\nu \in (0,1)$
	$$
	\nu (1 - x_A + x_B)^2 x_A (1-x_A) (x_A - q) \stackrel{?}{<} x_A (x_A - x_B),
	$$
	that is
	$$
	\nu (1 - x_A + x_B)^2 (1-x_A) (x_A - q) \stackrel{?}{<} x_A - x_B.
	$$
	The right-hand side is always greater than $x_A - q$, so that the inequality becomes:
	$$
	\nu (1 - x_A + x_B)^2 (1-x_A) (x_A - q) \stackrel{?}{<} x_A - q,
	$$
	and then
	$$
	\nu (1 - x_A + x_B)^2 (1-x_A) \stackrel{?}{<} 1.
	$$
	Taking the supremum of the left-hand side for $x_B \in (0,q)$, which is attained at $x_B = q$ gives
	$$
	\nu (1 - x_A + q)^2 (1-x_A) \stackrel{?}{<} 1.
	$$
	Then the supremum for $x_A \in (q,1)$, obtained for $x_A = q$ gives
	$$
	\nu (1 - q + q)^2 (1-q) \equiv \nu (1-q) < 1,
	$$
	which implies that all the inequalities above have to hold, as wanted.
\end{proof}

\section{Linearization of the separatrix curve $\mathcal C$ and approximation of the basins of attraction}
\label{app_comparative}

Since that we have already observed that the separatrix $\mathcal C$ cannot be described analytically, we first compute a linear approximation of it and, then, confirm the results by numerical analysis. This allows us to approximate the area of the basins of attraction which is key for the comparative statics analysis done in Section \ref{sec_discussion}.

We linearize system \eqref{ODEmax} in a neighborhood of the saddle $(q,q)$ using Lemma \ref{lemma_jacobian}. Figure \ref{linearseparatrix} shows similarities and differences between $\mathcal C$ and its linear approximation $\widetilde{\mathcal C}$.

\begin{proposition}
\label{lemmalinearseparatrix}
	The separatrix $\mathcal C$ is linearly approximated in $(q,q)$ by the two-piece line $\widetilde{\mathcal C}$, which we respectively call $\widetilde{\mathcal C}^+$ and $\widetilde{\mathcal C}^-$, defined by
	$$
	\widetilde{\mathcal C} =
	\begin{cases}
	\widetilde{\mathcal C}^+: \quad x_B = \dfrac{1}{-2(1-q)\nu}(x_A-q) + q,	&	\text{defined for } x_A \leq q, 	\vspace{.2cm}\\
	\widetilde{\mathcal C}^-: \quad x_B = -2(1-q)\nu (x_A-q) + q,			&	\text{defined for } x_A \geq q.
	\end{cases}
	$$
\end{proposition}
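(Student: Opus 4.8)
The plan is to exploit the standard fact that, near a hyperbolic saddle, the stable manifold is tangent to the stable eigenspace, so that its first-order approximation is exactly the half-line emanating from the saddle in the direction of the eigenvector associated with the negative eigenvalue. Since $\mathcal{C}$ decomposes as $\mathcal{C}^- \cup \{(q,q)\} \cup \mathcal{C}^+$, where $\mathcal{C}^\pm$ is the stable separatrix of $(q,q)$ for the smooth field $\mathbf{V}^\pm$ restricted to $\Delta^\pm$, I would treat each piece separately, feeding in the eigendata already computed in Lemma \ref{lemma_jacobian}.

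First, for $\mathcal{C}^-$: by Lemma \ref{lemma_jacobian}, $Jac^-(q,q)$ has negative eigenvalue $-q(1+\nu(1-q))$ with eigenvector $\left(\frac{1}{-2(1-q)\nu},1\right)$. The stable manifold theorem then gives that $\mathcal{C}^-$ is tangent at $(q,q)$ to the line $(q,q) + t\left(\frac{1}{-2(1-q)\nu},1\right)$. Eliminating the parameter by setting $t = x_B - q$ yields $x_A - q = \frac{1}{-2(1-q)\nu}(x_B-q)$, i.e. $x_B = -2(1-q)\nu\,(x_A - q) + q$, which is precisely $\widetilde{\mathcal{C}}^-$. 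Because the first component of the eigenvector is negative, the branch with $x_B < q$ forces $x_A > q$, placing the relevant half-line in $\Delta^-$ and fixing the domain $x_A \geq q$. The computation for $\mathcal{C}^+$ is symmetric: the stable eigenvector of the Jacobian of $\mathbf{V}^+$ at $(q,q)$ is $\left(-2(1-q)\nu,1\right)$, and eliminating the parameter gives $\widetilde{\mathcal{C}}^+$ on the half-line lying in $\Delta^+$, hence the domain $x_A \leq q$.

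The conceptual content is supplied entirely by Lemma \ref{lemma_jacobian} together with the tangency of the stable manifold to the stable eigenspace, so I expect the main obstacle to be purely bookkeeping: matching each eigenvector to the correct piece $\mathcal{C}^\pm$ and its half-line, and converting the eigenvector direction into the explicit slope of $x_B$ as a function of $x_A$ without sign errors. The reciprocal relation between the two slopes $-2(1-q)\nu$ and $\frac{1}{-2(1-q)\nu}$ makes an accidental swap easy to commit, so I would explicitly check that each slope carries the correct region label and that the two half-lines join at the shared point $(q,q)$. No estimate beyond the linearization is required, since the proposition asserts only a first-order approximation; the quality of $\widetilde{\mathcal{C}}$ as an approximation of $\mathcal{C}$ away from $(q,q)$ is a separate issue deferred to the numerical analysis in \ref{app_comparative}.
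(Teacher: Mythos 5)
Your proposal is correct and follows essentially the same route as the paper: both invoke the eigendata of Lemma \ref{lemma_jacobian} at the saddle $(q,q)$, take the tangency of the stable separatrix to the stable eigenvector of $Jac^{\pm}(q,q)$, and convert the eigenvector direction $\left(\frac{1}{-2(1-q)\nu},1\right)$ (respectively $\left(-2(1-q)\nu,1\right)$) into the explicit line equation, with the correct assignment of each slope to $\Delta^-$ and $\Delta^+$. Your sign bookkeeping for the domains $x_A\geq q$ and $x_A\leq q$ is also right, so nothing further is needed.
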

\begin{proof}
	Let us consider $\widetilde{\mathcal C}^-$ (the case of $\widetilde{\mathcal C}^+$ is analogous). From Lemma \ref{lemma_jacobian} it follows that the approximation of the (stable) separatrix of $(q,q)$ with respect to $\mathbf V^-$ is the line for $(q,q)$ with tangent given by the eigenvector corresponding to the negative eigenvalue, that is the vector $\left(\frac{1}{-2(1-q)\nu}, 1 \right)$. Such line in $\R^2$ is parametrically described by
	$$
	\begin{pmatrix}
	x_A		\\
	x_B
	\end{pmatrix} = 
	t
	\begin{pmatrix}
	\frac{1}{-2(1-q)\nu}	\\
	1
	\end{pmatrix} 
	+
	\begin{pmatrix}
	q	\\
	q
	\end{pmatrix}, 
	\quad
	\text{for } t \geq 0
	$$
	which is implicitly written as $x_B = -2(1-q)\nu (x_A-q) + q$, for $x_A \geq q$.
\end{proof}

\begin{lemma}
\label{lemma_point_intersection}
	The intersection between $\widetilde{\mathcal C}^-$ and the sub-diagonal boundaries of the unit square $[0,1]^2$, that is, the segments $\{1\} \times [0,q]$ and $[0,q] \times \{0\}$, is the point $P^- = (P^-_A,P^-_B)$ given by
	$$
	P^- = 
	\left\{
	\begin{array}{rr}
	\left(1, -2\nu(1-q)^2 + q \right),			&	\text{if } \nu < \frac{q}{2(1-q)^2},	\\
	\left(\frac{q}{2\nu(1-q)} + q, 0 \right), 	&	\text{if } \nu \geq \frac{q}{2(1-q)^2},
	\end{array}
	\right.
	$$
	Symmetrically, a point $P^+$ can be found as the intersection of $\widetilde{\mathcal C}^+$ and the segments $\{0\}\times [q,1]$ and $[0,q]\times \{1\}$.
\end{lemma}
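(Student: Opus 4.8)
The plan is to take the explicit equation of $\widetilde{\mathcal C}^-$ furnished by Proposition \ref{lemmalinearseparatrix} and simply intersect it with the two sub-diagonal sides of the square, then decide which intersection is the admissible one. Recall that $\widetilde{\mathcal C}^-$ is the line through the saddle $(q,q)$ given by $x_B = -2(1-q)\nu (x_A-q) + q$, defined for $x_A \geq q$. Since $q,\nu \in (0,1)$, its slope $-2(1-q)\nu$ is strictly negative, so as $x_A$ increases from $q$ toward $1$ the ordinate $x_B$ decreases strictly from $q$. Hence, moving from the saddle along $\widetilde{\mathcal C}^-$ into the sub-diagonal region, the line can leave the unit square only through the right edge $\{1\}\times[0,q]$ or through the bottom edge $[q,1]\times\{0\}$, and exactly one of these occurs.

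First I would compute the two candidate crossings. Substituting $x_A = 1$ gives the intersection with the vertical line $x_A=1$, namely $x_B = -2\nu(1-q)^2 + q$. Setting instead $x_B = 0$ and solving the linear equation for $x_A$ gives the intersection with the horizontal axis, $x_A = q + \dfrac{q}{2\nu(1-q)}$. These are precisely the two candidate points appearing in the statement.

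The only remaining step is to select the genuine exit point, i.e. the candidate that actually lies on the boundary of $[0,1]^2$. Because $x_B$ is decreasing along the line, the crossing with $x_A=1$ has ordinate $\leq q$ automatically; it is admissible (nonnegative) exactly when $-2\nu(1-q)^2 + q \geq 0$, that is when $\nu \leq \dfrac{q}{2(1-q)^2}$. In that regime the line reaches the right edge while still above the axis, giving $P^- = \left(1,\,-2\nu(1-q)^2 + q\right)$. In the complementary regime $\nu \geq \dfrac{q}{2(1-q)^2}$ the line has already met $x_B=0$ before $x_A$ reaches $1$, and the exit point is $P^- = \left(q + \dfrac{q}{2\nu(1-q)},\,0\right)$; here one verifies that the inequality $q + \dfrac{q}{2\nu(1-q)} \leq 1$ is itself equivalent to $\nu \geq \dfrac{q}{2(1-q)^2}$, so this abscissa indeed lies in $[q,1]$. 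At the threshold value $\nu = \dfrac{q}{2(1-q)^2}$ both formulas collapse to the corner $(1,0)$, confirming consistency.

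There is no substantive obstacle here: the argument reduces to intersecting a line of negative slope with a rectangle, and the entire case split is governed by a single sign condition, namely the sign of $-2\nu(1-q)^2 + q$, the ordinate of the line at $x_A=1$. Finally, the symmetric point $P^+$ follows at once by invoking the reflection symmetry of the system across the diagonal (Lemma \ref{lemma_V_symmetric}, Proposition \ref{proposition_system_well_defined}): interchanging the roles of $x_A$ and $x_B$ carries $\widetilde{\mathcal C}^-$ to $\widetilde{\mathcal C}^+$ and the edges $\{1\}\times[0,q]$, $[q,1]\times\{0\}$ to $\{0\}\times[q,1]$, $[0,q]\times\{1\}$, yielding the mirror-image formulas for $P^+$.
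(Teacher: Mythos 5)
Your proof is correct and is precisely the direct computation the paper intends: the lemma is in fact stated there without any proof, being treated as an immediate intersection of the line $\widetilde{\mathcal C}^-$ with the two edges, and your case split governed by the sign of $-2\nu(1-q)^2+q$ matches the equivalence the paper records in the remark immediately following the lemma. (Incidentally, you correctly work with the bottom edge $[q,1]\times\{0\}$, whereas the lemma's statement writes $[0,q]\times\{0\}$, which is evidently a typo.)
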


\begin{remark}
Notice that, provided $q \in (0,1)$ and $\nu \in (0,1)$, the two following conditions are equivalent:
$$
\nu < \frac{q}{2(1-q)^2} \quad \Longleftrightarrow \quad  \frac{1 + 4 \nu -\sqrt{8 \nu +1}}{4 \nu } < q.
$$
Moreover, when $q > 1/2$ then $\nu < \frac{q}{2(1-q)^2}$ for all $\nu \in (0,1)$. Figure \ref{conditiononqv} shows the subregion of the square $(q,\nu)\in(0,1)^2$ where this condition is satisfied.
\end{remark}

\begin{figure}[htbp]
	\centering
	\caption{\textbf{Condition on the parameters $q$ and $\nu$}}
	\begin{minipage}[c]{.5\textwidth}
		\centering\setlength{\captionmargin}{0pt}
		\includegraphics[width=\textwidth]{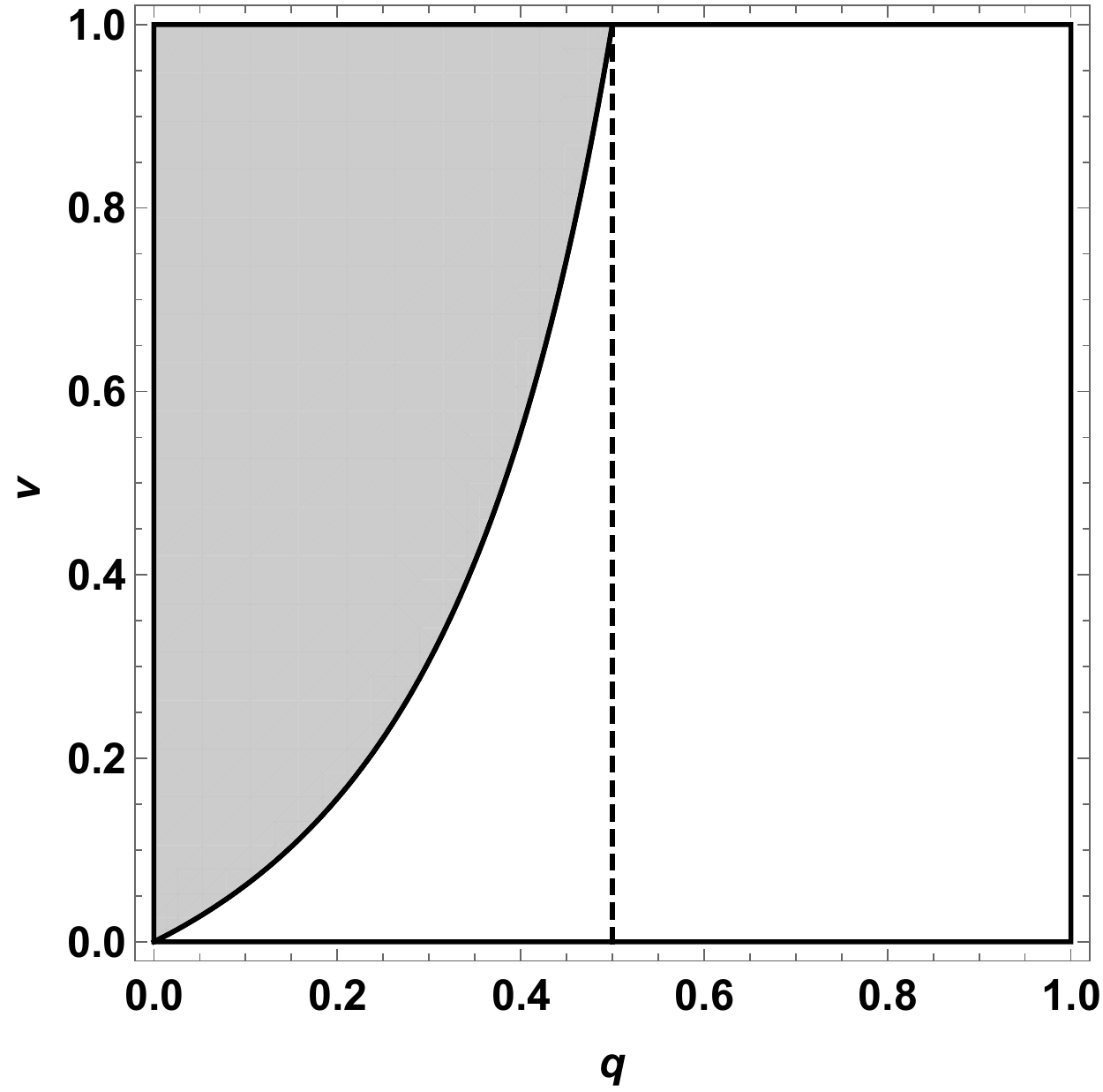}
	\end{minipage}
	\hfill
	\begin{minipage}[c]{.47\textwidth}
		\centering\setlength{\captionmargin}{0pt}
		\caption*{\small Subregions of the square $(q,\nu) \in (0,1)^2$ separated by the curve $\nu = \frac{q}{2(1-q)^2}$. The white area is where $\nu < \frac{q}{2(1-q)^2}$, whereas the gray area is where the opposite inequality holds. In particular, in the white area (respectively, gray) $P^-$ belongs to the vertical segment $\{1\}\times [0,q]$ (resp. horizontal segment $[q,1]\times \{0\}$) and the area under the curve $\widetilde{\mathcal C}^-$ is the trapezoid $Q^-$ (resp. triangle $T^-$). Lastly, the dashed line is at $q=1/2$.}
	\end{minipage}
	\label{conditiononqv}
\end{figure}

Depending on the parameters $\nu$ and $q$, the area under the curve $\widetilde{\mathcal C}$ is either a trapezoid or a triangle and is easily computed in the following result. By considering this area as an approximation of the area under the curve $\mathcal C$, this will also allow us to make a comparative statics analysis. The results are also shown in Figure \ref{approximatedarea}.

\begin{lemma}
\label{lemmatriangletrapezoid}
	If $\nu \geq \frac{q}{2(1-q)^2}$, consider the triangle $T^-\subset\{(x_A,x_B)\in[0,1]^2:x_A\geq x_B\}$ defined as the convex hull in $\R^2$ of the following set of vertexes 
	$$
	T^- = \mathrm{Conv}\left(\{(q,q), (q,0), P^-\}\right).
	$$
	If, instead, $\nu < \frac{q}{2(1-q)^2}$, consider the trapezoid $Q^-\subset\{(x_A,x_B)\in[0,1]^2:x_A\geq x_B\}$ defined by
	$$
	Q^- = \mathrm{Conv}\left(\{(q,q), (q,0), (1,0), P^-\}\right).
	$$
	The measure of their area is:
	$$
	\begin{aligned}
	& \mathrm{A}(T^-) = \dfrac{q \times (P^-_A - q)}{2} = \dfrac{q^2}{4\nu(1-q)}, \quad\text{defined whenever } \nu \geq \dfrac{q}{2(1-q)^2},\\
	& \mathrm{A}(Q^-) = \dfrac{(1-q) \times (q + P^-_B)}{2} = (1-q)  \left(q -\nu(1-q)^2\right), \quad\text{when } \nu < \dfrac{q}{2(1-q)^2}.
	\end{aligned}
	$$
	Whenever defined, $q \mapsto \left[\mathrm{A}(T^-)\right](q,\nu)$ is always increasing for all $\nu$. Moreover, its derivative with respect to $q$ is:
	$$
	\frac{\partial \mathrm{A}(T^-)}{\partial q} = \dfrac{q(2-q)}{4\nu(1-q)^2} > 0, \quad\forall \ q,\nu \in (0,1):\ \nu \geq \dfrac{q}{2(1-q)^2}.
	$$
	The derivative of $\mathrm{A}(Q^-)$ is
	$$
	\frac{\partial \mathrm{A}(Q^-)}{\partial q} = 1 - 2q + 3\nu(1-q)^2,\quad\text{defined whenever } \nu < \dfrac{q}{2(1-q)^2},
	$$
	and it is positive if and only if the following condition holds\footnote{The complicated condition derives from the fact that $\frac{1-2q}{3(1-q)^2}$ is decreasing while $\frac{q}{2(1-q)^2}$ is increasing, when $q\in(0,1)$, and they cross in $q=2/7$.}
	\begin{multline*}
	\left\{ \frac{2}{7} < q \quad\land\quad \dfrac{1-2q}{3(1-q)^2} < \nu < \dfrac{q}{2(1-q)^2}\right\} \lor {} \\
	\left\{ q < \frac{2}{7} \quad\land\quad \left[ \nu < \dfrac{q}{2(1-q)^2} \quad\lor\quad \nu > \dfrac{1-2q}{3(1-q)^2} \right] \right\}.
	\end{multline*}
	Analogously, for $P^+$ and correspondingly $T^+$, $Q^+$ and their areas and derivatives.
\end{lemma}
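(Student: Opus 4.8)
The plan is to read off the vertex coordinates supplied by Proposition \ref{lemmalinearseparatrix} and Lemma \ref{lemma_point_intersection}, compute the two areas by elementary planar geometry, and then differentiate the resulting closed forms in $q$. In the regime $\nu \geq \frac{q}{2(1-q)^2}$, I would note that $T^- = \mathrm{Conv}\{(q,q),(q,0),P^-\}$, with $P^- = \left(\frac{q}{2\nu(1-q)}+q,\,0\right)$, is a right triangle: the leg from $(q,0)$ to $(q,q)$ is vertical of length $q$, the leg from $(q,0)$ to $P^-$ is horizontal of length $P^-_A - q = \frac{q}{2\nu(1-q)}$, and the two meet orthogonally at $(q,0)$. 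Hence $\mathrm{A}(T^-) = \frac12 q\,(P^-_A-q) = \frac{q^2}{4\nu(1-q)}$, and the quotient rule gives $\frac{\partial \mathrm{A}(T^-)}{\partial q} = \frac{q(2-q)}{4\nu(1-q)^2}$, manifestly positive for all $q,\nu\in(0,1)$, so monotonicity is immediate.

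In the regime $\nu < \frac{q}{2(1-q)^2}$, the quadrilateral $Q^- = \mathrm{Conv}\{(q,q),(q,0),(1,0),P^-\}$, with $P^- = \bigl(1,\,q-2\nu(1-q)^2\bigr)$, is a trapezoid whose two parallel sides are the vertical segments on $x_A=q$ and $x_A=1$, of lengths $q$ and $P^-_B = q - 2\nu(1-q)^2$, separated by horizontal distance $1-q$. Here I would record that $P^-_B>0$ holds precisely because $\nu < \frac{q}{2(1-q)^2}$, so that $P^-$ genuinely lies on $\{1\}\times[0,q]$. The trapezoid-area formula then yields $\mathrm{A}(Q^-) = \frac12(q+P^-_B)(1-q) = (1-q)\bigl(q-\nu(1-q)^2\bigr)$, and expanding this to $q - q^2 - \nu(1-q)^3$ and differentiating gives $\frac{\partial \mathrm{A}(Q^-)}{\partial q} = 1 - 2q + 3\nu(1-q)^2$.

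The main obstacle is determining the sign of this last quantity under the standing constraint $\nu < \frac{q}{2(1-q)^2}$ that defines the trapezoid regime. Since the expression is affine in $\nu$, solving $1-2q+3\nu(1-q)^2>0$ isolates a single threshold curve in the $(q,\nu)$-plane, which I would then compare against the domain-boundary curve $\nu = \frac{q}{2(1-q)^2}$. As noted in the footnote to the statement, the threshold curve $\frac{1-2q}{3(1-q)^2}$ is decreasing while the boundary curve $\frac{q}{2(1-q)^2}$ is increasing, and the two intersect at $q=2/7$; the disjunctive form of the stated condition is exactly the bookkeeping of which of the two curves lies lower on each side of $q=2/7$, intersected with the admissible range $\nu\in(0,1)$. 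This case split, rather than any hard estimate, is where the care is required.

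Finally, the parallel statements for $P^+$, $T^+$, $Q^+$ and their areas and derivatives require no new computation. They follow from the reflection symmetry across the diagonal established in Lemma \ref{lemma_V_symmetric}, which maps $\Delta^-$ onto $\Delta^+$ and interchanges the two coordinates; applying this reflection to the vertices above produces the mirror-image triangle and trapezoid with identical areas and derivatives, so the entire analysis transfers verbatim.
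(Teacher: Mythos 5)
Your approach is the same as the paper's, which itself only says that the areas ``are easily computed'' from the coordinates of $P^-$ and that ``computing the derivatives is then straightforward.'' The parts you actually carry out are correct: $T^-$ is a right triangle with legs $q$ and $P^-_A-q=\frac{q}{2\nu(1-q)}$, $Q^-$ is a trapezoid with parallel sides $q$ and $P^-_B=q-2\nu(1-q)^2$ at distance $1-q$, the two area formulas follow, and the derivatives $\frac{q(2-q)}{4\nu(1-q)^2}$ and $1-2q+3\nu(1-q)^2$ check out.

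The gap is exactly at the step you identify as ``where the care is required'' and then do not perform: the sign characterization of $\partial \mathrm{A}(Q^-)/\partial q$. If you actually solve $1-2q+3\nu(1-q)^2>0$ for $\nu$, the threshold is $\nu>\frac{2q-1}{3(1-q)^2}$, whose numerator is $2q-1$, not the $1-2q$ appearing in the stated disjunction. Equivalently: for $q\le 1/2$ both terms $1-2q$ and $3\nu(1-q)^2$ are nonnegative with the second strictly positive, so the derivative is automatically positive throughout the trapezoid regime, with no lower bound on $\nu$ required. A concrete counterexample to the stated ``if and only if'' is $(q,\nu)=(0.4,\,0.1)$: here $\nu<\frac{q}{2(1-q)^2}\approx 0.556$ so $Q^-$ is defined, and $1-2q+3\nu(1-q)^2=0.2+0.108=0.308>0$, yet the stated condition fails because $q>2/7$ and $\nu=0.1<\frac{1-2q}{3(1-q)^2}\approx 0.185$. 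So you cannot assert that ``the disjunctive form of the stated condition is exactly the bookkeeping'' of the two curves -- the bookkeeping, once done, contradicts the statement. Your proof should either derive the correct condition ($q\le 1/2$, or $q>1/2$ together with $\nu>\frac{2q-1}{3(1-q)^2}$, intersected with the standing constraint $\nu<\frac{q}{2(1-q)^2}$) and flag the discrepancy, or it is endorsing a claim that the computation it sets up does not support. Everything else, including the symmetry argument for $P^+$, $T^+$, $Q^+$, is fine.
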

\begin{proof}
	The measure of the areas of the triangle $T^-$ or trapezoid $Q^-$ are easily computed by using the coordinates of $P^-$ obtained in Lemma \ref{lemma_point_intersection}. Computing the derivatives is then straightforward. 
\end{proof}

Now let us compute the ratio between the area under the curve $\widetilde{\mathcal C}^-$ and the entire rectangle $[q,1-q] \times [0,q]$, as in Figure \ref{approximatedarea}.

\begin{lemma}
	Let $\widetilde{R}^-(q,\nu)$ be the ratio between the area under the curve $\widetilde{\mathcal C}^-$ and the rectangle $[q,1] \times [0,q] \subset [0,1]^2$. Then $\widetilde{R}^-(q,\nu)$ is given by
	$$
	\widetilde{R}^-(q,\nu) :=
	\left\{
	\begin{array}{rr}
	\dfrac{\left[\mathrm{A}(T^-)\right](q,\nu)}{q(1-q)} = \dfrac{q}{4\nu (1-q)^2}, & \text{if } \nu \geq \dfrac{q}{2(1-q)^2}	\vspace{.2cm}\\
	\dfrac{\left[\mathrm{A}(T^-)\right](q,\nu)}{q(1-q)} \equiv \dfrac{\left[\mathrm{A}(Q^-)\right](q,\nu)}{q(1-q)} = \frac{1}{2},		&	\text{if } \nu = \dfrac{q}{2(1-q)^2}	\vspace{.2cm}\\
	\dfrac{\left[\mathrm{A}(Q^-)\right](q,\nu)}{q(1-q)} = \dfrac{q - \nu (1-q)^2}{q}, & \text{if } \nu \leq \dfrac{q}{2(1-q)^2}.
	\end{array}
	\right.
	$$
	In an analogous fashion, the ratio above the line $\widetilde{\mathcal C}^+$ is denoted by $\widetilde{R}^+(q,\nu)$ and is equal to $\widetilde{R}^-(q,\nu)$ by symmetry.
\end{lemma}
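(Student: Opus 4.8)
The plan is to reduce everything to the area formulas already established in Lemma \ref{lemmatriangletrapezoid} and then perform two short algebraic simplifications, so no genuinely new geometric work is needed. First I would observe that the rectangle $[q,1]\times[0,q]$ has base $1-q$ and height $q$, hence area $q(1-q)$, which is the common denominator in every branch of the claimed expression for $\widetilde{R}^-(q,\nu)$. The region lying under $\widetilde{\mathcal C}^-$ inside this rectangle is, by construction, exactly the triangle $T^-$ when $P^-$ lands on the bottom edge $[q,1]\times\{0\}$ (the steep-descent case $\nu \geq \frac{q}{2(1-q)^2}$) and exactly the trapezoid $Q^-$ when $P^-$ lands on the right edge $\{1\}\times[0,q]$ (the shallow-descent case $\nu < \frac{q}{2(1-q)^2}$). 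Thus $\widetilde{R}^-(q,\nu)$ is by definition $\mathrm{A}(T^-)/\bigl(q(1-q)\bigr)$ or $\mathrm{A}(Q^-)/\bigl(q(1-q)\bigr)$ according to the regime.

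In the first regime I would substitute $\mathrm{A}(T^-)=\dfrac{q^2}{4\nu(1-q)}$ from Lemma \ref{lemmatriangletrapezoid} and divide by $q(1-q)$, which after cancelling one factor of $q$ and collecting the powers of $(1-q)$ yields $\dfrac{q}{4\nu(1-q)^2}$. In the second regime I would substitute $\mathrm{A}(Q^-)=(1-q)\bigl(q-\nu(1-q)^2\bigr)$; here the factor $(1-q)$ cancels against the $(1-q)$ in the denominator, leaving $\dfrac{q-\nu(1-q)^2}{q}$. These are precisely the two nontrivial branches of the stated formula, so the computation in each regime is immediate once the correct area is plugged in.

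The only point requiring a small check is the seam $\nu=\frac{q}{2(1-q)^2}$, where I must confirm both expressions collapse to $\tfrac12$ so that the piecewise ratio is well defined and continuous. Evaluating the triangle branch at this value gives $\dfrac{q}{4\nu(1-q)^2}=\dfrac{q}{4\cdot\frac{q}{2(1-q)^2}\cdot(1-q)^2}=\dfrac{q}{2q}=\tfrac12$, and evaluating the trapezoid branch gives $\dfrac{q-\nu(1-q)^2}{q}=\dfrac{q-\frac{q}{2}}{q}=\tfrac12$, so the two formulas agree at the threshold. Finally, the equality $\widetilde{R}^+(q,\nu)=\widetilde{R}^-(q,\nu)$ follows from the diagonal symmetry of the whole construction: by Lemma \ref{lemma_V_symmetric} the field $\mathbf V^+$ on $\Delta^+$ is the reflection of $\mathbf V^-$ on $\Delta^-$ across $D$, so $\widetilde{\mathcal C}^+$, the point $P^+$, the region $T^+/Q^+$, and the rectangle $[0,q]\times[q,1]$ are the exact mirror images of their $-$ counterparts, and reflection preserves area. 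I expect the main (and only mild) obstacle to be bookkeeping: keeping straight which edge of the rectangle $P^-$ meets in each regime, so that the right area formula is invoked, rather than any analytical difficulty, since the heavy lifting was done in Lemma \ref{lemmatriangletrapezoid}.
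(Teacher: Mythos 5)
Your proposal is correct and follows the same route as the paper's own proof: take the triangle/trapezoid areas from Lemma \ref{lemmatriangletrapezoid} as numerator, divide by the rectangle area $q(1-q)$, and simplify. The extra checks you add (continuity at the seam $\nu=\frac{q}{2(1-q)^2}$ and the symmetry argument for $\widetilde{R}^+$) are consistent with, and slightly more explicit than, what the paper records.
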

\begin{proof}
	The numerator of $\widetilde{R}^-$ is given by the area of the triangle or trapezoid given by the previous Lemma \ref{lemmatriangletrapezoid}, that is, respectively $A(T^-)$ or $A(Q^-)$. The denominator, instead, is simply the area of the rectangle $[q,1] \times [0,q]$ in $\R^2$.
\end{proof}

The behavior of $\widetilde{R}(q,\nu)$, as a function of the parameters $q$ and $\nu$, is described by the following result.

\begin{lemma}[\textbf{Comparative statics on the approximated ratio $\widetilde{R}(q,\nu)$}]\ \\
Consider $\widetilde{R}^-(q,\nu)$ defined above. 
It is bounded in $[0,1]$ and its sections $q \mapsto\widetilde{R}^-(q,\nu)$ are increasing for all $\nu \in (0,1)$, whereas $\nu \mapsto\widetilde{R}^-(q,\nu)$ are decreasing for all $q\in(0,1)$. Furthermore, whenever defined, its derivatives are:
$$
\begin{aligned}
& \frac{\partial}{\partial q}\widetilde{R}^-(q,\nu) = 
\begin{cases}
\dfrac{1+q}{4\nu(1-q)^3} > 0,  & \text{if } \nu > \dfrac{q}{2(1-q)^2}	\vspace{.2cm}	\\
\left(\dfrac{1}{q^2}-1\right) \nu > 0, & \text{if } \nu < \dfrac{q}{2(1-q)^2}	\\ 
\end{cases} \\
& \frac{\partial}{\partial \nu}\widetilde{R}^-(q,\nu) =
\begin{cases}
-\dfrac{q}{4 (1-q)^2 \nu^2} < 0, & \text{if } \nu > \dfrac{q}{2(1-q)^2}  \vspace{.2cm}	\\
\dfrac{-(1 - q)^2}{q} < 0, & \text{if } \nu < \dfrac{q}{2(1-q)^2}.		\\
\end{cases}
\end{aligned}
$$
\end{lemma}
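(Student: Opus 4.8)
The plan is to prove every assertion by working separately on the two open regions $\{\nu > \frac{q}{2(1-q)^2}\}$ and $\{\nu < \frac{q}{2(1-q)^2}\}$, on each of which $\widetilde{R}^-$ is given by an explicit elementary formula, and then to glue the pieces together using the continuity already recorded in the lemma that defines $\widetilde{R}^-$ (both expressions equal $\tfrac12$ on the separating curve $\nu = \frac{q}{2(1-q)^2}$).

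First I would compute the four partial derivatives directly. On the region $\nu > \frac{q}{2(1-q)^2}$, writing $\widetilde{R}^- = \frac{1}{4\nu}\cdot\frac{q}{(1-q)^2}$, differentiation in $q$ uses $\frac{\de}{\de q}\frac{q}{(1-q)^2} = \frac{1+q}{(1-q)^3}$, giving $\frac{\partial}{\partial q}\widetilde{R}^- = \frac{1+q}{4\nu(1-q)^3}$, while differentiation in $\nu$ is immediate and yields $-\frac{q}{4(1-q)^2\nu^2}$. On the region $\nu < \frac{q}{2(1-q)^2}$, writing $\widetilde{R}^- = 1 - \nu\frac{(1-q)^2}{q}$, one finds $\frac{\partial}{\partial \nu}\widetilde{R}^- = -\frac{(1-q)^2}{q}$ and, using $\frac{\de}{\de q}\frac{(1-q)^2}{q} = -\frac{1-q^2}{q^2}$, that $\frac{\partial}{\partial q}\widetilde{R}^- = \nu\bigl(\frac{1}{q^2}-1\bigr)$. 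All four expressions match the claimed ones, and since $q,\nu \in (0,1)$ each is manifestly of the asserted sign: the $q$-derivatives are strictly positive and the $\nu$-derivatives strictly negative.

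Second, for boundedness I would exploit the defining inequality of each region. On $\nu \ge \frac{q}{2(1-q)^2}$ one has $4\nu(1-q)^2 \ge 2q$, hence $0 < \widetilde{R}^- = \frac{q}{4\nu(1-q)^2} \le \tfrac12$; on $\nu \le \frac{q}{2(1-q)^2}$ one has $\nu(1-q)^2 \le \tfrac{q}{2}$, hence $\tfrac12 \le \widetilde{R}^- = 1 - \frac{\nu(1-q)^2}{q} < 1$. Thus $\widetilde{R}^- \in (0,1) \subset [0,1]$ everywhere, attaining the value $\tfrac12$ exactly on the curve $\nu = \frac{q}{2(1-q)^2}$.

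Finally, to upgrade the sign information into genuine monotonicity of the full sections, I would invoke continuity. Holding $\nu$ fixed and letting $q$ range over $(0,1)$, the point $(q,\nu)$ lies in the first region for small $q$ (where $\frac{q}{2(1-q)^2}$ is near $0$) and in the second for $q$ near $1$ (where it blows up), crossing the curve exactly once; since $\widetilde{R}^-$ is continuous there and its $q$-derivative is strictly positive on both open pieces, the section $q \mapsto \widetilde{R}^-(q,\nu)$ is increasing on all of $(0,1)$. The symmetric argument with $q$ fixed and $\nu$ crossing the single threshold $\frac{q}{2(1-q)^2}$, together with the strictly negative $\nu$-derivative on both pieces, shows $\nu \mapsto \widetilde{R}^-(q,\nu)$ is decreasing, and the statement for $\widetilde{R}^+$ follows verbatim by the symmetry with respect to the diagonal. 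The only genuine obstacle is precisely this gluing step: one must verify that the two pieces meet continuously — which the defining lemma guarantees, both equalling $\tfrac12$ — before concluding that piecewise monotonicity with a consistent direction yields monotonicity of the entire section; everything else is a routine differentiation.
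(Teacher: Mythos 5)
Your proposal is correct and follows essentially the same route as the paper's own (very terse) proof: direct differentiation of the two explicit piecewise formulas for $\widetilde{R}^-$ and a check that the numerator never exceeds the denominator. The only addition is your explicit gluing step at the curve $\nu = \frac{q}{2(1-q)^2}$ (both pieces equal $\tfrac12$ there, and the threshold is crossed exactly once along each section), which the paper leaves implicit but which is a worthwhile point to make precise.
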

\begin{proof}
	The computation of the derivatives follows directly from the formulas defining $\widetilde{R}^-$ in the previous Lemma. Moreover, it is straightforward to check that the denominator is always greater than the numerator, thus guaranteeing that $\widetilde{R}^-(q,\nu) \leq 1$.
\end{proof}

\begin{remark}
	Given $q$ and $\nu$, $R^-(q,\nu)$ just represents the part of the basin of attraction of $(0,0)$ within the rectangle $[q,1]\times[0,q]$. So, the exact total area of the basin of attraction of $(0,0)$ is readily computed as: $2 \cdot R^-(q,\nu) + q^2$. Analogously for its approximation obtained with $\widetilde{R}^-$. These areas (and their differences) are shown in Figure \ref{ratiofunctionofqv}.
\end{remark}

\begin{figure}[tbp]
	\centering
	\caption{\textbf{Area of basin of attraction and its approximation}}
	\includegraphics[width=.7\textwidth]{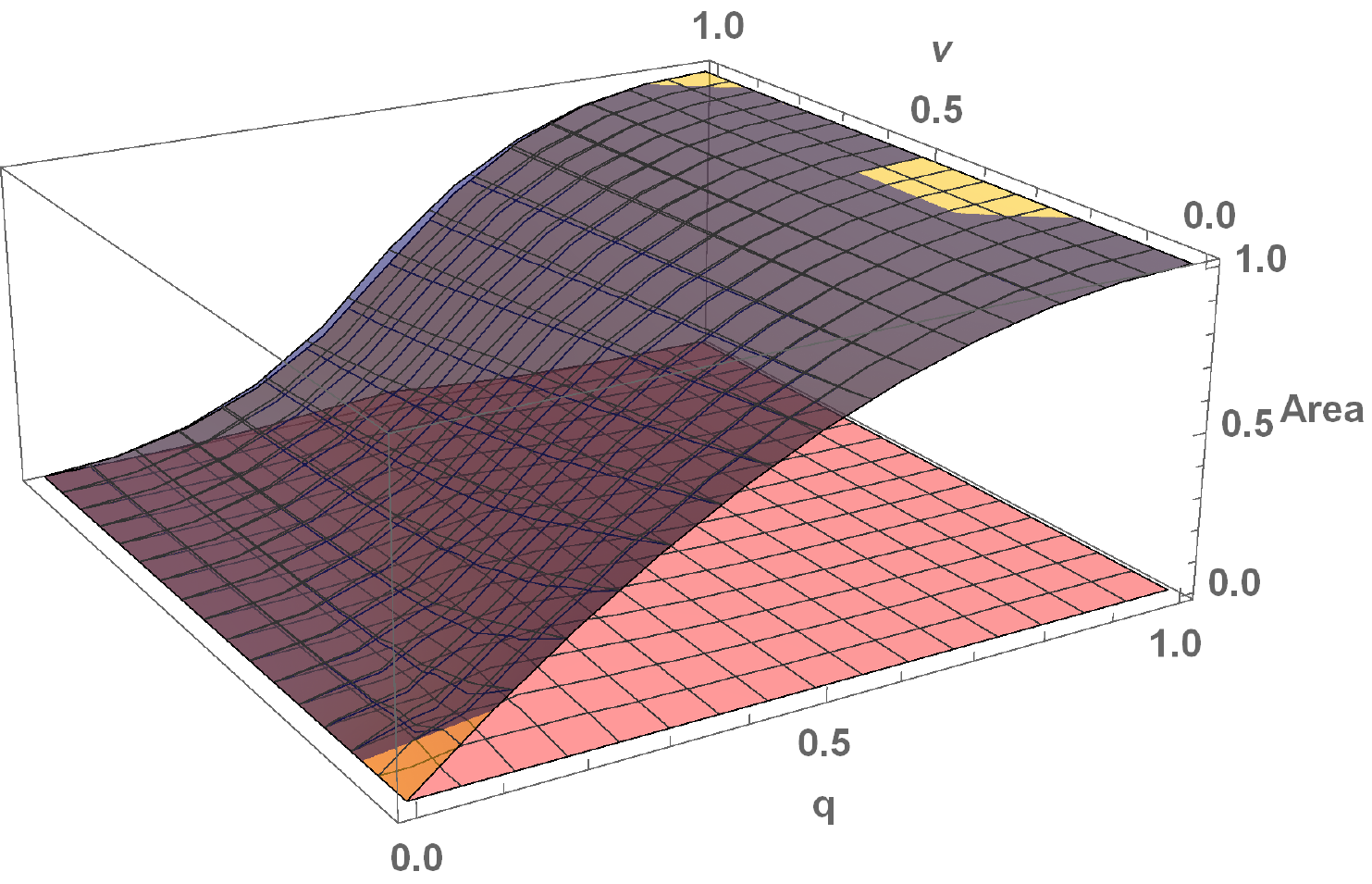}
	\caption*{\small The plot shows: the area of the basin of attraction of $(0,0)$ as a function of $q$ and $\nu$ computed numerically (in blue), the area computed using the approximation $\widetilde{\mathcal C}$ (in yellow, almost indistinguishable from that in blue) and, finally, their differences (the almost-plain surface, in red).}
	\label{ratiofunctionofqv}
\end{figure}

\end{document}